\newtheorem{theorem}{Theorem}
\newtheorem{lemma}{Lemma}
\newtheorem{proposition}{Proposition}
\newtheorem{assumption}{Assumption}
\newtheorem{remark}{Remark}
\newtheorem{example}{Example}
\newtheorem{corollary}{Corollary}
\newtheorem{definition}{Definition}
 \newdefinition{rmk}{Remark}
\def\tsc#1{\csdef{#1}{\textsc{\lowercase{#1}}\xspace}}
\begin{document}

\let\WriteBookmarks\relax
\def\floatpagepagefraction{1}
\def\textpagefraction{.001}
\shorttitle{Resource Availability in the Social Cloud: An Economics Perspective}
\shortauthors{P.~C.~Mane ~et~al.}

\title [mode = title]{Resource Availability in the Social Cloud: An Economics Perspective}

\author[1]{Pramod C. Mane}[]
\ead{pmane.cs@nitrr.ac.in}
\address[1]{Department of Computer Science and Engineering, National Institute of Technology Raipur, Raipur, India.}

\author[2]{Nagarajan Krishnamurthy}[]
\address[2]{Operations Management and Quantitative Techniques, Indian Institute of Management Indore, Indore, India.}
\ead{nagarajan@iimidr.ac.in}

\author[3]{Kapil Ahuja}[]
\address[3]{Department of Computer Science and Engineering, Indian Institute of Technology Indore, Indore, India.}
\ead{kahuja@iiti.ac.in}

\sffamily

\begin{abstract}
This paper focuses on social cloud formation, where agents are involved in a closeness-based conditional resource sharing and build their resource sharing network themselves. The objectives of this paper are: (1) to investigate the impact of agents' decisions of link addition and deletion on their local and global resource availability, (2) to analyze spillover effects in terms of the impact of link addition between a pair of agents on others' utility, (3) to study the role of agents' closeness in determining what type of spillover effects these agents experience in the network, and (4) to model the choices of agents that suggest with whom they want to add links in the social cloud. The findings include the following. Firstly, agents' decision of link addition (deletion) increases (decreases) their local resource availability. However, these observations do not hold in the case of global resource availability. Secondly, in a connected network, agents experience either positive or negative spillover effect and there is no case with no spillover effects. Agents observe no spillover effects if and only if the network is disconnected and consists of more than two components (sub-networks). Furthermore, if there is no change in the closeness of an agent (not involved in link addition) due to a newly added link, then the agent experiences negative spillover effect. Although an increase in the closeness of agents is necessary in order to experience positive spillover effects, the condition is not sufficient. By focusing on parameters such as closeness and shortest distances, we provide conditions under which agents choose to add links so as to maximise their resource availability. 
\end{abstract}


\maketitle

\section{Introduction}

The idea of the Social Cloud \citep{transaction, conference, Chard-Clouds-Retrospective-2015} has received much attention in the last few years. These systems take advantage of social connections to offer a secure and reliable way of resource sharing between agents. Researchers believe that exploiting social connections (in the form of social networks) can aid dealing with various issues like resource sharing policies and mechanisms \citep{socialcloud-as-communitycloud, joballocationscoialcloud}, trust \citep{trustfoundations, distributed}, and incentivising resource sharing \citep{incentivesocial, cooperativesocialcloud}. In this context, social connections are either exogenous or endogenous. Exogenous social connections are those social connections which are extracted from an online social network (for example, Facebook) in the form of a social graph. Whereas endogenous social connections are those social connections which are constructed in the context of social cloud application (for example, BuddyBackup\footnote{http://www.buddybackup.com/(Visited on 01 Jan 2021).}), where agents are decision makers who build their resource sharing connections.

Recent research trends in social cloud have led to two directions. One trend focuses on the role of exogenous social connections in defining quality of services (for example, data availability, reliability), and trust in social cloud. For example, \cite{Zuo-2016-P2P} show that a small set of friends play a crucial role in deterring the quality of services and workload balance. Another research trend \citep{Pramod-ANOR, Pramod-Games, social-cloud-gamenets, Pramod-AEL, Stefan2011} focuses on endogenous social connections in terms of resource sharing network formation, the stability and efficiency analysis of these networks, and analysis of externalities in these resource sharing networks. For example, \cite{Pramod-ANOR} study social storage cloud formation in a strategic setting, where self-interested agents build a storage resource sharing network for maximizing their respective utilities. They show that for the given degree-based utility of agents in this social cloud setting, agents always form the $\eta$-regular network, where each agent has $\eta$ neighbors.

However, the present literature on social cloud has left behind several aspects, for example, 1) the impact of link addition and deletion on resource availability of those who are involved in the link addition and deletion, 2) the impact of link addition between a pair of agents on other agents' utilities, 3) choice modelling that captures with whom self-interested agents add new social connections and delete existing social connections. This paper aims to fill these gaps. 

Following are the objectives of this study. The first objective is to study the impact of agents' decision of link addition and deletion on their local as well as global resource availability. We show that for the utility model proposed in \citep{social-cloud-gamenets}, the local resource availability of a pair of agents increases by adding a link and decreases by deleting the link between them. However, in the case of global resource availability, the same is not true. The second objective is to analyze agents' local resource availability in the context of their local connections. We find that, in the case of link addition, agents never observe an increase in their local resource availability from their neighbors. The opposite is true in the case of link deletion. The third objective is to analyze the impact of link addition between a pair of agents on the utility of others. This aspect can be outlined in terms of the spillover effect. \cite{Pramod-Games} provide a necessary and sufficient condition under which an agent experiences positive or negative spillover effect. In \cite{Pramod-AEL}, the authors follow an empirical approach to study the role of network structure and size in determining spillover. In this paper, we throw attention on the role of agents' closeness in determining the kind of spillover effects they experience due to a newly added link in the network. We show that an increase in closeness is necessary, but not sufficient, for an agent to experience positive spillover effect. We show that in the two diameter network, agents always experience negative spillover effect. The fourth objective is to understand the preferences of agents in link addition.  That is, with whom agents prefer to add links in a network. For this, we provide a set of conditions by taking the distances and closeness of agents into consideration.

\section{The Social Cloud Model}\label{sec:model}
For the sake of completeness, we first describe the social cloud model presented in \cite{social-cloud-gamenets}. A social cloud can be seen as a socially-aware resource sharing network $\mathfrak{g}=\{\mathbf{A}, \mathbf{L}\}$ that consists of a non-empty set $\mathbf{A}$ of $n$ agents and a set $\mathbf{L}$ of $\ell$ undirected links connecting these agents. One can view the set $\mathbf{L}$ as a platform that facilitates agents to share their computing resources, such as disk space and computing power, with others, and search for resources shared by others. An undirected link $\langle ij\rangle \in \mathbf{L}$ represents a direct communication channel between agents $i$ and $j$ in $\mathfrak{g}$. In other words, agents $i$ and $j$ are neighbours in $\mathfrak{g}$. The set $\eta_{i}(\mathfrak{g})$ represents the number of neighbors of agent $i$ in $\mathfrak{g}$.

A path in $\mathfrak{g}$ connecting agents $i_1$ and $i_n$ is a sequence of distinct agents $(i_1, i_2, . . ., i_n)$ such that $\langle i_1 i_2\rangle, \langle i_2 i_3\rangle, \ldots, \langle i_{n-1} i_n\rangle$ $\in \mathbf{L}$. The length of a path is the number of links that the path contains. A shortest path between agent $i$ and agent $j$ is a path between $i$ and $j$ that has the least length, among all paths between $i$ and $j$. The \textit{distance} between $i$ and $j$, $d_{ij}(\mathfrak{g})$, is the length of the shortest path between them. We say, $i$ and $j$ are $d_{ij}(\mathfrak{g})$ hops away from each other. The diameter, $\mathcal{D}_{\mathfrak{g}}$, of $\mathfrak{g}$ is the maximum distance between any pair of agents. The radius of $\mathfrak{g}$ is the minimum distance between any pair of agents. A network $\mathfrak{g}$ is connected if there exists at least one path between any pair of agents; otherwise it is disconnected. A disconnected network $\mathfrak{g}$ is a collection of two or moredisjoint components (sub-networks) $\mathfrak{g}(\mathfrak{c}_1), \mathfrak{g}(\mathfrak{c}_2) \cdots \mathfrak{g}(\mathfrak{c}_z)$ such that $\mathfrak{c}_1 \cup \mathfrak{c}_2 \cup \cdots \cup \mathfrak{c}_z=\mathbf{A}$, and $\mathfrak{c}_x \cap \mathfrak{c}_y= \emptyset$ for all $x, y \in \{1, 2, \cdots, z\}, x\not=y$, such that any pair of agents $i$ and $j$ are connected if and only if they are elements of the same set $\mathfrak{c}_x$.  

A network $\mathfrak{g}$ evolves when agents perform two operations, namely, (1) link addition, where agents $i, j$ in $\mathfrak{g}$, $\langle ij\rangle \not \in \mathfrak{g}$, mutually add the link $\langle ij\rangle$, resulting in the network $\mathfrak{g}+\langle ij\rangle$, and 
(2) link deletion, where agents $k, l$ in $\mathfrak{g}$, $\langle kl\rangle \in \mathfrak{g}$, unilaterally or mutually 
delete $\langle kl\rangle$ to give $\mathfrak{g}-\langle kl\rangle$. 

Henceforth, we refer to a socially-aware resource sharing network as a resource sharing network, and whenever we refer 
to $\mathfrak{g}$, we mean a resource sharing network.

\subsection{Assumptions}
In $\mathfrak{g}$, agents who have underutilised resources share the same with other agents who want to use the resources to accomplish their computational tasks. For example, an agent may need to backup its data and may use the storage space shared by another agent in $\mathfrak{g}$. Now, we state the basic assumptions on which the social cloud model stands.
\begin{assumption} 
Agents in $\mathfrak{g}$ share the same kind of resource. We denote the resource by $\mathfrak{r}$.
\end{assumption}

\begin{assumption} 
In the prevailing $\mathfrak{g}$, an agent has underutilised resource $\mathfrak{r}$ with probability $p$ and needs to perform a computational task using the resource with probability $q$.
\end{assumption}

\begin{assumption} 
An agent plays either the role of a resource provider or that of a resource consumer, with probabilities $p(1-q)$ and $q(1-p)$, respectively.
\end{assumption}
\begin{assumption}
Each agent has global information, that is, each agent is aware of the network structure $\mathfrak{g}$ and the prevailing resource sharing situation in $\mathfrak{g}$.
\end{assumption}

\subsection{Closeness-Based Resource Sharing}
In $\mathfrak{g}$, agents perform closeness based resource sharing \citep{transaction}. For example, agents could limit resource sharing with those agents who are close to them. The notion of how close an agent is to all other agents can be captured by the harmonic centrality measure, discussed in \citep{Boldi-axioms-centrality, opsahl, marchiori2000harmony}, defined as follows:
\begin{equation}\label{eq:harmonic-closeness}
\Phi_{i}(\mathfrak{g})=\sum\limits_{j \in \mathfrak{g} \setminus \{i\}} \frac{1}{d_{ij}(\mathfrak{g})}
\end{equation}
$\Phi_{i}(\mathfrak{g})$ is called the closeness of $i$ in $\mathfrak{g}$. Harmonic centrality handles $\infty$ smoothly, and hence, this centrality measure deals with disconnected networks too. 

In $\mathfrak{g}$, an agent $j \in \mathfrak{g}$ (who acts as a resource provider) computes a probability distribution on all agents for the purpose of allocating the resource to agent $i\in \mathfrak{g}$ (who acts as a resource consumer), as given below:
\begin{equation}\label{eq:agent-probability} 
\alpha_{ij}(\mathfrak{g})=p(1-q) \frac{\frac{1}{d_{ij}(\mathfrak{g})}}{\sum\limits_{j \in \mathfrak{g}\setminus \{i\}} \frac{1}{d_{ij}(\mathfrak{g})}}=\frac{p(1-q)}{{d_{ij}(\mathfrak{g})\Phi_{i}(\mathfrak{g})}}.
\end{equation}
In other words, $\alpha_{ij}(\mathfrak{g})$ is the probability that agent $i$ will obtain the resource from agent $j$ in $\mathfrak{g}$.
\begin{remark}
If $d_{ij}(\mathfrak{g})=\infty$ then $\alpha_{ij}(\mathfrak{g})=0$ ($= \alpha_{ji}(\mathfrak{g})$). As agents $i$ and $j$ are disconnected in $\mathfrak{g}$ their chances of obtaining the resource from each other is nil.
\end{remark}
The probability that agent $i$ obtains the resource from at least one agent in $\mathfrak{g}$ is as follows:
\begin{equation}\label{eq:gra} 
\gamma_{i}(\mathfrak{g})=1-\prod \limits_{j\in \mathfrak{g}\setminus\{i\}}(1-\alpha_{ij}(\mathfrak{g})).
\end{equation}

\begin{definition}
We call $\alpha_{ij}(\mathfrak{g})$ the {\it local resource availability} of $i$ from $j$ in $\mathfrak{g}$, and $\gamma_{i}(\mathfrak{g})$, the {\it global resource availability} of $i$ in $\mathfrak{g}$.
\end{definition}

An agent's chance of obtaining the resource from another agent is determined by, first, the distance between the agent (who wants the resource) and the other agent (who may provide the resource), and second, the other agent's closeness. Hence, it is important to get to know how a newly added link affects the distance between pairs of agents and their closeness.

\begin{remark}\label{rmk:conditions-remark}
Suppose $k, l$ are distinct agents in $\mathfrak{g}$ such that $\langle kl \rangle \notin \mathfrak{g}$. 
Then, 

either $d_{ij}(\mathfrak{g})=d_{ij}(\mathfrak{g}+\langle  kl\rangle)$, for all $i \in \mathfrak{g} \setminus \{k, l\}$, $j \in \mathfrak{g}$,  

or $d_{ij}(\mathfrak{g})=d_{ij}(\mathfrak{g}+\langle  kl\rangle)$ for some $i \in \mathfrak{g}\setminus\{k, l\}$, $j \in \mathfrak{g}$, 

\quad and $d_{ij}(\mathfrak{g})>d_{ij}(\mathfrak{g}+\langle  kl\rangle)$ for others. \\
Similarly,

either $\Phi_i(\mathfrak{g})=\Phi_i(\mathfrak{g}+\langle  kl\rangle)$, for all $i \in \mathfrak{g}\setminus\{k, l\}$, 

or 
$\Phi_i(\mathfrak{g})=\Phi_i(\mathfrak{g}+\langle  kl\rangle)$, for some $i \in \mathfrak{g}\setminus\{k, l\}$ 

\quad and $\Phi_i(\mathfrak{g})<\Phi_i(\mathfrak{g}+\langle  kl\rangle)$, for others. 
\end{remark}

\begin{lemma}\label{lemma:g=g+kl}
Suppose $i, j, k, l$ are distinct agents in $\mathfrak{g}$ such that $\langle ij \rangle \notin \mathfrak{g}$ and $\langle kl \rangle \notin \mathfrak{g}$. If $d_{ij}(\mathfrak{g})>d_{ij}(\mathfrak{g}+\langle  kl\rangle)$ then $\Phi_i(\mathfrak{g})<\Phi_i(\mathfrak{g}+\langle  kl\rangle)$ and $\Phi_j(\mathfrak{g})<\Phi_j(\mathfrak{g}+\langle  kl\rangle)$.
\end{lemma}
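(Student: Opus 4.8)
The plan is to reduce the statement to the elementary monotonicity of shortest-path distances under edge addition and then read off the strict gain in closeness term-by-term. First I would record the principle underlying the distance dichotomy of Remark~\ref{rmk:conditions-remark}: for \emph{any} two agents $a, b \in \mathfrak{g}$ we have $d_{ab}(\mathfrak{g}) \ge d_{ab}(\mathfrak{g}+\langle kl\rangle)$, since every path present in $\mathfrak{g}$ still exists in $\mathfrak{g}+\langle kl\rangle$, while the new link can only create additional, possibly shorter, paths. Applying the order-reversing map $x \mapsto 1/x$ on $(0,\infty]$, with the convention $1/\infty = 0$ that the harmonic centrality of \eqref{eq:harmonic-closeness} already adopts, this yields the pointwise inequality $\frac{1}{d_{ab}(\mathfrak{g})} \le \frac{1}{d_{ab}(\mathfrak{g}+\langle kl\rangle)}$ for every pair $a, b$.

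Next I would expand $\Phi_i(\mathfrak{g}) = \sum_{m \in \mathfrak{g}\setminus\{i\}} \frac{1}{d_{im}(\mathfrak{g})}$ and compare it summand-by-summand with $\Phi_i(\mathfrak{g}+\langle kl\rangle)$. By the previous step every summand weakly increases, so it suffices to exhibit a single summand that strictly increases. The distinguished candidate is the term indexed by $m = j$: the hypothesis $d_{ij}(\mathfrak{g}) > d_{ij}(\mathfrak{g}+\langle kl\rangle)$ forces $\frac{1}{d_{ij}(\mathfrak{g})} < \frac{1}{d_{ij}(\mathfrak{g}+\langle kl\rangle)}$, whether $d_{ij}(\mathfrak{g})$ was finite (strict decrease of a positive quantity) or infinite (a jump from $0$ to a strictly positive value). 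Adding this strict term to the weakly increasing remainder gives $\Phi_i(\mathfrak{g}) < \Phi_i(\mathfrak{g}+\langle kl\rangle)$.

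Finally, since distances are symmetric, $d_{ji}(\mathfrak{g}) = d_{ij}(\mathfrak{g}) > d_{ij}(\mathfrak{g}+\langle kl\rangle) = d_{ji}(\mathfrak{g}+\langle kl\rangle)$, so the identical argument applied to $\Phi_j(\mathfrak{g}) = \sum_{m \in \mathfrak{g}\setminus\{j\}} \frac{1}{d_{jm}(\mathfrak{g})}$, with the strictly increasing summand now indexed by $m = i$, yields $\Phi_j(\mathfrak{g}) < \Phi_j(\mathfrak{g}+\langle kl\rangle)$. I do not anticipate a genuine obstacle; the only point demanding care is the bookkeeping around the $\infty$ case, namely ensuring that a previously disconnected pair $i, j$ contributes a true strict increase rather than a vacuous $0 = 0$. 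This is exactly why the strict hypothesis on $d_{ij}$ is invoked only for the single distinguished summand, while all remaining summands require nothing beyond the weak monotonicity established in the first step.
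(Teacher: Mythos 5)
Your proof is correct and follows essentially the same route as the paper's: both decompose the closeness $\Phi_i$ into its summands, use the fact that adding $\langle kl\rangle$ cannot increase any distance (so every summand weakly increases), and observe that the hypothesis forces the summand $\frac{1}{d_{ij}}$ to increase strictly, with the symmetric argument giving the claim for $\Phi_j$. The only difference is one of care rather than of method: you treat the $d_{ij}(\mathfrak{g})=\infty$ case explicitly, whereas the paper expresses the strict gain via the quantitative bound $\frac{1}{d_{ij}(\mathfrak{g})(d_{ij}(\mathfrak{g})-1)}$, which tacitly presumes $d_{ij}(\mathfrak{g})$ finite.
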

\begin{proof}

$d_{ij}(\mathfrak{g}+\langle  kl\rangle) \leq d_{ij}(\mathfrak{g})-1$. On adding $\langle  kl\rangle$, even if the distances of all other agents remain unchanged, the closeness of both $i$ and $j$ will increase by at least $\frac{1}{d_{ij}(\mathfrak{g})-1} - \frac{1}{d_{ij}(\mathfrak{g})}$, or $\frac{1}{d_{ij}(\mathfrak{g})(d_{ij}(\mathfrak{g})-1)}$.
\end{proof}

Due to Remark \ref{rmk:conditions-remark} and Lemma \ref{lemma:g=g+kl}, we study an agent's probability of obtaining the resource in $\mathfrak{g}$ by taking the following cases into the consideration. 
\begin{enumerate}
\item $d_{ij}(\mathfrak{g})=d_{ij}(\mathfrak{g}+\langle  kl\rangle)$ and $\Phi_i(\mathfrak{g})=\Phi_i(\mathfrak{g}+\langle  kl\rangle)$.
\item $d_{ij}(\mathfrak{g})=d_{ij}(\mathfrak{g}+\langle  kl\rangle)$ and $\Phi_i(\mathfrak{g})<\Phi_i(\mathfrak{g}+\langle  kl\rangle)$.
\item $d_{ij}(\mathfrak{g})>d_{ij}(\mathfrak{g}+\langle  kl\rangle)$ and $\Phi_i(\mathfrak{g})<\Phi_i(\mathfrak{g}+\langle  kl\rangle)$.
\end{enumerate}
\subsection{Utility Structure}

The utility of agent $i$ in $\mathfrak{g}$ is given by a function $u_i : \mathcal{G} \rightarrow \mathbb{R}^{+}$, where $\mathcal{G}$ is the set of all possible networks on $n$ agents. $u: \mathcal{G} \rightarrow \mathbb{R}^n$ gives the the vector (profile) of utility functions $u = (u_1, . . . ,u_n)$. In other words, each possible resource sharing network structure ($\mathfrak{g}\subseteq \mathcal{G}$) generates a payoff for each agent.

In $\mathfrak{g}$, an agent $i$ gains benefit $\theta_{i}$ and $\xi_{i}$ by accomplishing a computational task and by providing their resource to others, respectively. An agent $i\in \mathfrak{g}$ gains $\xi_i$ with probability $p(1-q)$. An agent $i$'s expected benefit $\theta_{i}$ depends on whether the agent has its own resource or depends on the resource availability in $\mathfrak{g}$. Note that, with probability $pq$, the agent is self-reliant and does not depend on the other agents in $\mathfrak{g}$. However, with probability $q(1-p)$, the agent wants the resource but does not have it and, hence, seeks the resource from other agents in $\mathfrak{g}$. 

An agent seeks resources in $\mathfrak{g}$ by maintaining direct links. Each agent $i$ pays cost $\varsigma_i$ for each of their direct links in $\mathfrak{g}$. Thus, $i$ incurs a total cost of $\eta_{i}(\mathfrak{g})$ $\times$ $\varsigma_i$ in $\mathfrak{g}$. The cost $\varsigma_i$ can be interpreted as the effort or time that agent $i$ spends to maintain an active connection (or link). We consider that agents $i$ and $j$ share the link addition cost equally, that is, $\varsigma=\frac{\varsigma_i+\varsigma_j}{2}$.

Then, for a given resource sharing network $\mathfrak{g}$, the expected payoff of agent $i$ is 
\begin{equation}\label{eq:main-utility}
u_{i}(\mathfrak{g})=p(1-q)\xi_i+q[p+(1-p)\gamma_{i}(\mathfrak{g})]\theta_i-\varsigma \eta_{i}(\mathfrak{g}).
\end{equation}

\section{Resource Availability}
\subsection{Local Resource Availability}
In this section, we discuss our results on the local resource availability of agent $i$ from another agent $j$ in the network $\mathfrak{g}$, that is, 
the probability that agent $i$ will obtain the resource from agent $j$ in $\mathfrak{g}$.

\subsubsection{Link Alteration and Local Resource Availability}\label{subsubsec:link-alteration-local-resource-availability}
In this section, we discuss the effect of link addition and deletion on the local resource availability. 
\begin{lemma}\label{prelemma:link-addition-local-resource-availability}
Suppose $\mathfrak{g}_1$ and $\mathfrak{g}_2$ are resource sharing networks, and suppose $i, j \in \mathfrak{g}_1 \cap \mathfrak{g}_2$. Then $\alpha_{ij}(\mathfrak{g}_1)>\alpha_{ij}(\mathfrak{g}_2)$ if and only if $d_{ij}(\mathfrak{g}_2)\sum\limits_{k\in\mathfrak{g}_2\setminus \{i, j\}} \frac{1}{d_{jk}(\mathfrak{g}_2)}>d_{ij}(\mathfrak{g}_1)\sum\limits_{k\in\mathfrak{g}_1\setminus \{i, j\}} \frac{1}{d_{jk}(\mathfrak{g}_1)}$. 
\end{lemma}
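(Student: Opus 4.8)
The plan is to turn the biconditional into a comparison of the two denominators of $\alpha_{ij}$ across the networks, and then to read off the stated inequality by peeling the direct $i$--$j$ term out of the closeness factor. The algebra is elementary; the only content is making sure each manipulation is a genuine equivalence, so that the ``if and only if'' survives.

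First I would substitute the closed form of $\alpha_{ij}$ from \eqref{eq:agent-probability}, writing $\alpha_{ij}(\mathfrak{g})$ as $p(1-q)$ divided by the product of $d_{ij}(\mathfrak{g})$ and the closeness $\Phi_j(\mathfrak{g})=\sum_{k\in\mathfrak{g}\setminus\{j\}}\frac{1}{d_{jk}(\mathfrak{g})}$ of the resource provider $j$. Because $p(1-q)>0$ is a common positive factor and both products $d_{ij}(\mathfrak{g}_1)\,\Phi_j(\mathfrak{g}_1)$ and $d_{ij}(\mathfrak{g}_2)\,\Phi_j(\mathfrak{g}_2)$ are strictly positive (the relevant distances being finite), the inequality $\alpha_{ij}(\mathfrak{g}_1)>\alpha_{ij}(\mathfrak{g}_2)$ is equivalent, after cancelling $p(1-q)$ and taking reciprocals, to $d_{ij}(\mathfrak{g}_2)\,\Phi_j(\mathfrak{g}_2)>d_{ij}(\mathfrak{g}_1)\,\Phi_j(\mathfrak{g}_1)$. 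Each of these steps is reversible, which is exactly what furnishes the biconditional.

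The key step is then a one-line identity. Splitting the $k=i$ term out of the provider's closeness, $\Phi_j(\mathfrak{g})=\frac{1}{d_{ij}(\mathfrak{g})}+\sum_{k\in\mathfrak{g}\setminus\{i,j\}}\frac{1}{d_{jk}(\mathfrak{g})}$, and multiplying through by $d_{ij}(\mathfrak{g})$ gives $d_{ij}(\mathfrak{g})\,\Phi_j(\mathfrak{g})=1+d_{ij}(\mathfrak{g})\sum_{k\in\mathfrak{g}\setminus\{i,j\}}\frac{1}{d_{jk}(\mathfrak{g})}$. Inserting this for $\mathfrak{g}_1$ and $\mathfrak{g}_2$ into the inequality from the previous paragraph, the additive constant $1$ cancels from both sides, and what remains is precisely the scaled residual-closeness inequality asserted in the lemma.

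I do not expect a genuine obstacle, only bookkeeping that must be done carefully: confirming $p(1-q)>0$ so that cancelling it and inverting does not flip the inequality; verifying positivity of the two products before taking reciprocals; and disposing of the degenerate case in which $i$ and $j$ lie in different components of one of the networks. In that case $d_{ij}=\infty$ and $\alpha_{ij}=0$ by the Remark following \eqref{eq:agent-probability}, so the strict inequality and its right-hand counterpart fail simultaneously and the equivalence is preserved; once finite distances are assumed, the whole argument collapses to the single cancellation above.
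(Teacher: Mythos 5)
Your proof is correct and takes essentially the same route as the paper's: substitute Eq.~(\ref{eq:agent-probability}), cancel the positive factor $p(1-q)$, take reciprocals, and split the $\frac{1}{d_{ij}}$ term out of the provider's closeness so that the resulting $1$'s cancel on both sides, each step being a reversible equivalence. The only difference is that you also flag positivity and the disconnected ($d_{ij}=\infty$) corner case, which the paper's proof passes over silently.
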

\begin{proof}
$\alpha_{ij}(\mathfrak{g}_1)>\alpha_{ij}(\mathfrak{g}_2)$, if and only if\\

$\frac{p(1-q)}{d_{ij}(\mathfrak{g}_1)(\frac{1}{d_{ij}(\mathfrak{g}_1)}+\sum\limits_{k\in\mathfrak{g}_1\setminus \{i, j\}} \frac{1}{d_{jk}(\mathfrak{g}_1)})}> \frac{p(1-q)}{d_{ij}(\mathfrak{g}_2)(\frac{1}{d_{ij}(\mathfrak{g}_2)}+\sum\limits_{k\in\mathfrak{g}_2\setminus \{i, j\}} \frac{1}{d_{jk}(\mathfrak{g}_2)})}$, \\ 

if and only if\\

$d_{ij}(\mathfrak{g}_2)(\frac{1}{d_{ij}(\mathfrak{g}_2)}+\sum\limits_{k\in\mathfrak{g}_2\setminus \{i, j\}} \frac{1}{d_{jk}(\mathfrak{g}_2)})$\\

$\quad\quad\quad\quad\quad\quad>d_{ij}(\mathfrak{g}_1)(\frac{1}{d_{ij}(\mathfrak{g}_1)}+\sum\limits_{k\in\mathfrak{g}_1\setminus \{i, j\}} \frac{1}{d_{jk}(\mathfrak{g}_1)})$, \\

if and only if\\

$d_{ij}(\mathfrak{g}_2)\sum\limits_{k\in\mathfrak{g}_2\setminus \{i, j\}} \frac{1}{d_{jk}(\mathfrak{g}_2)}>d_{ij}(\mathfrak{g}_1)\sum\limits_{k\in\mathfrak{g}_1\setminus \{i, j\}} \frac{1}{d_{jk}(\mathfrak{g}_1)}.$
\end{proof}

\begin{proposition}\label{lemma:link-addition-local-resource-availability}
Suppose $i$ and $j$ are distinct agents in $\mathfrak{g}$ such that $\langle ij\rangle \not\in \mathfrak{g}$. Then, $\alpha_{ij}(\mathfrak{g}+\langle ij \rangle)>\alpha_{ij}(\mathfrak{g})$.
\end{proposition}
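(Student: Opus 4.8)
The plan is to reduce the claim to a single scalar inequality via Lemma~\ref{prelemma:link-addition-local-resource-availability} and then verify that inequality term by term. Setting $\mathfrak{g}_1=\mathfrak{g}+\langle ij\rangle$ and $\mathfrak{g}_2=\mathfrak{g}$ in that lemma, the desired strict inequality $\alpha_{ij}(\mathfrak{g}+\langle ij\rangle)>\alpha_{ij}(\mathfrak{g})$ is equivalent to
\[
d_{ij}(\mathfrak{g})\sum_{k\in\mathfrak{g}\setminus\{i,j\}}\frac{1}{d_{jk}(\mathfrak{g})} \;>\; d_{ij}(\mathfrak{g}+\langle ij\rangle)\sum_{k\in\mathfrak{g}\setminus\{i,j\}}\frac{1}{d_{jk}(\mathfrak{g}+\langle ij\rangle)}.
\]
Since the new link is a direct channel, $d_{ij}(\mathfrak{g}+\langle ij\rangle)=1$, so the right-hand side collapses to $\sum_k 1/d_{jk}(\mathfrak{g}+\langle ij\rangle)$, while (because $\langle ij\rangle\notin\mathfrak{g}$) we have $d_{ij}(\mathfrak{g})\ge 2$ on the left. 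This is the inequality I would establish.

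Before that I would dispose of the degenerate case where $i$ and $j$ lie in different components of $\mathfrak{g}$: there $d_{ij}(\mathfrak{g})=\infty$, so the Remark following \eqref{eq:agent-probability} gives $\alpha_{ij}(\mathfrak{g})=0$, whereas $d_{ij}(\mathfrak{g}+\langle ij\rangle)=1$ makes $\alpha_{ij}(\mathfrak{g}+\langle ij\rangle)$ strictly positive (as $p(1-q)>0$); the claim is then immediate. This case also side-steps the fact that Lemma~\ref{prelemma:link-addition-local-resource-availability} is phrased for finite distances. Henceforth I assume $i$ and $j$ share a component, so $d_{ij}(\mathfrak{g})$ is finite and at least $2$.

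The core is a term-by-term comparison: for each $k\in\mathfrak{g}\setminus\{i,j\}$ I would show
\[
d_{ij}(\mathfrak{g})\cdot\frac{1}{d_{jk}(\mathfrak{g})}\;\ge\;\frac{1}{d_{jk}(\mathfrak{g}+\langle ij\rangle)},
\qquad\text{i.e.}\qquad d_{jk}(\mathfrak{g}+\langle ij\rangle)\ge \frac{d_{jk}(\mathfrak{g})}{d_{ij}(\mathfrak{g})}.
\]
A shortest $j$--$k$ path in $\mathfrak{g}+\langle ij\rangle$ either avoids the new edge, giving $d_{jk}(\mathfrak{g}+\langle ij\rangle)=d_{jk}(\mathfrak{g})$, or uses it once at the start, giving $d_{jk}(\mathfrak{g}+\langle ij\rangle)=1+d_{ik}(\mathfrak{g})$; in both cases the triangle inequality $d_{jk}(\mathfrak{g})\le d_{ij}(\mathfrak{g})+d_{ik}(\mathfrak{g})$ together with $d_{ij}(\mathfrak{g})\ge 1$ yields the bound. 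Summing over $k$ gives ``$\ge$''. For strictness I would exhibit one strict term, namely the neighbour $k_0\neq i$ of $j$ on a shortest $i$--$j$ path (which exists since $d_{ij}(\mathfrak{g})\ge 2$): there $d_{jk_0}(\mathfrak{g})=d_{jk_0}(\mathfrak{g}+\langle ij\rangle)=1$, so the left term beats the right by $d_{ij}(\mathfrak{g})-1\ge 1$.

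The step I expect to be the main obstacle is the per-term bound, precisely because adding $\langle ij\rangle$ pulls two quantities in opposite directions: it collapses $d_{ij}$ to $1$ (helping the inequality) but can also sharply shorten other distances $d_{jk}$ (hurting it). A crude estimate such as replacing $d_{ij}(\mathfrak{g})$ by $2$ is too lossy, since a single new link can more than halve some $d_{jk}$; the resolution is to retain the full factor $d_{ij}(\mathfrak{g})$ and lean on the triangle inequality, which exactly absorbs the worst case in which $k$ is close to $i$ yet far from $j$. I would also note that agents $k$ with $d_{jk}(\mathfrak{g})=\infty$ contribute $0$ to both sides and are harmless.
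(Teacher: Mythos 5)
Your proof is correct, and although it opens exactly as the paper does --- invoking Lemma~\ref{prelemma:link-addition-local-resource-availability} with $\mathfrak{g}_1=\mathfrak{g}+\langle ij\rangle$, $\mathfrak{g}_2=\mathfrak{g}$ to reduce the claim to Inequality~(\ref{eq:addition-inequality}), and disposing of the disconnected case separately --- it diverges precisely where the real work lies. The paper splits on whether $\sum_{k\neq i,j}1/d_{jk}$ changes, and in the nontrivial case (some distances from $j$ shrink) it verifies the inequality only on a single configuration (Figure~\ref{fig:local-resource-availability-g-network}: $d_{ij}(\mathfrak{g})=2$, $j$ pendant on an intermediary, $i$ adjacent to everyone else), asserting without proof that this is the ``worst possible case''; the computation $2n/3>(n-1)/2$ then closes that case only for that exemplar. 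Your argument replaces this unproven extremality claim with a per-term bound $d_{jk}(\mathfrak{g}+\langle ij\rangle)\ge d_{jk}(\mathfrak{g})/d_{ij}(\mathfrak{g})$, justified by noting that a shortest $j$--$k$ path in $\mathfrak{g}+\langle ij\rangle$ either avoids the new edge or has length $1+d_{ik}(\mathfrak{g})$, and then applying the triangle inequality $d_{jk}(\mathfrak{g})\le d_{ij}(\mathfrak{g})+d_{ik}(\mathfrak{g})$; strictness comes from the explicit term at the neighbour $k_0\neq i$ of $j$ on a shortest $i$--$j$ path, where the left term is $d_{ij}(\mathfrak{g})\ge 2$ against $1$ on the right. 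This treats the paper's cases 2 and 3 uniformly and yields a complete, rigorous proof for arbitrary networks, whereas the paper's exemplar argument buys brevity and some intuition about where the inequality is tightest, at the cost of a genuine logical gap. One small point you should make explicit: in the connected case, any $k$ with $d_{jk}(\mathfrak{g})=\infty$ also satisfies $d_{jk}(\mathfrak{g}+\langle ij\rangle)=\infty$, because the new link is added inside the component containing both $i$ and $j$, which $k$ cannot reach; that is why such terms vanish on both sides of your summed inequality.
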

\begin{proof}
Owing to Lemma \ref{prelemma:link-addition-local-resource-availability}, it suffices to show that 
\begin{equation}\label{eq:addition-inequality}
d_{ij}(\mathfrak{g})\sum\limits_{k\in\mathfrak{g}\setminus \{i, j\}} \frac{1}{d_{jk}(\mathfrak{g})}>d_{ij}(\mathfrak{g}+\langle ij \rangle)\sum\limits_{k\in\mathfrak{g}+\langle ij \rangle \setminus \{i, j\}} \frac{1}{d_{jk}(\mathfrak{g}+\langle ij \rangle)}
\end{equation}
Note that $d_{ij}(\mathfrak{g}+\langle ij \rangle) = 1$ and 
$d_{ij}(\mathfrak{g})\in \{2, 3, \ldots, \}$.\\

It suffices to check that Inequality (\ref{eq:addition-inequality}) holds in the following three cases.
\begin{enumerate}
\item $d_{ij}(\mathfrak{g})=\infty$. That is, $i$ and $j$ are not connected in $\mathfrak{g}$. Inequality (\ref{eq:addition-inequality}) clearly holds.
\item $\sum\limits_{k\in\mathfrak{g}\setminus \{i, j\}} \frac{1}{d_{jk}(\mathfrak{g})}=\sum\limits_{k\in\mathfrak{g}+\langle ij \rangle\setminus \{i, j\}} \frac{1}{d_{jk}(\mathfrak{g}+\langle ij \rangle)}$. That is, addition of link $\langle ij \rangle$ does not change the distance between $j$ and any other agent $k$, except $i$. It is easy to see that Inequality (\ref{eq:addition-inequality}) holds in this case too.

\item $\sum\limits_{k\in\mathfrak{g}\setminus \{i, j\}} \frac{1}{d_{jk}(\mathfrak{g})}<\sum\limits_{k\in\mathfrak{g}+\langle ij \rangle\setminus \{i, j\}} \frac{1}{d_{jk}(\mathfrak{g}+\langle ij \rangle)}$. 
This happens when the addition of link $\langle ij \rangle$ changes the distance between $j$ and at least one $k$ (besides $i$). Shortest paths between $j$ and every such $k$ in $\mathfrak{g}+\langle ij \rangle$ are, clearly, shorter than those in $\mathfrak{g}$.
To show that the Left Hand Side of Inequality (\ref{eq:addition-inequality}) is greater than the Right Hand Side, we show that it holds for the worst possible case of the above inequality (given $n$). 
This happens when $d_{ij}(\mathfrak{g})$ is the minimum possible, that is, $2$, and $\mathfrak{g}$, with $n$ agents, is as shown in Figure \ref{fig:local-resource-availability-g-network}. Agent $i$ is connected with $n-2$ agents, agent $j$ has a single neighbour $k$, and $k$ is an intermediary between $i$ and $j$. In $\mathfrak{g}$, $\Phi_{j}(\mathfrak{g})=\frac{2n+3}{6}$ and $d_{ij}(\mathfrak{g})=2$. Suppose agents $i$ and $j$ add a direct link, resulting in the network $\mathfrak{g}+\langle ij\rangle$ as shown in Figure \ref{fig:local-resource-availability-g+ij-network}. Here,  $\Phi_{j}(\mathfrak{g}+\langle ij\rangle)=\frac{n+1}{2}$. 
\begin{figure*}[h!]
\centering
\subfigure[Network $\mathfrak{g}$ ]
{
\includegraphics[scale=0.2]{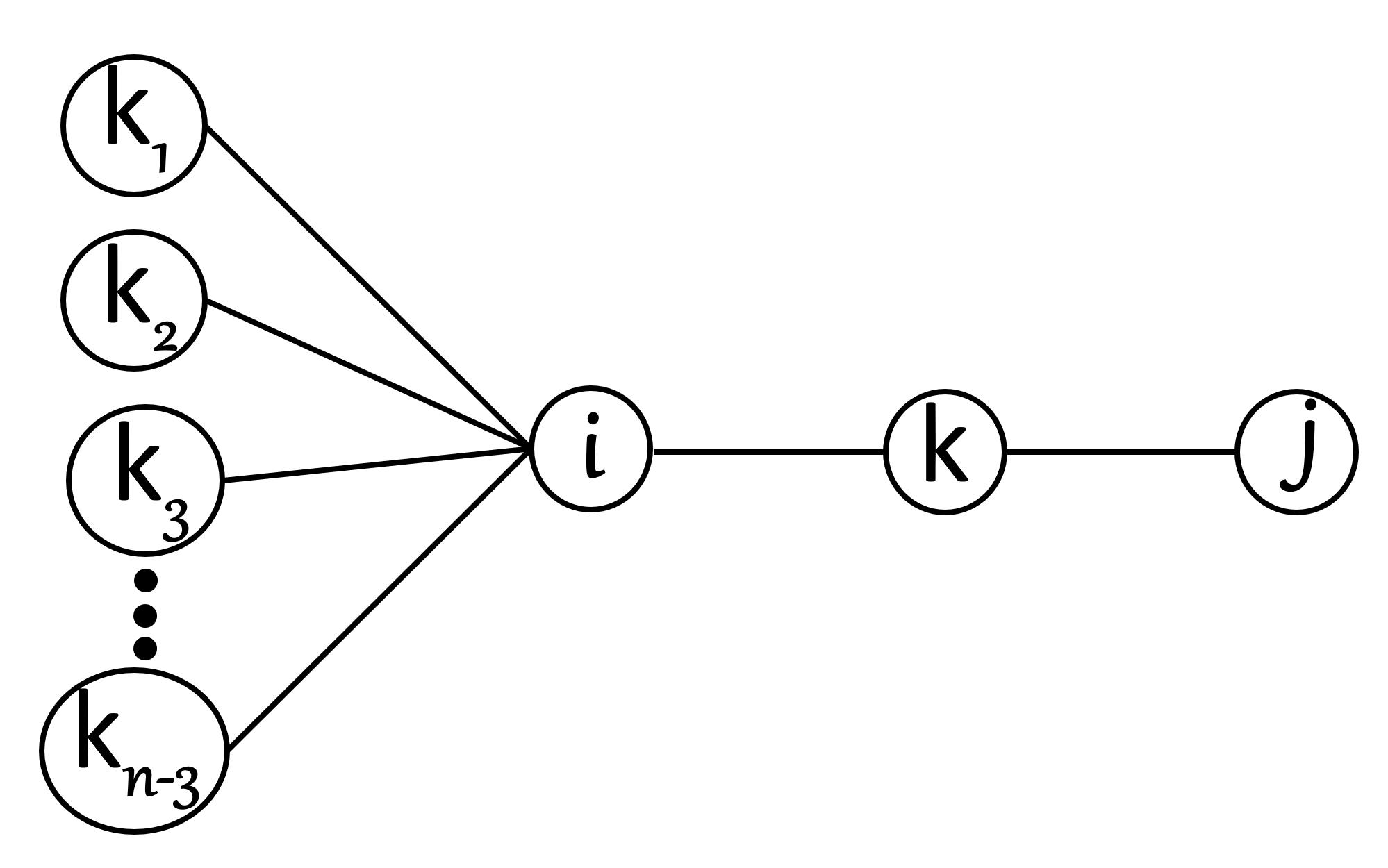} 
\label{fig:local-resource-availability-g-network}
}
\quad 
\subfigure[Network $\mathfrak{g}+\langle ij\rangle$]
{
\includegraphics[scale=0.2]{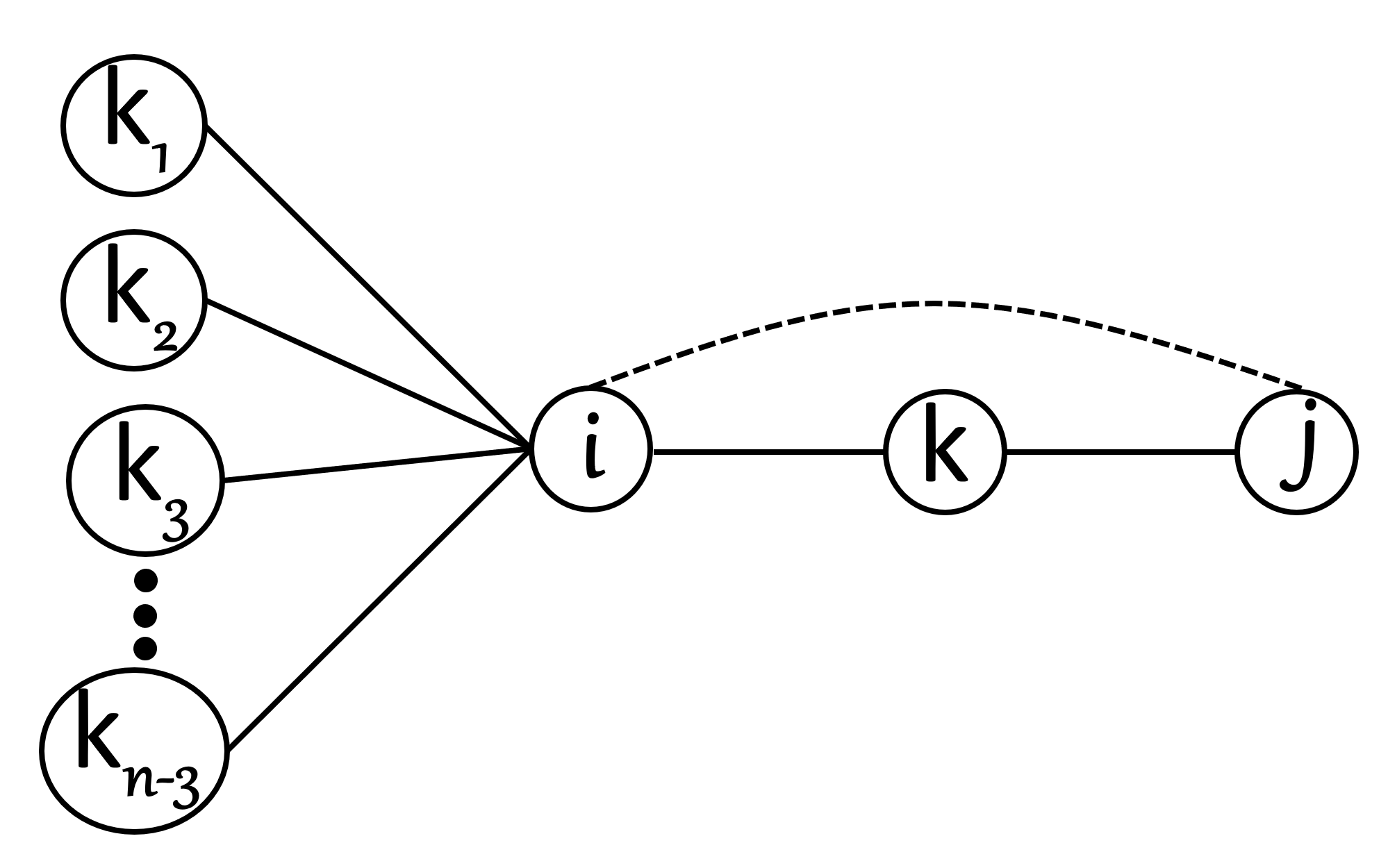} 
\label{fig:local-resource-availability-g+ij-network}
}
\caption{Link addition and local resource availability}
\label{fig:local-resource-availability-addition}
\end{figure*}
%
We have, $\sum\limits_{k\in\mathfrak{g}\setminus \{i, j\}} \frac{1}{d_{jk}(\mathfrak{g})}=\frac{n}{3}$ and $\sum\limits_{k\in\mathfrak{g}+\langle ij \rangle\setminus \{i, j\}} \frac{1}{d_{jk}(\mathfrak{g}+\langle ij \rangle)}=\frac{n-1}{2}$. The Left Hand Side of Inequality (\ref{eq:addition-inequality}) = $2\times\frac{n}{3}$, which is clearly greater than its Right Hand Side, $\frac{n-1}{2}$.
\end{enumerate}
\end{proof}
The above result shows that agents always improve their local resource availabilities by adding new resource sharing connections.
We, now, show that an agent's decision to delete an existing resource sharing connection decreases the local resource availability of the pair of agents from each other.

\begin{proposition}\label{lemma:link-deletion-local-resource-availability}
Suppose $i$ and $j$ are distinct agents in $\mathfrak{g}$ such that $\langle ij\rangle \in \mathfrak{g}$. 
Then, $\alpha_{ij}(\mathfrak{g})>\alpha_{ij}(\mathfrak{g}-\langle ij \rangle)$.
\end{proposition}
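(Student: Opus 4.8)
The plan is to derive this statement as an immediate consequence of Proposition~\ref{lemma:link-addition-local-resource-availability} by exploiting the symmetry between link addition and link deletion. The key observation is that deleting the link $\langle ij\rangle$ from $\mathfrak{g}$ and then re-adding it recovers $\mathfrak{g}$ itself. Formally, I would set $\mathfrak{g}' = \mathfrak{g} - \langle ij\rangle$ and note that, since $i$ and $j$ are distinct and $\langle ij\rangle \in \mathfrak{g}$, we have $\langle ij\rangle \notin \mathfrak{g}'$ and $\mathfrak{g}' + \langle ij\rangle = \mathfrak{g}$.

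With this set-up, the hypotheses of Proposition~\ref{lemma:link-addition-local-resource-availability} are satisfied for the pair $i, j$ in the network $\mathfrak{g}'$. Applying that proposition gives $\alpha_{ij}(\mathfrak{g}' + \langle ij\rangle) > \alpha_{ij}(\mathfrak{g}')$, which upon substituting $\mathfrak{g}' + \langle ij\rangle = \mathfrak{g}$ and $\mathfrak{g}' = \mathfrak{g} - \langle ij\rangle$ is exactly the desired inequality $\alpha_{ij}(\mathfrak{g}) > \alpha_{ij}(\mathfrak{g} - \langle ij\rangle)$. The only things to verify carefully are the set-identity $(\mathfrak{g} - \langle ij\rangle) + \langle ij\rangle = \mathfrak{g}$ and that removing the link does not disturb the vertex set, so that $i, j$ remain distinct agents of $\mathfrak{g}'$; both are immediate from the definitions of link addition and deletion in Section~\ref{sec:model}.

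Should a self-contained argument be preferred over the reduction, I would instead invoke Lemma~\ref{prelemma:link-addition-local-resource-availability} directly with $\mathfrak{g}_1 = \mathfrak{g}$ and $\mathfrak{g}_2 = \mathfrak{g} - \langle ij\rangle$, reducing the claim to the inequality $d_{ij}(\mathfrak{g} - \langle ij\rangle)\sum_{k} \tfrac{1}{d_{jk}(\mathfrak{g}-\langle ij\rangle)} > d_{ij}(\mathfrak{g})\sum_{k}\tfrac{1}{d_{jk}(\mathfrak{g})}$. Here $d_{ij}(\mathfrak{g}) = 1$ because $\langle ij\rangle \in \mathfrak{g}$, while $d_{ij}(\mathfrak{g}-\langle ij\rangle) \ge 2$, and deleting a link never decreases any distance $d_{jk}$, so the sum on the left is no larger than the sum on the right. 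The extra factor jumping from $1$ to at least $2$ then has to be shown to dominate, which would require the same worst-case network analysis already carried out for Proposition~\ref{lemma:link-addition-local-resource-availability}.

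I do not anticipate a real obstacle: the reduction route is essentially a one-line mirror of the addition case, and the only genuine content—the worst-case comparison—has already been established in the proof of Proposition~\ref{lemma:link-addition-local-resource-availability}. The main point to keep in mind is the degenerate possibility that deleting $\langle ij\rangle$ disconnects $i$ from $j$, so that $d_{ij}(\mathfrak{g}-\langle ij\rangle)=\infty$ and $\alpha_{ij}(\mathfrak{g}-\langle ij\rangle)=0$; this case trivially satisfies the strict inequality and is exactly the analogue of case~(1) in Proposition~\ref{lemma:link-addition-local-resource-availability}.
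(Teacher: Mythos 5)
Your proposal is correct, but your primary argument takes a genuinely different route from the paper's. The paper proves the deletion case from scratch: it invokes Lemma~\ref{prelemma:link-addition-local-resource-availability} with $\mathfrak{g}_1 = \mathfrak{g}$ and $\mathfrak{g}_2 = \mathfrak{g}-\langle ij \rangle$, and then runs a three-case analysis (disconnection, i.e.\ $d_{ij}(\mathfrak{g}-\langle ij \rangle)=\infty$; distances from $j$ unchanged; distances from $j$ changed), closing the last case with a worst-case star-like network that mirrors the one used for link addition --- this is essentially your fallback argument. Your main argument instead observes that $(\mathfrak{g}-\langle ij\rangle)+\langle ij\rangle=\mathfrak{g}$, that link alteration changes only the edge set so $i,j$ remain distinct agents, and that $\langle ij\rangle\notin\mathfrak{g}-\langle ij\rangle$; hence Proposition~\ref{lemma:link-addition-local-resource-availability} applied to $\mathfrak{g}'=\mathfrak{g}-\langle ij\rangle$ yields $\alpha_{ij}(\mathfrak{g})>\alpha_{ij}(\mathfrak{g}-\langle ij\rangle)$ in one line. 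This reduction is logically airtight and buys economy: the deletion result becomes a formal corollary of the addition result rather than a parallel construction, and the disconnection case is covered automatically since it is Case~1 of the addition proof. What it gives up is independence --- your proof is exactly as strong as the paper's proof of Proposition~\ref{lemma:link-addition-local-resource-availability}, whose worst-case step is the only real content, whereas the paper's self-contained version exhibits the extremal configuration for deletion explicitly. Either way, the genuine mathematical work is done once, not twice, and your packaging makes that explicit.
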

\begin{proof}
Owing to Lemma \ref{prelemma:link-addition-local-resource-availability}, it suffices to show that 
{\small
\begin{equation}\label{eq:deletion-inequality}
d_{ij}(\mathfrak{g}-\langle ij \rangle)\sum\limits_{k\in\mathfrak{g}-\langle ij \rangle\setminus \{i, j\}} \frac{1}{d_{jk}(\mathfrak{g}-\langle ij \rangle)} >d_{ij}(\mathfrak{g})\sum\limits_{k\in\mathfrak{g}\setminus \{i, j\}} \frac{1}{d_{jk}(\mathfrak{g})}
\end{equation}
}
We know that $d_{ij}(\mathfrak{g})=1$, $d_{ij}(\mathfrak{g}-\langle ij \rangle) \in \{2, 3, \ldots\}$, and $0 \leq \sum\limits_{k\in\mathfrak{g}\setminus \{i, j\}} \frac{1}{d_{jk}(\mathfrak{g})} \leq n-2$, 0 when $j$ is isolated and $n-2$ when $j$ is connected to all $k$.\\

It suffices to check that Inequality (\ref{eq:deletion-inequality}) holds in the following three cases.

\begin{enumerate}
\item $d_{ij}(\mathfrak{g}-\langle ij \rangle)=\infty$. That is, $\langle ij \rangle$ is the only path between $i$ and $j$ in $\mathfrak{g}$. 
Inequality (\ref{eq:deletion-inequality}), clearly, holds in this case.

\item $\sum\limits_{k\in\mathfrak{g}\setminus \{i, j\}} \frac{1}{d_{jk}(\mathfrak{g})}=\sum\limits_{k\in\mathfrak{g}-\langle ij \rangle\setminus \{i, j\}} \frac{1}{d_{jk}(\mathfrak{g}-\langle ij \rangle)}$.That is, deletion of link $\langle ij \rangle$ does not change the distance between $j$ and any other agent $k$ except $i$. It is easy to see that Inequality (\ref{eq:deletion-inequality}) holds in this case too.

\item $\sum\limits_{k\in\mathfrak{g}\setminus \{i, j\}} \frac{1}{d_{jk}(\mathfrak{g})}>\sum\limits_{k\in\mathfrak{g}-\langle ij \rangle\setminus \{i, j\}} \frac{1}{d_{jk}(\mathfrak{g}-\langle ij \rangle)}$.

To show that the Left Hand Side of Inequality (\ref{eq:deletion-inequality}) is greater than the Right Hand Side, we show that it holds for the worst possible case of the above inequality (given $n$). 
Refer Figure \ref{fig:delete-example-local-resource-availability-g-network}. Here, $\sum\limits_{k\in\mathfrak{g}\setminus \{i, j\}} \frac{1}{d_{jk}(\mathfrak{g})} = \frac{n-1}{2}$.  
\begin{figure*}[h!]
\centering
\subfigure[Network $\mathfrak{g}$ ]
{
\includegraphics[scale=0.20]{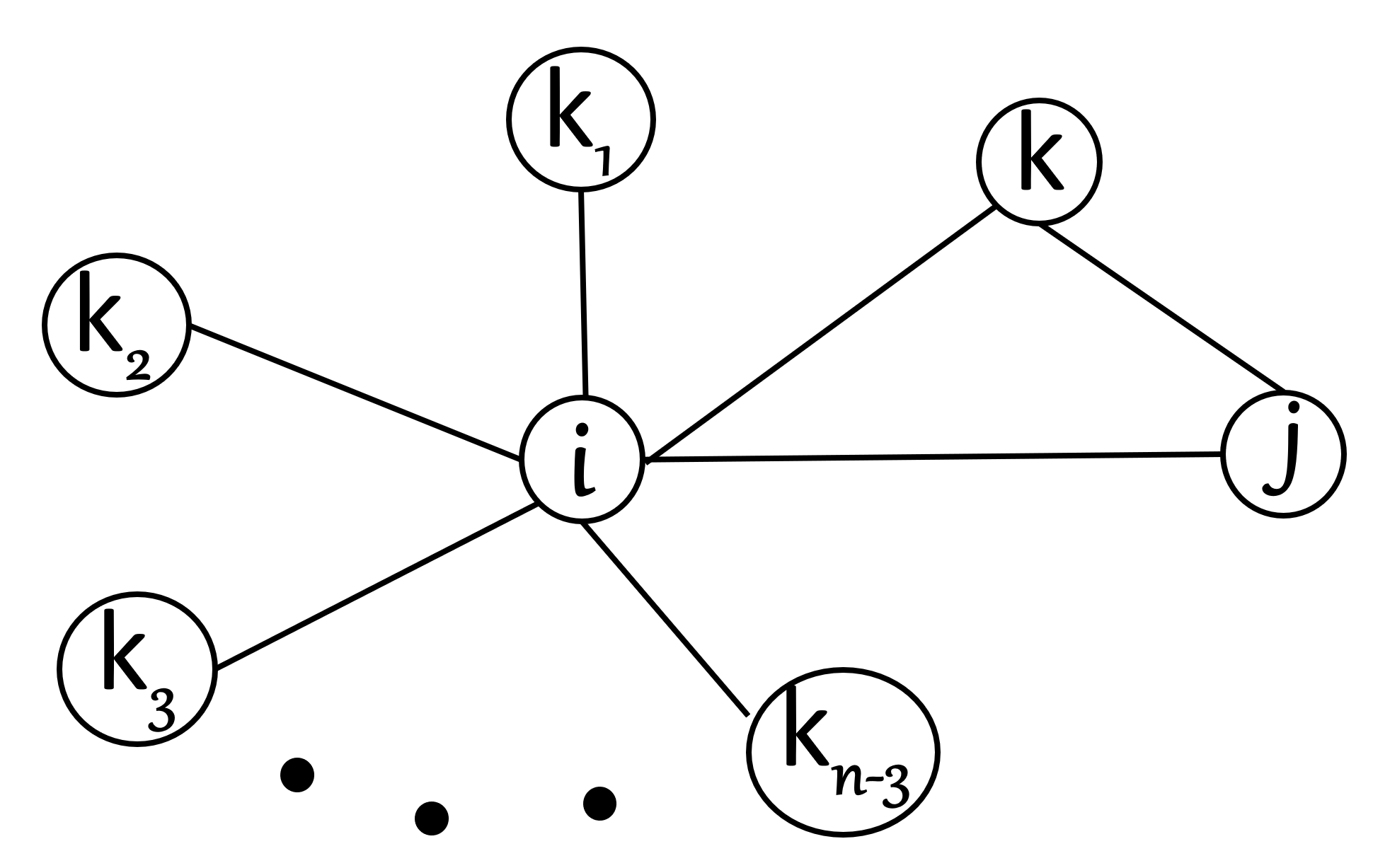} 
\label{fig:delete-example-local-resource-availability-g-network}
}
\quad \quad \quad
\subfigure[Network $\mathfrak{g}-\langle ij\rangle$]
{
\includegraphics[scale=0.20]{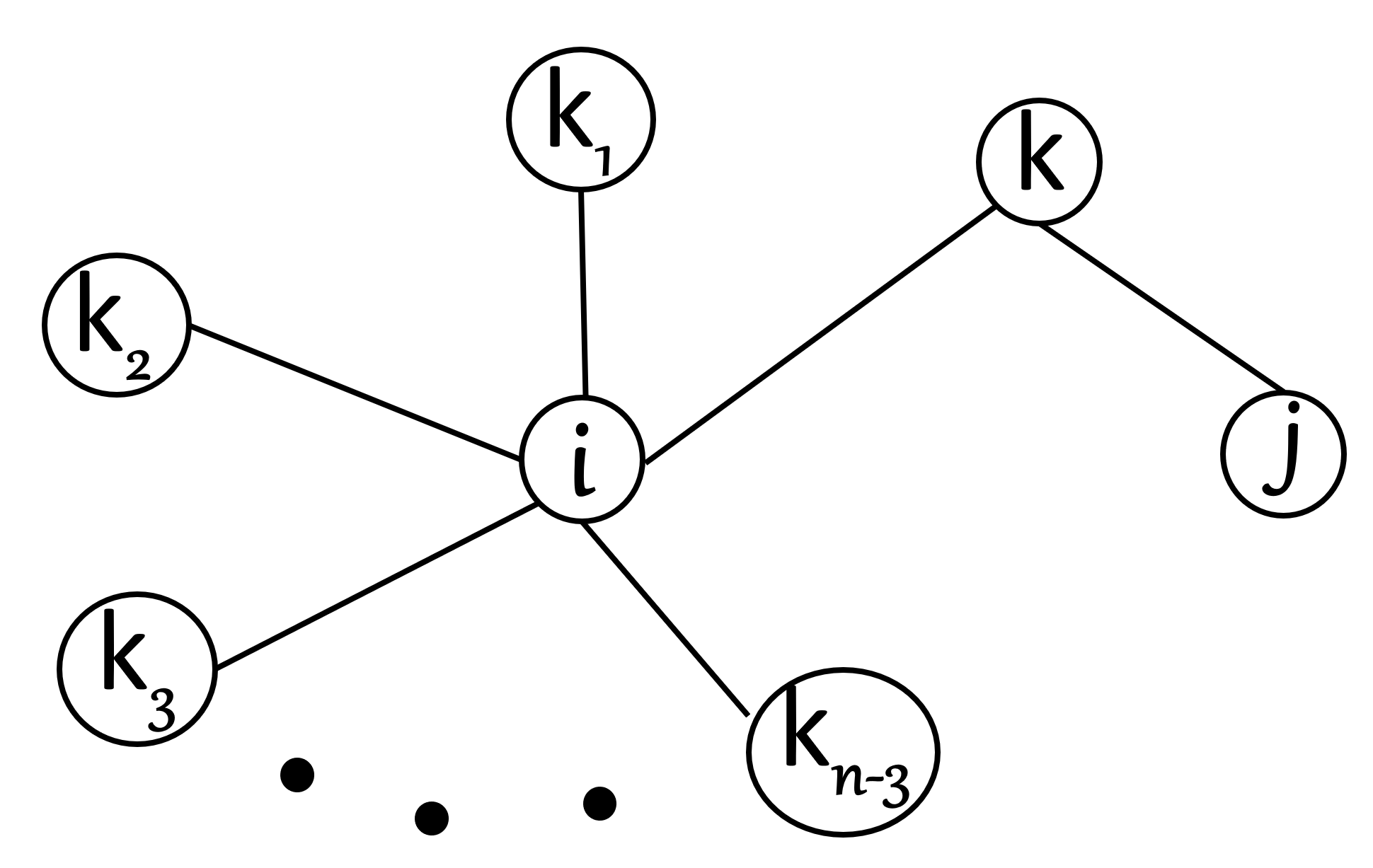} 
\label{fig:local-resource-availability-g-ij-network}
}
\caption{Link deletion and local resource availability}
\label{fig:delete-example-local-resource-availability-addition}
\end{figure*}
%
In $\mathfrak{g}-\langle ij \rangle$, $i$ has links with all agents except $j$. Agent $j$ has a link to at least one other agent, for otherwise we have Case 1. $d_{ij}(\mathfrak{g}-\langle ij \rangle)=2$, the greatest lower bound on the distance, as  $\langle ij \rangle$ has been deleted. Refer Figure \ref{fig:local-resource-availability-g-ij-network}. Here, $\sum\limits_{k\in\mathfrak{g}-\langle ij \rangle\setminus \{i, j\}} \frac{1}{d_{jk}(\mathfrak{g}-\langle ij \rangle)}=\frac{n}{3}$.\\

Inequality (\ref{eq:deletion-inequality}) holds in this case too.\qedhere
\end{enumerate}
\end{proof}

\begin{theorem}
Suppose $i$ and $j$ are distinct agents in $\mathfrak{g}$. Then, the link $\langle ij\rangle$ is always strictly beneficial to both $i$ and $j$, with respect to local resource availabilities from each other. 
\end{theorem}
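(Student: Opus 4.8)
The plan is to obtain this theorem as a direct consolidation of Propositions \ref{lemma:link-addition-local-resource-availability} and \ref{lemma:link-deletion-local-resource-availability}, together with a symmetry argument. First I would fix the meaning of the phrase ``strictly beneficial to both $i$ and $j$, with respect to local resource availabilities from each other'': the link $\langle ij\rangle$ is beneficial precisely when its presence strictly raises both $\alpha_{ij}$, the local resource availability of $i$ from $j$, and $\alpha_{ji}$, the local resource availability of $j$ from $i$, relative to the network in which the link is absent.

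I would then split into the two possible configurations of the link. If $\langle ij\rangle \notin \mathfrak{g}$, Proposition \ref{lemma:link-addition-local-resource-availability} gives $\alpha_{ij}(\mathfrak{g}+\langle ij\rangle) > \alpha_{ij}(\mathfrak{g})$, so adding the link strictly improves $i$'s availability from $j$. If instead $\langle ij\rangle \in \mathfrak{g}$, Proposition \ref{lemma:link-deletion-local-resource-availability} gives $\alpha_{ij}(\mathfrak{g}) > \alpha_{ij}(\mathfrak{g}-\langle ij\rangle)$, so the presence of the link again strictly improves $i$'s availability from $j$. In either case the link is strictly beneficial for the quantity $\alpha_{ij}$.

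The remaining step is to cover the other direction, $\alpha_{ji}$. Here the key observation is that both preceding propositions are proved through Lemma \ref{prelemma:link-addition-local-resource-availability}, whose hypothesis is entirely symmetric under interchanging the labels $i$ and $j$: the distance is symmetric, $d_{ij}(\mathfrak{g})=d_{ji}(\mathfrak{g})$, and the two summations over $k$ simply swap their roles. Consequently, relabelling $i \leftrightarrow j$ throughout the arguments of both propositions yields the analogous inequalities for $\alpha_{ji}$, namely $\alpha_{ji}(\mathfrak{g}+\langle ij\rangle)>\alpha_{ji}(\mathfrak{g})$ in the addition case and $\alpha_{ji}(\mathfrak{g})>\alpha_{ji}(\mathfrak{g}-\langle ij\rangle)$ in the deletion case. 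Combining the two directions, the presence of $\langle ij\rangle$ strictly increases both $\alpha_{ij}$ and $\alpha_{ji}$, which is exactly the claimed bilateral benefit.

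I do not anticipate a genuine obstacle, since the theorem is essentially a repackaging of the two propositions. The only point requiring care is that $\alpha_{ij}(\mathfrak{g})$ and $\alpha_{ji}(\mathfrak{g})$ are not equal in general --- they are built from the distinct closeness values $\Phi_i(\mathfrak{g})$ and $\Phi_j(\mathfrak{g})$ --- so the two directions must be argued individually rather than inferred from a single inequality; the symmetry of the underlying lemma is what makes this a one-line relabelling rather than a separate computation.
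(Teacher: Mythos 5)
Your proposal is correct and takes essentially the same route as the paper, whose proof is simply the citation of Propositions \ref{lemma:link-addition-local-resource-availability} and \ref{lemma:link-deletion-local-resource-availability}. You are in fact slightly more careful than the paper: the explicit relabelling $i \leftrightarrow j$ needed to obtain the $\alpha_{ji}$ inequalities (which do not follow from the $\alpha_{ij}$ ones, since $\alpha_{ij}$ and $\alpha_{ji}$ differ in general) is left implicit in the paper's one-line proof.
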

\begin{proof}
Follows from Propositions \ref{lemma:link-addition-local-resource-availability} and  \ref{lemma:link-deletion-local-resource-availability}.
\end{proof}

\subsubsection{Neighbors and Local Resource Availability}\label{subsubsec:neighbor-local-resource-availability}
In the previous section, we saw that an agent $i$ improves its local resource availability from another agent $j$ by adding link $\langle ij\rangle$. However, this newly added link decreases agent $i$'s local resource availability from its existing neighbors $k$ who are at least three hops away from $j$. For neighbors $k$ who are less than three hops away from $j$, agent $i$'s local resource availability from them remains the same. Similarly, while an agent's local resource availability from another agent decreases if their existing link is deleted, the agent's local resource availability from its existing neighbors who are at least three hops away increases, and remains the same for the other neighbors. We prove these results below. 
\begin{proposition}\label{lemma:local-connection-add-effect}
Suppose $i, j$ and $k$ are distinct agents in $\mathfrak{g}$ such that $\langle ij\rangle \not \in \mathfrak{g}$ and $k\in \eta_i(\mathfrak{g})$. Then, the following hold: 
\begin{enumerate}
    \item If $k\in \eta_j(\mathfrak{g})$, then  $\alpha_{ik}(\mathfrak{g})=\alpha_{ik}(\mathfrak{g}+\langle ij\rangle)$.
    \item If $d_{kj}(\mathfrak{g}) = 2$, then  $\alpha_{ik}(\mathfrak{g})=\alpha_{ik}(\mathfrak{g}+\langle ij\rangle)$.
    \item If $d_{kj}(\mathfrak{g}) > 2$, then $\alpha_{ik}(\mathfrak{g})>\alpha_{ik}(\mathfrak{g}+\langle ij\rangle)$.
\end{enumerate}
\end{proposition}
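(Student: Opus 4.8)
The plan is to collapse all three parts to a single question about how agent $k$'s closeness $\Phi_k$ moves when the link $\langle ij\rangle$ is added, and then to control that movement by tracking the one new edge. First I would apply Lemma \ref{prelemma:link-addition-local-resource-availability} to the pair $(i,k)$ with $\mathfrak{g}_1=\mathfrak{g}$ and $\mathfrak{g}_2=\mathfrak{g}+\langle ij\rangle$. Since $k\in\eta_i(\mathfrak{g})$ we have $d_{ik}(\mathfrak{g})=1$, and because the added edge $\langle ij\rangle$ is incident to $i$ it neither deletes the existing link $\langle ik\rangle$ nor pushes any distance below $1$, so $d_{ik}(\mathfrak{g}+\langle ij\rangle)=1$ as well. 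With both $d_{ik}$ factors equal to $1$, the criterion of Lemma \ref{prelemma:link-addition-local-resource-availability} reduces to comparing $\sum_{m\notin\{i,k\}}\tfrac{1}{d_{km}}$ across the two networks; as the omitted term $\tfrac{1}{d_{ki}}=1$ is identical in both, this is exactly a comparison of $\Phi_k(\mathfrak{g})$ with $\Phi_k(\mathfrak{g}+\langle ij\rangle)$. Consequently $\alpha_{ik}$ stays equal when $\Phi_k$ stays equal, and $\alpha_{ik}$ strictly decreases when $\Phi_k$ strictly increases (note the inversion: $\alpha_{ik}$ is larger in the network whose closeness sum is smaller).

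The core of the argument is then to decide how $\Phi_k$ changes, for which I would invoke the single-edge shortest-path update: any genuinely new shortest path from $k$ must cross the new link exactly once, in one of its two orientations, so for every agent $m$,
\[
d_{km}(\mathfrak{g}+\langle ij\rangle)=\min\bigl\{\,d_{km}(\mathfrak{g}),\; d_{ki}(\mathfrak{g})+1+d_{jm}(\mathfrak{g}),\; d_{kj}(\mathfrak{g})+1+d_{im}(\mathfrak{g})\,\bigr\}.
\]
Substituting $d_{ki}(\mathfrak{g})=1$ turns the middle term into $2+d_{jm}(\mathfrak{g})$, which expresses that $k$'s cheapest use of the new link is the detour $k\to i\to j\to\cdots$.

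For parts (1) and (2), where $d_{kj}(\mathfrak{g})\le 2$, I would show neither alternative beats $d_{km}(\mathfrak{g})$: the triangle inequality gives $d_{km}(\mathfrak{g})\le d_{kj}(\mathfrak{g})+d_{jm}(\mathfrak{g})\le 2+d_{jm}(\mathfrak{g})$ and $d_{km}(\mathfrak{g})\le 1+d_{im}(\mathfrak{g})\le d_{kj}(\mathfrak{g})+1+d_{im}(\mathfrak{g})$, so every $d_{km}$ is unchanged, $\Phi_k$ is unchanged, and hence $\alpha_{ik}(\mathfrak{g})=\alpha_{ik}(\mathfrak{g}+\langle ij\rangle)$; the hypothesis $k\in\eta_j(\mathfrak{g})$ is just the sub-case $d_{kj}(\mathfrak{g})=1$. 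For part (3), where $d_{kj}(\mathfrak{g})>2$, taking $m=j$ the middle term equals $2<d_{kj}(\mathfrak{g})$, so $d_{kj}$ strictly drops (to $2$); since adding an edge can never lengthen any distance, all other $d_{km}$ are non-increasing, whence $\Phi_k$ strictly increases and $\alpha_{ik}(\mathfrak{g})>\alpha_{ik}(\mathfrak{g}+\langle ij\rangle)$.

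The step I expect to be the main obstacle is justifying the single-edge update formula cleanly and, in particular, arguing that $d_{ki}(\mathfrak{g})=1$ forces $k$'s only useful new route through the link to be $k\to i\to j\to\cdots$; the rest is triangle-inequality bookkeeping together with keeping the direction of Lemma \ref{prelemma:link-addition-local-resource-availability} straight. I would also remark that parts (1) and (2) are genuinely one case, $d_{kj}(\mathfrak{g})\le 2$, and might be stated as such.
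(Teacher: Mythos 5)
Your proposal is correct and takes essentially the same route as the paper: with $d_{ik}=1$ fixed, everything reduces to how $\Phi_k$ moves, which is unchanged when $d_{kj}(\mathfrak{g})\le 2$ and strictly increases (via the new $k$--$i$--$j$ path of length $2$) when $d_{kj}(\mathfrak{g})>2$. The only differences are cosmetic: the paper substitutes $d_{ik}=1$ directly into Eq.~(\ref{eq:agent-probability}) instead of routing through Lemma~\ref{prelemma:link-addition-local-resource-availability}, and it asserts without detail the step you verify by the single-edge update formula and triangle inequality, namely that $d_{kj}(\mathfrak{g})\le 2$ forces \emph{all} of $k$'s distances, hence $\Phi_k$, to be unchanged.
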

\begin{proof}
Suppose $d_{kj}(\mathfrak{g}) > 2$. Then, 
$\Phi_{k}(\mathfrak{g}+\langle ij\rangle) > \Phi_{k}(\mathfrak{g})$, as $d_{kj}(\mathfrak{g}+\langle ij\rangle) = 2 < d_{kj}(\mathfrak{g})$, the new shortest path between $k$ and $j$, in $\mathfrak{g}+\langle ij\rangle$, being the path with the two links $\langle ki\rangle$ and $\langle ij\rangle$. 
Therefore, from Eq. (\ref{eq:agent-probability}), 

$\alpha_{ik}(\mathfrak{g}) = \frac{p(1-q)}{\Phi_{k}(\mathfrak{g})}>\frac{p(1-q)}{\Phi_{k}(\mathfrak{g}+\langle ij\rangle)} = \alpha_{ik}(\mathfrak{g}+\langle ij\rangle)$, proving $3$.

\noindent If $d_{kj}(\mathfrak{g}) \leq 2$, then $d_{kj}(\mathfrak{g}+\langle ij\rangle) = d_{kj}(\mathfrak{g})$ and, hence, 

$\Phi_{k}(\mathfrak{g}+\langle ij\rangle) = \Phi_{k}(\mathfrak{g})$, implying 
$\alpha_{ik}(\mathfrak{g}) = \alpha_{ik}(\mathfrak{g}+\langle ij\rangle)$, proving $1$ and $2$.
\end{proof}
Similar results hold for agent $k$'s resource availability from agent $i$ too, as stated in the following corollary.

\begin{corollary}\label{lemma:local-connection-add-effect-on-k}
Suppose $i, j$ and $k$ are distinct agents in $\mathfrak{g}$ such that $\langle ij\rangle \not \in \mathfrak{g}$ and $k\in \eta_i(\mathfrak{g})$. Then, the following hold: 
\begin{enumerate}
    \item If $k\in \eta_j(\mathfrak{g})$, then  $\alpha_{ki}(\mathfrak{g})=\alpha_{ki}(\mathfrak{g}+\langle ij\rangle)$.
    \item If $d_{kj}(\mathfrak{g}) = 2$, then  $\alpha_{ki}(\mathfrak{g})=\alpha_{ki}(\mathfrak{g}+\langle ij\rangle)$.
    \item If $d_{kj}(\mathfrak{g}) > 2$, then $\alpha_{ki}(\mathfrak{g})>\alpha_{ki}(\mathfrak{g}+\langle ij\rangle)$.\\
\end{enumerate}
\end{corollary}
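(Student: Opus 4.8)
The plan is to reduce the statement to the behaviour of a single quantity, the closeness $\Phi_k$ of the common neighbour $k$, and then reuse the case analysis already carried out in Proposition~\ref{lemma:local-connection-add-effect}. Since $k \in \eta_i(\mathfrak{g})$ we have $\langle ki\rangle \in \mathfrak{g}$, so $d_{ki}(\mathfrak{g})=1$; as no link is deleted, $\langle ki\rangle$ survives in $\mathfrak{g}+\langle ij\rangle$ and hence $d_{ki}(\mathfrak{g}+\langle ij\rangle)=1$ as well. Consequently, by Eq.~(\ref{eq:agent-probability}), $\alpha_{ki}(\mathfrak{g}) = \frac{p(1-q)}{\Phi_k(\mathfrak{g})}$ and $\alpha_{ki}(\mathfrak{g}+\langle ij\rangle)=\frac{p(1-q)}{\Phi_k(\mathfrak{g}+\langle ij\rangle)}$, so comparing $\alpha_{ki}$ across $\mathfrak{g}$ and $\mathfrak{g}+\langle ij\rangle$ amounts to comparing $\Phi_k$ across the two networks. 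This is precisely the quantity governing Proposition~\ref{lemma:local-connection-add-effect}, so once the effect of adding $\langle ij\rangle$ on $\Phi_k$ is settled, all three cases follow at once.

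For Case~3, where $d_{kj}(\mathfrak{g})>2$, adding $\langle ij\rangle$ produces the two-hop path $k \to i \to j$, so $d_{kj}(\mathfrak{g}+\langle ij\rangle)=2<d_{kj}(\mathfrak{g})$; since no distance from $k$ can increase, the term $\tfrac{1}{d_{kj}}$ strictly increases and $\Phi_k(\mathfrak{g}+\langle ij\rangle)>\Phi_k(\mathfrak{g})$, whence $\alpha_{ki}(\mathfrak{g})>\alpha_{ki}(\mathfrak{g}+\langle ij\rangle)$. For Cases~1 and~2, where $d_{kj}(\mathfrak{g})\le 2$ (either $k\in\eta_j(\mathfrak{g})$, i.e.\ $d_{kj}=1$, or $d_{kj}=2$), I would argue that $\Phi_k$ is unchanged, giving $\alpha_{ki}(\mathfrak{g})=\alpha_{ki}(\mathfrak{g}+\langle ij\rangle)$.

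The main obstacle is exactly this last claim: showing that when $d_{kj}(\mathfrak{g})\le 2$ the addition of $\langle ij\rangle$ leaves \emph{every} distance from $k$ unchanged, not merely $d_{kj}$. I would establish it as follows. Any shortest path in $\mathfrak{g}+\langle ij\rangle$ that is strictly shorter than its counterpart in $\mathfrak{g}$ must traverse the new link $\langle ij\rangle$. For a path originating at $k$, the cheapest way to reach an endpoint of that link is $d_{ki}(\mathfrak{g})=1$ to $i$, so the shortest new route from $k$ across $\langle ij\rangle$ reaches $j$ in length $2$ and any $m$ beyond $j$ in length $2+d_{jm}(\mathfrak{g})$. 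When $d_{kj}(\mathfrak{g})\le 2$, the triangle inequality gives $d_{km}(\mathfrak{g})\le d_{kj}(\mathfrak{g})+d_{jm}(\mathfrak{g})\le 2+d_{jm}(\mathfrak{g})$, so the new route is never strictly shorter, and in particular reaching $j$ itself is not improved. Hence no distance from $k$ decreases, $\Phi_k$ is preserved, and the equalities of Cases~1 and~2 follow. This completes the mirror of Proposition~\ref{lemma:local-connection-add-effect}, the only difference being that the roles of consumer and provider in the pair $\{i,k\}$ are interchanged, which does not affect the distance facts since $d_{ki}=d_{ik}=1$.
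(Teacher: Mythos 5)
Your proof is correct, and in strategy it is the same one the paper intends: the paper gives no separate argument for this corollary (it is asserted as the mirror image of Proposition \ref{lemma:local-connection-add-effect}), and both your proof and the paper's proof of that proposition reduce everything to the single quantity $\Phi_k$, using $d_{ki}(\mathfrak{g})=d_{ki}(\mathfrak{g}+\langle ij\rangle)=1$ and then splitting on whether $d_{kj}(\mathfrak{g})\le 2$ or $d_{kj}(\mathfrak{g})>2$. (Note that writing $\alpha_{ki}=p(1-q)/\Phi_k$ takes Eq.~(\ref{eq:agent-probability}) literally, i.e., normalization by the closeness of the first index; this is the only reading under which the equalities in cases 1--2 can hold, since under the provider-closeness convention used in the proof of Lemma \ref{prelemma:link-addition-local-resource-availability} the governing quantity for $\alpha_{ki}$ would be $\Phi_i$, which strictly increases whenever $\langle ij\rangle$ is added.) Where you genuinely add value is the step the paper glosses over: the proof of Proposition \ref{lemma:local-connection-add-effect} passes from ``$d_{kj}$ is unchanged'' to ``$\Phi_k$ is unchanged'' without checking that the \emph{other} distances from $k$ are also unaffected by the new link. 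Your path-decomposition argument closes exactly this gap: any $k$--$m$ path shortened by the addition must cross $\langle ij\rangle$, hence splits into segments lying in $\mathfrak{g}$, and with $d_{ki}(\mathfrak{g})=1$, $d_{kj}(\mathfrak{g})\le 2$ the triangle inequality shows no such route beats $d_{km}(\mathfrak{g})$. One small point of care: your text only bounds the crossing that reaches $i$ first; the reverse orientation ($k$ to $j$ in $\mathfrak{g}$, across to $i$, then to $m$) should be stated explicitly, though it is dispatched just as easily by $d_{kj}(\mathfrak{g})+1+d_{im}(\mathfrak{g})\ge 2+d_{im}(\mathfrak{g})>1+d_{im}(\mathfrak{g})\ge d_{km}(\mathfrak{g})$.
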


We have the following results on the aggregate local resource availability, aggregated over all neighbors of $i$ in $\mathfrak{g}$. \\
\begin{corollary}\label{corl:local-connection-add-net-effect-2diam}
Suppose $\mathfrak{g}$ is a two-diameter network \footnote{{$\mathfrak{g}$ is a two diameter network if $1\leq d_{ij}(\mathfrak{g})\leq 2$, for all $i, j \in \mathfrak{g}$.}} where $i$ and $j$ are distinct agents such that $\langle ij\rangle \not \in \mathfrak{g}$. Then, \\
$\prod\limits_{k\in\eta_{i}(\mathfrak{g})}\alpha_{ik}(\mathfrak{g})=\prod\limits_{k\in\eta_{i}(\mathfrak{g}+\langle ij\rangle)\setminus\{j\}}\alpha_{ik}(\mathfrak{g}+\langle ij\rangle)$.
\end{corollary}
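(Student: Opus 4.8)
The plan is to reduce the claimed product identity to a factor-by-factor comparison and then invoke Proposition~\ref{lemma:local-connection-add-effect}. First I would observe that adding the link $\langle ij\rangle$ to $\mathfrak{g}$ changes the neighbourhood of $i$ only by introducing $j$ as a new neighbour; no existing neighbour of $i$ is lost. Hence $\eta_i(\mathfrak{g}+\langle ij\rangle)\setminus\{j\}=\eta_i(\mathfrak{g})$, so the two products in the statement are taken over exactly the same index set $\eta_i(\mathfrak{g})$. This turns the claimed identity into the claim that each corresponding factor is unchanged, i.e.\ $\alpha_{ik}(\mathfrak{g})=\alpha_{ik}(\mathfrak{g}+\langle ij\rangle)$ for every $k\in\eta_i(\mathfrak{g})$.

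Next I would exploit the two-diameter hypothesis. Since $\langle ij\rangle\notin\mathfrak{g}$ but the diameter of $\mathfrak{g}$ is two, $i$ and $j$ are connected with $d_{ij}(\mathfrak{g})=2$, and for any neighbour $k\in\eta_i(\mathfrak{g})$ we have $d_{kj}(\mathfrak{g})\le 2$. Moreover $k\neq j$, because $j\notin\eta_i(\mathfrak{g})$ (as $\langle ij\rangle\notin\mathfrak{g}$), so in fact $d_{kj}(\mathfrak{g})\in\{1,2\}$. In the first subcase $k\in\eta_j(\mathfrak{g})$ and in the second $d_{kj}(\mathfrak{g})=2$; these are precisely parts~1 and~2 of Proposition~\ref{lemma:local-connection-add-effect}, each of which yields $\alpha_{ik}(\mathfrak{g})=\alpha_{ik}(\mathfrak{g}+\langle ij\rangle)$. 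Crucially, the case $d_{kj}(\mathfrak{g})>2$ of that proposition---the only branch producing a strict inequality---cannot occur here precisely because of the diameter bound.

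Finally, since every factor is preserved, multiplying over all $k\in\eta_i(\mathfrak{g})$ gives the desired equality of products. There is essentially no computational obstacle; the result is a clean specialisation of Proposition~\ref{lemma:local-connection-add-effect} to the diameter-two setting. The one point that requires care is the bookkeeping in the first step---confirming that the two index sets coincide so that the products really are a term-by-term match---together with the observation that the two-diameter hypothesis is exactly what eliminates the strict-inequality branch of the proposition.
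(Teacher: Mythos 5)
Your proposal is correct and follows exactly the route the paper intends: the corollary is stated as an immediate consequence of Proposition~\ref{lemma:local-connection-add-effect}, and your argument---identifying the index sets via $\eta_i(\mathfrak{g}+\langle ij\rangle)\setminus\{j\}=\eta_i(\mathfrak{g})$, then using the diameter-two bound to force every neighbour $k$ of $i$ into cases~1 or~2 of that proposition so each factor $\alpha_{ik}$ is unchanged---is precisely the intended derivation. Your explicit remark that the diameter hypothesis rules out the strict-inequality branch ($d_{kj}(\mathfrak{g})>2$) is the key point, and it is handled correctly.
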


\begin{corollary}\label{corl:local-connection-add-net-effect}
Suppose $i$ and $j$ are distinct agents in $\mathfrak{g}$ such that $\langle ij\rangle \not \in \mathfrak{g}$. 
If there exists at least one agent $k$ in $\mathfrak{g}$, different from $i$, which is more than two hops away from $j$, then \\
$\prod\limits_{k\in\eta_{i}(\mathfrak{g})}\alpha_{ik}(\mathfrak{g})>\prod\limits_{k\in\eta_{i}(\mathfrak{g}+\langle ij\rangle)\setminus\{j\}}\alpha_{ik}(\mathfrak{g}+\langle ij\rangle)$.
\end{corollary}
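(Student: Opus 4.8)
The plan is to reduce the product inequality to a factorwise comparison, since Proposition \ref{lemma:local-connection-add-effect} already classifies exactly how each individual factor $\alpha_{ik}$ responds to the addition of $\langle ij\rangle$. The only work left is to aggregate those factorwise facts over the neighborhood of $i$ and track where strictness comes from.

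First I would fix the index sets. Because $\langle ij\rangle\notin\mathfrak{g}$, adding the link leaves $i$'s existing neighbours untouched and only inserts $j$, so $\eta_i(\mathfrak{g}+\langle ij\rangle)=\eta_i(\mathfrak{g})\cup\{j\}$ and hence $\eta_i(\mathfrak{g}+\langle ij\rangle)\setminus\{j\}=\eta_i(\mathfrak{g})$. Thus both products run over the \emph{same} set $\eta_i(\mathfrak{g})$, and it suffices to compare $\alpha_{ik}(\mathfrak{g})$ with $\alpha_{ik}(\mathfrak{g}+\langle ij\rangle)$ for each $k\in\eta_i(\mathfrak{g})$. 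Next I would partition $\eta_i(\mathfrak{g})$ by $d_{kj}(\mathfrak{g})$ and quote Proposition \ref{lemma:local-connection-add-effect}: every neighbour $k$ with $d_{kj}(\mathfrak{g})\leq 2$ (cases 1 and 2, including $k\in\eta_j(\mathfrak{g})$) gives an unchanged factor $\alpha_{ik}(\mathfrak{g})=\alpha_{ik}(\mathfrak{g}+\langle ij\rangle)$, while every neighbour $k$ with $d_{kj}(\mathfrak{g})>2$ (case 3) gives a strictly larger pre-addition factor $\alpha_{ik}(\mathfrak{g})>\alpha_{ik}(\mathfrak{g}+\langle ij\rangle)$.

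Finally I would multiply. The step I do not want to skip is \textbf{positivity}: for a neighbour $k$ we have $\alpha_{ik}(\mathfrak{g})=\frac{p(1-q)}{\Phi_k(\mathfrak{g})}>0$ with $\Phi_k$ finite, and similarly in $\mathfrak{g}+\langle ij\rangle$. Given one strictly-decreasing factor $\alpha_{ik_0}$ and all other factors weakly decreasing, positivity lets me isolate $k_0$ and write $\prod_{k}\alpha_{ik}(\mathfrak{g})=\alpha_{ik_0}(\mathfrak{g})\prod_{k\neq k_0}\alpha_{ik}(\mathfrak{g})>\alpha_{ik_0}(\mathfrak{g}+\langle ij\rangle)\prod_{k\neq k_0}\alpha_{ik}(\mathfrak{g}+\langle ij\rangle)$, which is the claim. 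This is the routine ``finite product of positive reals strictly drops when one factor drops'' lemma, but it genuinely needs positivity, so I flag it as the one place to be careful.

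The main subtlety, and the step I expect to be the real obstacle, is ensuring the strictly-decreasing factor actually arises \emph{inside the product}: only a neighbour $k\in\eta_i(\mathfrak{g})$ with $d_{kj}(\mathfrak{g})>2$ changes a factor, whereas an arbitrary agent of $\mathfrak{g}$ lying more than two hops from $j$ but not adjacent to $i$ leaves every factor fixed. I would therefore read the hypothesised agent ``more than two hops away from $j$'' as one of $i$'s neighbours, consistent with both the index set of the product and the preceding discussion (``its existing neighbors $k$ who are at least three hops away from $j$''), so that $k_0\in\eta_i(\mathfrak{g})$ with $d_{k_0 j}(\mathfrak{g})>2$ is guaranteed and case 3 applies.
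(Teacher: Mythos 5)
Your proof is correct and is essentially the paper's own (implicit) derivation: the corollary is meant as an immediate consequence of Proposition \ref{lemma:local-connection-add-effect}, applied factor by factor over $\eta_i(\mathfrak{g})=\eta_i(\mathfrak{g}+\langle ij\rangle)\setminus\{j\}$, with positivity of each $\alpha_{ik}$ turning the one strictly decreased factor into a strict inequality between the products. The subtlety you flag is real and your resolution is the intended one: the hypothesised far agent must be read as a \emph{neighbour} of $i$ (matching the paper's preceding discussion of ``existing neighbors $k$ who are at least three hops away from $j$''), since a non-neighbour more than two hops from $j$ can leave every factor unchanged --- for instance when $i$'s only neighbours all lie within two hops of $j$ --- in which case the two products are equal and the strict inequality fails.
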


We, now, see that an agent's local connections increase its aggregate local resource availability, when the agent deletes a link with one of the neighbors who is at least three hops away from the other neighbors. 
\begin{proposition}\label{lemma:local-connection-delete-effect}
Suppose $i, j$ and $k$ are distinct agents in $\mathfrak{g}$ such that $\langle ij\rangle \in \mathfrak{g}$ and $k\in \eta_i(\mathfrak{g})$. Then, the following hold: 
\begin{enumerate}
    \item If $k\in \eta_j(\mathfrak{g})$, then  $\alpha_{ik}(\mathfrak{g})=\alpha_{ik}(\mathfrak{g}-\langle ij\rangle)$.
    \item If $d_{kj}(\mathfrak{g}) = 2$, then  $\alpha_{ik}(\mathfrak{g})=\alpha_{ik}(\mathfrak{g}-\langle ij\rangle)$.
    \item If $d_{kj}(\mathfrak{g}) > 2$, then $\alpha_{ik}(\mathfrak{g})<\alpha_{ik}(\mathfrak{g}-\langle ij\rangle)$.
\end{enumerate}
\end{proposition}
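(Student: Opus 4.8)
The plan is to reduce the entire statement to the already-established addition result, Proposition~\ref{lemma:local-connection-add-effect}, by running that result in reverse. The first step is the same observation that drove the addition proof: since $k\in\eta_i(\mathfrak{g})$ and the deleted link is $\langle ij\rangle$ with $k\neq j$, the link $\langle ik\rangle$ survives the deletion, so $d_{ik}(\mathfrak{g})=d_{ik}(\mathfrak{g}-\langle ij\rangle)=1$. Hence, by Eq.~(\ref{eq:agent-probability}), $\alpha_{ik}(\mathfrak{g})=\frac{p(1-q)}{\Phi_k(\mathfrak{g})}$ and $\alpha_{ik}(\mathfrak{g}-\langle ij\rangle)=\frac{p(1-q)}{\Phi_k(\mathfrak{g}-\langle ij\rangle)}$. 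Each of the three claims is therefore equivalent to a comparison of $\Phi_k$ across the two networks: equality of the $\alpha$'s corresponds to $\Phi_k$ being unchanged, and a strict \emph{increase} of $\alpha_{ik}$ upon deletion corresponds to a strict \emph{decrease} of $\Phi_k$. Since deletion can only lengthen distances, $\Phi_k$ can only weakly decrease, which already fixes the direction of every inequality.

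Next I would set $\mathfrak{g}'=\mathfrak{g}-\langle ij\rangle$ and verify that the hypotheses of Proposition~\ref{lemma:local-connection-add-effect} hold for $\mathfrak{g}'$: the agents $i,j,k$ are distinct, $\langle ij\rangle\notin\mathfrak{g}'$ (it was just removed), and $k\in\eta_i(\mathfrak{g}')$ (again because $\langle ik\rangle$ is untouched). Because $\mathfrak{g}'+\langle ij\rangle=\mathfrak{g}$, applying that proposition to $\mathfrak{g}'$ yields the three cases directly. Case~$1$ is immediate since $k\in\eta_j(\mathfrak{g})$ if and only if $k\in\eta_j(\mathfrak{g}')$ (deleting $\langle ij\rangle$ does not affect the link $\langle kj\rangle$), so $\Phi_k$ is unchanged and the $\alpha$'s are equal. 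For Cases~$2$ and $3$ the relevant quantity is the distance from $k$ to $j$ in $\mathfrak{g}'=\mathfrak{g}-\langle ij\rangle$: when $d_{kj}(\mathfrak{g}')\le 2$ the restoration of $\langle ij\rangle$ leaves $\Phi_k$ unchanged, giving equality; when $d_{kj}(\mathfrak{g}')>2$ it strictly raises $\Phi_k$, so $\Phi_k(\mathfrak{g})>\Phi_k(\mathfrak{g}')$ and hence $\alpha_{ik}(\mathfrak{g})<\alpha_{ik}(\mathfrak{g}-\langle ij\rangle)$.

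The underlying mechanism, and the point I would be most careful about, is exactly the one packaged inside Proposition~\ref{lemma:local-connection-add-effect}. When $d_{kj}(\mathfrak{g}')>2$, restoring $\langle ij\rangle$ creates a length-two path between $k$ and $j$ via the links $\langle ki\rangle$ and $\langle ij\rangle$, which strictly shortens $d_{kj}$ and therefore strictly raises $\Phi_k$; read in reverse, deleting $\langle ij\rangle$ from $\mathfrak{g}$ strictly lengthens $d_{kj}$ and lowers $\Phi_k$. The genuinely delicate part, already secured by the addition proposition, is the equality cases: one must check that when $d_{kj}(\mathfrak{g}')\le 2$ \emph{no} distance from $k$ to any agent changes, not merely $d_{kj}$ itself. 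This follows because, with $d_{ik}=1$, the shortest route out of $k$ that exploits the restored edge reaches $j$ in two hops ($k$ to $i$ to $j$); if $k$ is already within two hops of $j$ in $\mathfrak{g}'$ this is no improvement, and a triangle-inequality argument then shows that no distance from $k$ can decrease. Finally I would flag the one translation subtlety: the threshold conditions in Cases~$2$ and $3$ are naturally evaluated in the post-deletion network $\mathfrak{g}-\langle ij\rangle$ (whereas Case~$1$, being about adjacency to $j$, is insensitive to the deletion), which is precisely the form in which reversing Proposition~\ref{lemma:local-connection-add-effect} delivers them.
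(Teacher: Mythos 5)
Your proof is correct and is, at heart, the argument the paper intends: the paper's own proof of this proposition is a single sentence deferring to ``lines similar to'' Proposition~\ref{lemma:local-connection-add-effect}, and your reduction via $\mathfrak{g}'=\mathfrak{g}-\langle ij\rangle$ together with $\mathfrak{g}'+\langle ij\rangle=\mathfrak{g}$ is exactly that mirrored argument, packaged as a formal application of the addition proposition rather than a re-derivation. The mechanism is identical in both (with $d_{ik}=1$ fixed, everything reduces to how $\Phi_k$ responds, and the case split is on $d_{kj}$ in the link-free network); your version merely avoids duplicating the proof and makes the logical dependence explicit.

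One point deserves more emphasis than your closing ``translation subtlety'' remark gives it, because it is not cosmetic. Under the stated hypotheses ($\langle ij\rangle\in\mathfrak{g}$ and $k\in\eta_i(\mathfrak{g})$), the triangle inequality forces $d_{kj}(\mathfrak{g})\leq d_{ki}(\mathfrak{g})+d_{ij}(\mathfrak{g})=2$. Hence case~3 of the proposition, read literally with the distance measured in $\mathfrak{g}$, is vacuous, and case~2 read literally is false: on the three-agent path $j$--$i$--$k$ we have $d_{kj}(\mathfrak{g})=2$, yet deleting $\langle ij\rangle$ drops $\Phi_k$ from $\tfrac{3}{2}$ to $1$, so $\alpha_{ik}(\mathfrak{g})<\alpha_{ik}(\mathfrak{g}-\langle ij\rangle)$ rather than equality. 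The only reading under which the proposition has content and is true is the one your reduction delivers automatically: the thresholds in cases~2 and~3 must be evaluated in the post-deletion network $\mathfrak{g}-\langle ij\rangle$, just as the addition proposition evaluates them in its own link-free network. So your proof establishes the correct version of the statement and quietly repairs it; I would make that repair explicit (i.e., restate cases~2 and~3 with $d_{kj}(\mathfrak{g}-\langle ij\rangle)$) rather than leave it as a flagged subtlety.
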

\begin{proof}
The result can be proved in lines similar to the proof of Proposition \ref{lemma:local-connection-add-effect}. 
\end{proof}
\begin{corollary}\label{lemma:local-connection-delete-effect-on-k}
Suppose $i, j$ and $k$ are distinct agents in $\mathfrak{g}$ such that $\langle ij\rangle \not \in \mathfrak{g}$ and $k\in \eta_i(\mathfrak{g})$. Then, the following hold: 
\begin{enumerate}
    \item If $k\in \eta_j(\mathfrak{g})$, then  $\alpha_{ki}(\mathfrak{g})=\alpha_{ki}(\mathfrak{g}-\langle ij\rangle)$.
    \item If $d_{kj}(\mathfrak{g}) = 2$, then  $\alpha_{ki}(\mathfrak{g})=\alpha_{ki}(\mathfrak{g}-\langle ij\rangle)$.
    \item If $d_{kj}(\mathfrak{g}) > 2$, then $\alpha_{ki}(\mathfrak{g})>\alpha_{ki}(\mathfrak{g}-\langle ij\rangle)$.
\end{enumerate}
\end{corollary}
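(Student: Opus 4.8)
The plan is to obtain this corollary from its link-addition counterpart, Corollary~\ref{lemma:local-connection-add-effect-on-k}, by treating deletion as the inverse of addition -- precisely the device that turns Proposition~\ref{lemma:local-connection-add-effect} into Proposition~\ref{lemma:local-connection-delete-effect}. Concretely, I would set $\mathfrak{g}' = \mathfrak{g}-\langle ij\rangle$, so that $\langle ij\rangle \notin \mathfrak{g}'$ and $\mathfrak{g}'+\langle ij\rangle = \mathfrak{g}$. The first task is to check that the hypotheses of Corollary~\ref{lemma:local-connection-add-effect-on-k} transfer to $\mathfrak{g}'$: the three agents remain distinct, and since $k\neq j$ the edge $\langle ki\rangle$ is not the one deleted, so $k\in\eta_i(\mathfrak{g})$ gives $k\in\eta_i(\mathfrak{g}')$; likewise $\langle kj\rangle$ is untouched (as $k\neq i$), so $k\in\eta_j(\mathfrak{g})$ if and only if $k\in\eta_j(\mathfrak{g}')$.

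With the hypotheses secured, I would apply Corollary~\ref{lemma:local-connection-add-effect-on-k} to $\mathfrak{g}'$ with added link $\langle ij\rangle$ and simply read off the three cases, rewriting $\mathfrak{g}'+\langle ij\rangle$ as $\mathfrak{g}$ and $\mathfrak{g}'$ as $\mathfrak{g}-\langle ij\rangle$. In the first two cases the addition corollary gives $\alpha_{ki}(\mathfrak{g}-\langle ij\rangle)=\alpha_{ki}(\mathfrak{g})$, the claimed equalities; in the remaining case it gives a strict relation between $\alpha_{ki}(\mathfrak{g})$ and $\alpha_{ki}(\mathfrak{g}-\langle ij\rangle)$ in the same direction as the strict inequality of Proposition~\ref{lemma:local-connection-delete-effect}, since going from $\mathfrak{g}$ to $\mathfrak{g}-\langle ij\rangle$ reverses the roles played by the two networks in the addition statement.

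As an alternative I would argue directly, mirroring the proof of Proposition~\ref{lemma:local-connection-add-effect}. The key observation is that $d_{ki}=1$ both in $\mathfrak{g}$ and in $\mathfrak{g}-\langle ij\rangle$, because the link $\langle ki\rangle$ is never removed; hence $\alpha_{ki}$ can change only through the closeness term that governs it. It then remains to see when deleting $\langle ij\rangle$ perturbs that closeness: the two-hop route $k\to i\to j$ is destroyed, so the only distances out of $k$ that can increase are those whose shortest paths used this route, and by the same distance analysis as in Proposition~\ref{lemma:local-connection-add-effect} (and Lemma~\ref{lemma:g=g+kl} applied in the addition direction) this occurs exactly when $d_{kj}(\mathfrak{g}-\langle ij\rangle)>2$, while for $d_{kj}(\mathfrak{g}-\langle ij\rangle)\leq 2$ every such distance, and therefore $\alpha_{ki}$, is unchanged. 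This reproduces the three cases without invoking the earlier corollary.

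The step I expect to be the main obstacle is the correct reading of the distance conditions. Because $\langle ij\rangle\in\mathfrak{g}$ together with $k\in\eta_i(\mathfrak{g})$ forces $d_{kj}(\mathfrak{g})\leq d_{ki}(\mathfrak{g})+d_{ij}(\mathfrak{g})=2$, the hypothesis ``$d_{kj}>2$'' only becomes meaningful once the link is removed, so the cases must be understood relative to $\mathfrak{g}-\langle ij\rangle$, consistent with Corollary~\ref{lemma:local-connection-add-effect-on-k}, whose conditions are measured in the graph before the link is present. Getting this book-keeping right, and confirming that no distance from $k$ other than those routed through the deleted edge can change, is the only place where care is genuinely needed; everything else is a direct transcription of the earlier results.
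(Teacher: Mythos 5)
The paper offers no proof of this corollary at all: it is stated as the unproved ``similar result'' companion to Proposition \ref{lemma:local-connection-delete-effect}, just as Corollary \ref{lemma:local-connection-add-effect-on-k} is the unproved companion to Proposition \ref{lemma:local-connection-add-effect}. Your reduction---set $\mathfrak{g}'=\mathfrak{g}-\langle ij\rangle$, note $\mathfrak{g}'+\langle ij\rangle=\mathfrak{g}$, check that $\langle ki\rangle$ and $\langle kj\rangle$ are untouched so the hypotheses transfer, then invoke Corollary \ref{lemma:local-connection-add-effect-on-k}---is exactly the right formalization of the paper's implicit ``similarly,'' and your two repairs of the statement are correct and necessary: the printed hypothesis $\langle ij\rangle\notin\mathfrak{g}$ must be read as $\langle ij\rangle\in\mathfrak{g}$, and the distance conditions must be measured in $\mathfrak{g}-\langle ij\rangle$, since $\langle ij\rangle\in\mathfrak{g}$ together with $k\in\eta_i(\mathfrak{g})$ forces $d_{kj}(\mathfrak{g})\le 2$ and would make case 3 vacuous.

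The one genuine defect is that you never say explicitly that your conclusion in case 3 is the \emph{reverse} of the printed one. Your reduction yields $\alpha_{ki}(\mathfrak{g}-\langle ij\rangle)>\alpha_{ki}(\mathfrak{g})$, i.e.\ $\alpha_{ki}(\mathfrak{g})<\alpha_{ki}(\mathfrak{g}-\langle ij\rangle)$, which is consistent with case 3 of Proposition \ref{lemma:local-connection-delete-effect} and with the surrounding prose (deleting $\langle ij\rangle$ \emph{raises} local availability involving neighbours more than two hops from $j$). The corollary as printed asserts $\alpha_{ki}(\mathfrak{g})>\alpha_{ki}(\mathfrak{g}-\langle ij\rangle)$---evidently a copy of case 3 of Corollary \ref{lemma:local-connection-add-effect-on-k} with $+$ changed to $-$ but the inequality, like the hypothesis $\notin$, left unflipped. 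Your phrase ``in the same direction as the strict inequality of Proposition \ref{lemma:local-connection-delete-effect}'' is the mathematically correct conclusion, but it silently contradicts the statement you were asked to prove; as written, your proposal either proves the printed claim (it does not---that claim is false whenever its case-3 hypothesis is satisfiable) or proves a corrected statement without saying so. Flag the misprint explicitly and the argument is complete. A minor secondary point: your direct alternative argument presumes $\alpha_{ki}$ is governed by $\Phi_k$ (Eq.~(\ref{eq:agent-probability}) read literally), which is indeed the only reading under which the equality cases 1--2 of both corollaries can hold; under the $\Phi_j$-of-the-provider convention used in Lemma \ref{lemma:all-probability} and in the proof of Proposition \ref{lemma:local-connection-add-effect}, deletion of $\langle ij\rangle$ always strictly changes $\Phi_i$ and the equalities would fail, so it is worth stating which form of $\alpha$ you are using.
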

\begin{corollary}\label{corl:local-connection-del-net-effect-2diam}
Suppose $\mathfrak{g}$ is a two-diameter network where $i$ and $j$ are distinct agents. Suppose $\langle ij\rangle \in \mathfrak{g}$. Then,\\ $\prod\limits_{k\in\eta_{i}(\mathfrak{g})\setminus\{j\}}\alpha_{ik}(\mathfrak{g})=\prod\limits_{k\in\eta_{i}(\mathfrak{g}-\langle ij\rangle)}\alpha_{ik}(\mathfrak{g}-\langle ij\rangle)$.
\end{corollary}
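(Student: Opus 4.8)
The plan is to reduce this corollary directly to Proposition~\ref{lemma:local-connection-delete-effect}, exploiting the fact that in a two-diameter network no pair of distinct agents can be more than two hops apart. First I would record the structural observation that deleting the link $\langle ij\rangle$ from $\mathfrak{g}$ removes $j$ from the neighbourhood of $i$ and leaves every other neighbour of $i$ intact, so that $\eta_{i}(\mathfrak{g}-\langle ij\rangle)=\eta_{i}(\mathfrak{g})\setminus\{j\}$. This guarantees that the two products in the statement are indexed over exactly the same set, reducing the identity to a term-by-term comparison of the factors $\alpha_{ik}(\mathfrak{g})$ and $\alpha_{ik}(\mathfrak{g}-\langle ij\rangle)$.

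Next I would fix an arbitrary $k\in\eta_{i}(\mathfrak{g})\setminus\{j\}$. Since $k$ is a neighbour of $i$ and is distinct from $j$, the three agents $i,j,k$ are distinct and Proposition~\ref{lemma:local-connection-delete-effect} applies. Because $\mathfrak{g}$ has diameter two we have $d_{kj}(\mathfrak{g})\leq 2$, and since $k\neq j$ we have $d_{kj}(\mathfrak{g})\geq 1$; hence $d_{kj}(\mathfrak{g})\in\{1,2\}$. In the sub-case $d_{kj}(\mathfrak{g})=1$ (equivalently $k\in\eta_{j}(\mathfrak{g})$), part~1 of the proposition yields $\alpha_{ik}(\mathfrak{g})=\alpha_{ik}(\mathfrak{g}-\langle ij\rangle)$, while in the sub-case $d_{kj}(\mathfrak{g})=2$ part~2 gives the same equality. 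The essential point is that the only case producing a strict change, namely $d_{kj}(\mathfrak{g})>2$, is \emph{excluded} by the two-diameter hypothesis, so every surviving neighbour contributes an unchanged factor.

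Finally I would take the product of these term-wise equalities over all $k\in\eta_{i}(\mathfrak{g})\setminus\{j\}$ and rewrite the index set of the resulting right-hand product as $\eta_{i}(\mathfrak{g}-\langle ij\rangle)$ using the first observation, which delivers the claimed identity. I expect no computational difficulty here, since Proposition~\ref{lemma:local-connection-delete-effect} already absorbs the underlying closeness calculations. The only step requiring care is the bookkeeping in the first paragraph: one must confirm that deletion neither creates new neighbours of $i$ nor alters the local availability from any retained neighbour beyond what the diameter bound permits, so that the two index sets genuinely coincide and the comparison is legitimate.
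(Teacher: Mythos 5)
Your reduction is exactly the argument the paper intends (the paper states this corollary without proof, as an immediate consequence of Proposition~\ref{lemma:local-connection-delete-effect}), and your bookkeeping that $\eta_{i}(\mathfrak{g}-\langle ij\rangle)=\eta_{i}(\mathfrak{g})\setminus\{j\}$ is fine. The genuine gap is in the step you flagged as safe: the claim that every retained neighbour contributes an unchanged factor. Parts~1 and~2 of Proposition~\ref{lemma:local-connection-delete-effect} condition on the distance $d_{kj}(\mathfrak{g})$ \emph{before} deletion, and, unlike in the link-addition case, this does not control what happens after deletion: the deleted edge $\langle ij\rangle$ may lie on every shortest $k$--$j$ path, in which case $d_{kj}$ jumps above $2$, $\Phi_k$ drops, and $\alpha_{ik}=p(1-q)/(d_{ik}\Phi_k)$ strictly increases (note $d_{ik}=1$ is unchanged). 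Concretely, take $\mathbf{A}=\{i,j,k\}$ with links $\langle ki\rangle$ and $\langle ij\rangle$ (a path, hence a two-diameter network). Then $\eta_{i}(\mathfrak{g})\setminus\{j\}=\{k\}=\eta_{i}(\mathfrak{g}-\langle ij\rangle)$, but $\Phi_k(\mathfrak{g})=\tfrac{3}{2}$ while $\Phi_k(\mathfrak{g}-\langle ij\rangle)=1$, so the left-hand product equals $\tfrac{2}{3}p(1-q)$ and the right-hand product equals $p(1-q)$. The term-by-term equality you assert in your second paragraph fails for this $k$, and indeed the corollary itself is false as stated.

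To be clear, the flaw originates in the paper, not in your write-up: Proposition~\ref{lemma:local-connection-delete-effect} is ``proved'' by analogy with Proposition~\ref{lemma:local-connection-add-effect}, but the analogy silently swaps which network the distance hypothesis refers to. The correct symmetric statement conditions on $d_{kj}(\mathfrak{g}-\langle ij\rangle)$, i.e., distances \emph{after} deletion. Correspondingly, the corollary becomes true if one assumes that $\mathfrak{g}-\langle ij\rangle$ (rather than $\mathfrak{g}$) is a two-diameter network; then it is precisely Corollary~\ref{corl:local-connection-add-net-effect-2diam} applied to $\mathfrak{g}-\langle ij\rangle$ with $\langle ij\rangle$ added back. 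Alternatively, keep the hypothesis on $\mathfrak{g}$ and add the requirement that every $k\in\eta_{i}(\mathfrak{g})\setminus\{j\}$ with $\langle kj\rangle\notin\mathfrak{g}$ has a common neighbour with $j$ other than $i$; that is what actually guarantees $\Phi_k(\mathfrak{g})=\Phi_k(\mathfrak{g}-\langle ij\rangle)$ for every retained neighbour. As written, your proof faithfully reproduces the paper's error rather than introducing a new one, but the step ``$d_{kj}(\mathfrak{g})=2$ implies the factor is unchanged'' is the precise point of failure, and it is the one place where a proof of this corollary cannot simply cite the proposition.
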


\begin{corollary}\label{corl:local-connection-del-net-effect}
Suppose $i$ and $j$ are distinct agents in $\mathfrak{g}$ such that $\langle ij\rangle \in \mathfrak{g}$. 
If there exists at least one agent $k$ in $\mathfrak{g}$, different from $i$, which is more than two hops away from $j$, then \\
$\prod\limits_{k\in\eta_{i}(\mathfrak{g})\setminus\{j\}}\alpha_{ik}(\mathfrak{g})<\prod\limits_{k\in\eta_{i}(\mathfrak{g}-\langle ij\rangle)}\alpha_{ik}(\mathfrak{g}-\langle ij\rangle)$.
\end{corollary}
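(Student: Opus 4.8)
The plan is to read this as the deletion mirror of Corollary~\ref{corl:local-connection-add-net-effect} and to obtain it by multiplying together the per-neighbour comparisons of Proposition~\ref{lemma:local-connection-delete-effect}. First I would pin down the index sets: deleting $\langle ij\rangle$ severs only the adjacency between $i$ and $j$ and leaves every other edge at $i$ intact, so $\eta_i(\mathfrak{g}-\langle ij\rangle)=\eta_i(\mathfrak{g})\setminus\{j\}$. Hence both sides of the claim are products over the \emph{same} set $S:=\eta_i(\mathfrak{g})\setminus\{j\}$, the left-hand product evaluated in $\mathfrak{g}$ and the right-hand one in $\mathfrak{g}-\langle ij\rangle$, which is exactly the form needed to compare them factor by factor.

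Next, for every $k\in S$ I would invoke Proposition~\ref{lemma:local-connection-delete-effect} to get $\alpha_{ik}(\mathfrak{g})\le\alpha_{ik}(\mathfrak{g}-\langle ij\rangle)$, with equality precisely when $k$ stays within two hops of $j$ and strict increase when the deletion pushes $k$ beyond two hops of $j$. To promote these termwise inequalities to a strict inequality of products I would record positivity: each $k\in S$ remains a neighbour of $i$ after deletion, so $d_{ik}=1<\infty$ and therefore $\alpha_{ik}>0$ in both networks. A finite product of factors over a common index set that are all positive and satisfy a weak inequality termwise is itself a weak inequality, and it becomes strict as soon as a single factor is strict.

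The crux is therefore to produce one strict factor, i.e.\ one neighbour $k\in S$ that is more than two hops from $j$, which is what the hypothesis is meant to supply. The subtlety I would be careful about is that in $\mathfrak{g}$ itself the two-edge path through $i$ keeps every neighbour of $i$ within two hops of $j$, so the relevant ``more than two hops from $j$'' is a feature of the post-deletion network $\mathfrak{g}-\langle ij\rangle$, and the far agent must itself be a neighbour of $i$ in order to enter the product at all. The cleanest way to make this rigorous, and the route I would actually take, is the reversal $\mathfrak{g}=(\mathfrak{g}-\langle ij\rangle)+\langle ij\rangle$: set $\mathfrak{g}'=\mathfrak{g}-\langle ij\rangle$, note $\langle ij\rangle\notin\mathfrak{g}'$ and $\eta_i(\mathfrak{g}')=S$, and apply Corollary~\ref{corl:local-connection-add-net-effect} to $\mathfrak{g}'$; its conclusion is literally the present inequality with the two sides interchanged, and its hypothesis is exactly the existence of a neighbour of $i$ lying more than two hops from $j$ in $\mathfrak{g}'=\mathfrak{g}-\langle ij\rangle$. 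I expect this identification of the strict factor (equivalently, verifying the hypothesis of Corollary~\ref{corl:local-connection-add-net-effect} for $\mathfrak{g}'$) to be the only nontrivial step, the remainder being the bookkeeping of index sets and positivity described above.
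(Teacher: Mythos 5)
Your proof is correct, and it is essentially the derivation the paper intends: the paper states this corollary with no proof at all, treating it as the termwise-multiplication consequence of Proposition \ref{lemma:local-connection-delete-effect} (mirroring Corollary \ref{corl:local-connection-add-net-effect}), which is exactly your first plan; your eventual route via the reversal $\mathfrak{g}=(\mathfrak{g}-\langle ij\rangle)+\langle ij\rangle$ and an appeal to Corollary \ref{corl:local-connection-add-net-effect} is the same content in a cleaner package. What the packaging buys is a repair the paper actually needs. Since $\langle ij\rangle\in\mathfrak{g}$, every $k\in\eta_i(\mathfrak{g})$ satisfies $d_{kj}(\mathfrak{g})\le d_{ki}(\mathfrak{g})+d_{ij}(\mathfrak{g})=2$, so the strict case $d_{kj}(\mathfrak{g})>2$ of Proposition \ref{lemma:local-connection-delete-effect} is vacuous when its distances are read, as written, in $\mathfrak{g}$; literal termwise multiplication could therefore only ever yield equality of the two products. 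As you observe, the deletion statements are only meaningful with distances read in the post-deletion network $\mathfrak{g}-\langle ij\rangle$, and your reduction to the addition corollary makes that reading automatic rather than requiring a restatement of the proposition.

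Your second caveat, that the far agent must itself be a neighbour of $i$ in order to enter the product, is also essential, and it is a genuine correction to the hypothesis as printed (``at least one agent $k$ in $\mathfrak{g}$, different from $i$''). Concretely, take the four-cycle on $i,j,x,y$ with edges $\langle ij\rangle,\langle jx\rangle,\langle xy\rangle,\langle yi\rangle$, plus a pendant path given by edges $\langle xz\rangle,\langle zw\rangle$. Then $d_{wj}=3>2$ both before and after deleting $\langle ij\rangle$, so the printed hypothesis holds; yet the unique remaining neighbour $y$ of $i$ has all of its distances, hence $\Phi_y$ and $\alpha_{iy}$, unchanged by the deletion, so the two products are equal and the claimed strict inequality fails. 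Thus the corollary (and likewise Corollary \ref{corl:local-connection-add-net-effect}, which you invoke) is true exactly under the reading you adopt: there is some $k\in\eta_i(\mathfrak{g})\setminus\{j\}$ with $d_{kj}(\mathfrak{g}-\langle ij\rangle)>2$. With that reading, your argument is complete.
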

\begin{theorem}
Suppose $i$, $j$ and $k$ are distinct agents in $\mathfrak{g}$, such that $k \in \eta_{i}(\mathfrak{g})$. Then, the link $\langle ij\rangle$ is always strictly beneficial to $i$ as well as $k$, with respect to local resource availabilities from each other, if and only if $d_{kj} > 2$. 
\end{theorem}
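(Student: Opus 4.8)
The plan is to read ``strictly beneficial to $i$ as well as $k$, with respect to local resource availabilities from each other'' exactly as in the preceding theorem, namely that the presence versus absence of the link $\langle ij\rangle$ produces a \emph{strict} change in both $\alpha_{ik}$ and $\alpha_{ki}$, and that the absence of such a change means the two quantities coincide. Accordingly I would split the argument into the two ways the link can enter or leave the network and, in each, quote the trichotomy already established in Propositions~\ref{lemma:local-connection-add-effect} and \ref{lemma:local-connection-delete-effect} together with Corollaries~\ref{lemma:local-connection-add-effect-on-k} and \ref{lemma:local-connection-delete-effect-on-k}. Since $k\in\eta_i(\mathfrak{g})$ gives $d_{ik}=1$, we have $\alpha_{ik}=\frac{p(1-q)}{\Phi_k}$ throughout, so every statement about $\alpha_{ik}$ (and symmetrically $\alpha_{ki}$) reduces to a statement about how $\langle ij\rangle$ moves $\Phi_k$.

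First, when $\langle ij\rangle\notin\mathfrak{g}$ the comparison is between $\mathfrak{g}$ and $\mathfrak{g}+\langle ij\rangle$: Proposition~\ref{lemma:local-connection-add-effect} gives equality $\alpha_{ik}(\mathfrak{g})=\alpha_{ik}(\mathfrak{g}+\langle ij\rangle)$ when $k\in\eta_j(\mathfrak{g})$ (i.e. $d_{kj}=1$) or $d_{kj}(\mathfrak{g})=2$, and the strict inequality $\alpha_{ik}(\mathfrak{g})>\alpha_{ik}(\mathfrak{g}+\langle ij\rangle)$ when $d_{kj}(\mathfrak{g})>2$, with Corollary~\ref{lemma:local-connection-add-effect-on-k} giving the same trichotomy for $\alpha_{ki}$. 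When $\langle ij\rangle\in\mathfrak{g}$ the comparison is between $\mathfrak{g}$ and $\mathfrak{g}-\langle ij\rangle$, and Proposition~\ref{lemma:local-connection-delete-effect} together with Corollary~\ref{lemma:local-connection-delete-effect-on-k} yields the analogous trichotomy, the strict change occurring exactly when $d_{kj}>2$. Combining the two scenarios delivers both directions of the biconditional: $d_{kj}>2$ forces a strict change (the ``if'' direction, from case~3 of each cited result), whereas $d_{kj}\le 2$ forces equality (the contrapositive of the ``only if'' direction, from cases~1 and~2).

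The one point demanding care — and the only genuine obstacle — is to fix the network in which $d_{kj}$ is measured. When $\langle ij\rangle\in\mathfrak{g}$ and $k\in\eta_i(\mathfrak{g})$, the two-hop path $k\!-\!i\!-\!j$ forces $d_{kj}(\mathfrak{g})\le 2$, so the hypothesis ``$d_{kj}>2$'' is non-vacuous only if the distance is evaluated in the network \emph{without} the link, that is in $\mathfrak{g}$ in the addition scenario and in $\mathfrak{g}-\langle ij\rangle$ in the deletion scenario. I would therefore state the condition uniformly as $d_{kj}>2$ in the link-free network, check that this is precisely the distance whose reduction to $2$ (via $k\!-\!i\!-\!j$) strictly raises $\Phi_k$ upon inserting $\langle ij\rangle$, and then read the equivalence directly off the four results. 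With this convention in place the theorem is an immediate corollary of Propositions~\ref{lemma:local-connection-add-effect} and \ref{lemma:local-connection-delete-effect} and Corollaries~\ref{lemma:local-connection-add-effect-on-k} and \ref{lemma:local-connection-delete-effect-on-k}, exactly as the preceding theorem followed from its two propositions.
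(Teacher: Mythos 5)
Your proposal is correct in substance and follows the same route as the paper: the paper's entire proof of this theorem reads ``Follows from Propositions \ref{lemma:local-connection-add-effect} and \ref{lemma:local-connection-delete-effect}.'' You do, however, supply two things the paper leaves implicit, and both are worth keeping. First, the clause ``to $i$ as well as $k$, \ldots from each other'' requires the trichotomy for $\alpha_{ki}$ in addition to the one for $\alpha_{ik}$, so Corollaries \ref{lemma:local-connection-add-effect-on-k} and \ref{lemma:local-connection-delete-effect-on-k} must be invoked alongside the two propositions; the paper cites only the propositions. Second, your point about where $d_{kj}$ is measured is a genuine repair rather than pedantry: when $\langle ij\rangle\in\mathfrak{g}$ and $k\in\eta_i(\mathfrak{g})$, the path $k$--$i$--$j$ forces $d_{kj}(\mathfrak{g})\le 2$, so the hypothesis ``$d_{kj}>2$'' (and indeed case 3 of Proposition \ref{lemma:local-connection-delete-effect} as literally stated) is vacuous unless the distance is read in the link-free network $\mathfrak{g}-\langle ij\rangle$; the paper's unsubscripted $d_{kj}$ tacitly needs exactly the convention you spell out.

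One point you absorb silently should instead be flagged. You read ``strictly beneficial'' as ``produces a strict change,'' asserting this is ``exactly as in the preceding theorem''; it is not. In the preceding theorem the phrase meant the availability is strictly \emph{higher} with the link than without, and under that reading the present statement is contradicted by the very results you cite: when $d_{kj}>2$, Proposition \ref{lemma:local-connection-add-effect} gives $\alpha_{ik}(\mathfrak{g})>\alpha_{ik}(\mathfrak{g}+\langle ij\rangle)$, i.e.\ the presence of $\langle ij\rangle$ strictly \emph{lowers} both availabilities, so the link is strictly detrimental and it is its deletion that is beneficial. Your ``strict change'' reading is the only one under which the biconditional is true, but a proof should state that discrepancy explicitly rather than redefine the phrase en route; this is a defect of the theorem's wording that the paper's one-line proof never confronts, and your write-up, as it stands, papers over it in the same way.
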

\begin{proof}
Follows from Propositions 
\ref{lemma:local-connection-add-effect} and \ref{lemma:local-connection-delete-effect}. 
\end{proof}

\subsection{Global Resource Availability}\label{subsec:global-resource-availability}
In this section, we discuss our results on the global resource availability of agent $i$ in $\mathfrak{g}$ which is 
the  probability that $i$ obtains the resource from at least one agent in $\mathfrak{g}$. 

\subsubsection{Link Alteration and Global Resource Availability}\label{subsubsec:global-resource-availability}
In this section, we study the effect of link addition and  deletion on the global resource availability of agents who are involved in these actions. Global resource availability of agents not involved in the link addition or deletion, that is, spillover effect, is discussed in the next section.

In Section \ref{subsubsec:link-alteration-local-resource-availability}, we saw that the local resource availabilities of both agents involved in link addition increase, and decrease for both in the case of link deletion. However, this is not true when we look at the global resource availability of the agents involved. To understand this, we consider the following example.

\begin{example}\label{exmp:gra-resource-availability}
Consider the network $\mathfrak{g}$ as shown in Figure \ref{fig:gra-g-network-link-deletion-example}. If agent $i$ decides to delete the existing link with agent $j$, we have the network $\mathfrak{g}-\langle ij\rangle$ as shown in Figure \ref{fig:gra-g-ij-network-link-deletion-example}. The resource availabilities of agents in these networks are tabulated in Table \ref{table:gra-link-deletion-addition-example} under "Link Deletion". This table shows that agent $i$ benefits by deleting an existing link with $j$ as its resource availability increases by ${\textcolor{blue}{0.00058}}$. However, agent $j$\textquotesingle s resource availability decreases by ${\textcolor{red}{0.13142}}$.

To understand that link addition is not beneficial for an agent, we reverse the above situation. That is, we have network $\mathfrak{g}'$ as shown in Figure \ref{fig:gra-g-ij-network-link-deletion-example}. Now, if agent $i$ decides to add a direct link with agent $j$, we have network $\mathfrak{g}'+\langle ij\rangle$, as shown in Figure \ref{fig:gra-g-network-link-deletion-example}, as a result. The resource availabilities of agents in these networks are tabulated in Table \ref{table:gra-link-deletion-addition-example} under "Link Addition". Here, $i$\textquotesingle s resource availability decreases by ${\textcolor{red}{0.00058}}$ on adding a direct link with $j$, whereas $j$\textquotesingle s resource availability increases by ${\textcolor{blue}{0.13142}}$.
\end{example}
\begin{figure*}[ht!]
\centering
\subfigure[Network $\mathfrak{g}=\mathfrak{g}'+\langle ij\rangle$]
{
\includegraphics[scale=0.20]{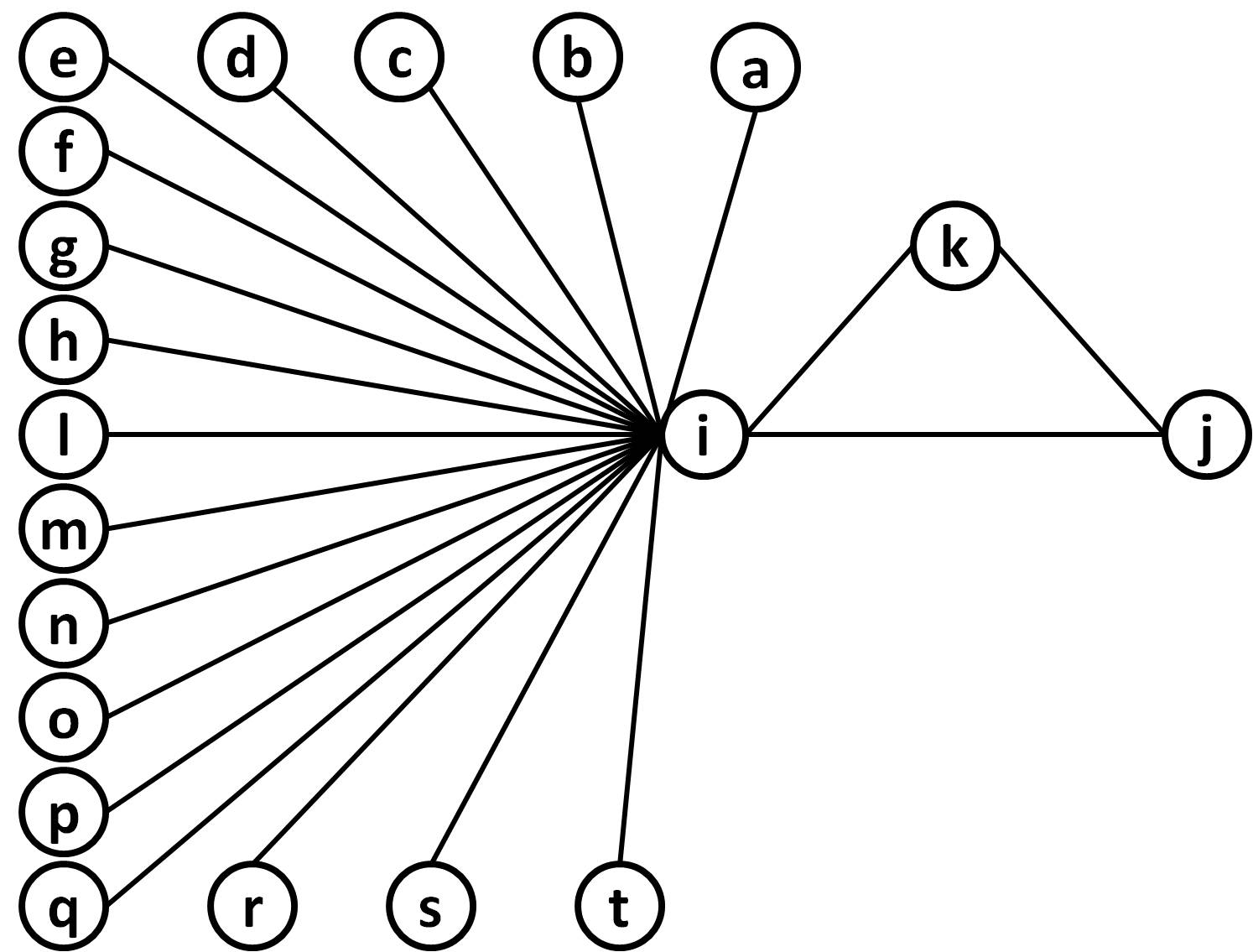} 
\label{fig:gra-g-network-link-deletion-example}
}
\quad \quad \quad
\subfigure[Network $\mathfrak{g}'=\mathfrak{g}-\langle ij\rangle$]
{
\includegraphics[scale=0.20]{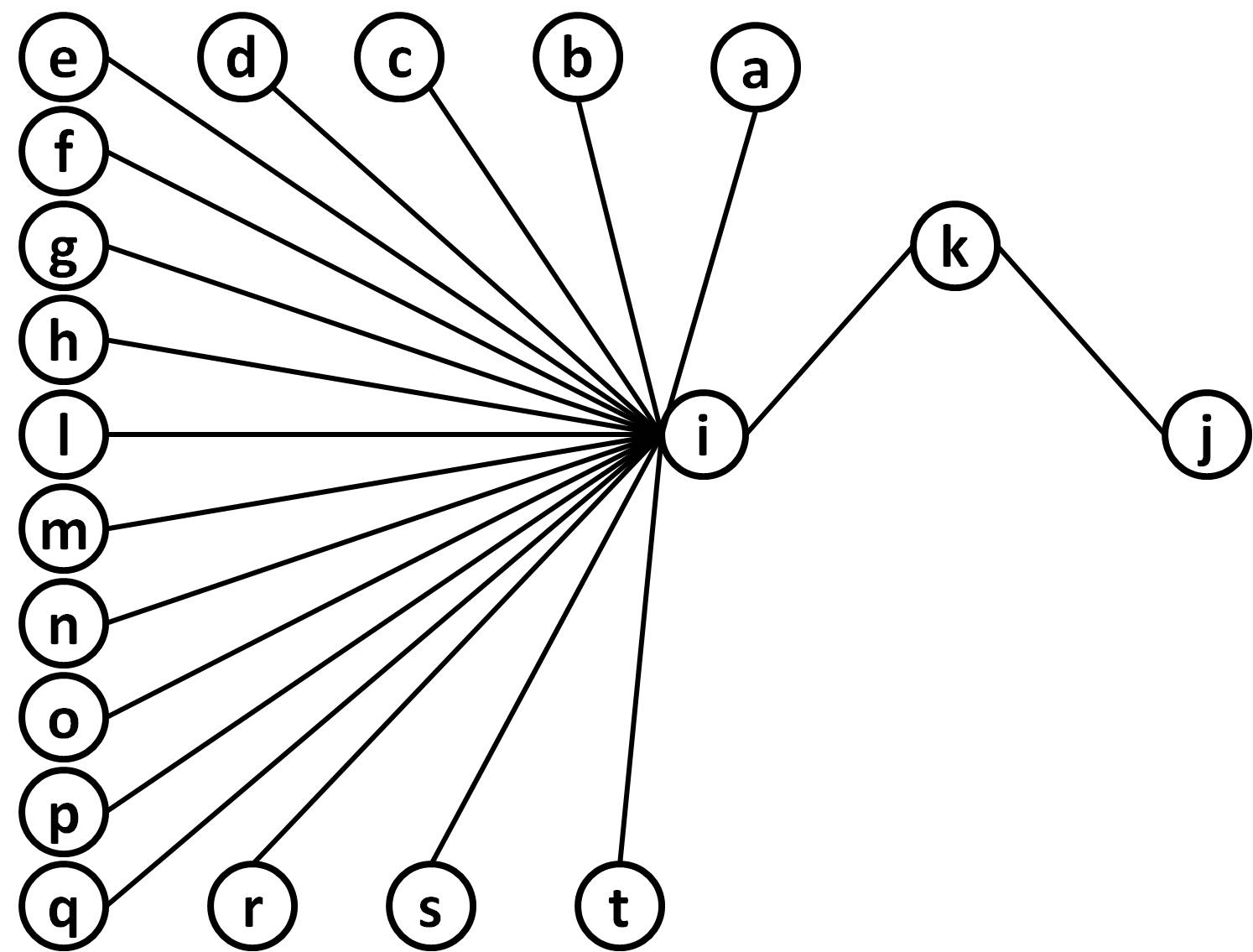} 
\label{fig:gra-g-ij-network-link-deletion-example}
}
\caption{Link addition and deletion}
\label{fig:externalities-first-example1}
\end{figure*}

\begin{table*}[ht!]
\begin{center}
\caption{Link deletion and addition, and global resource availability}
\label{table:gra-link-deletion-addition-example}
\begin{tabular}{|l|l|l|l|l|l|l|}
\hline
 & \multicolumn{3}{c|}{Link Deletion} & \multicolumn{3}{c|}{Link Addition} \\ \hline
\multicolumn{1}{|c|}{Agent} & \multicolumn{1}{c|}{$\gamma_i(\mathfrak{g})$} & \multicolumn{1}{c|}{$\gamma_i(\mathfrak{g}-\langle ij\rangle)$} & $\gamma_i(\mathfrak{g}-\langle ij\rangle)-\gamma_i(\mathfrak{g})$ & \multicolumn{1}{c|}{$\gamma_i(\mathfrak{g}')$} & \multicolumn{1}{c|}{$\gamma_i(\mathfrak{g}'+\langle ij \rangle)$} & \multicolumn{1}{c|}{$\gamma_i(\mathfrak{g}'+\langle ij\rangle)-\gamma_i(\mathfrak{g}')$} \\ \hline
$a$ & 0.62729 & 0.62180 & 0.00549 & 0.62180 & 0.62729 & -0.00549 \\ \hline
$.$ & . & . & . & . & . & . \\ 
$.$ & . & . & . & . & . & . \\ 
$.$ & . & . & . & . & . & . \\ \hline
$t$ & 0.62729 & 0.62180 & 0.00549 & 0.62180 & 0.62729 & -0.00549 \\ \hline
$i$ & 0.86406 & 0.86348 & \textcolor{blue}{0.00058} & 0.86348 & 0.86406 & \textcolor{red}{-0.00058} \\ \hline
$k$ & 0.66479 & 0.64161 & 0.02318 & 0.64161 & 0.66479 & -0.02318 \\ \hline
$j$ & 0.51019 & 0.64161 & \textcolor{red}{-0.13142} & 0.64161 & 0.51019 & \textcolor{blue}{0.13142} \\ \hline
\end{tabular}
\end{center}

\end{table*}

\subsubsection{Spillover Effect}\label{subsec:externalities}

In this section, our focus is to understand the effect that link addition or deletion between a pair of agents has, on the global resource availability of the other agents, that is, the spillover effect. 
\begin{definition}\label{defn:externalities}\citep{jackson-book}
Suppose $i, j$ are distinct agents in $\mathfrak{g}$, such that $\langle ij \rangle \notin \mathfrak{g}$. If agents $i$ and $j$ add the link $\langle ij\rangle$, then, agent $k\in \mathfrak{g}\setminus\{i, j\}$ experiences
				\begin{enumerate}
					\item positive spillover effect due to link addition if \\$u_{k}(\mathfrak{g}+ \langle ij\rangle) > u_{k}(\mathfrak{g})$,
					\item negative spillover effect due to link addition  if \\$u_{k}(\mathfrak{g}+ \langle ij \rangle) < u_{k}(\mathfrak{g})$, and 
					\item no spillover effect due to link addition if \\$u_{k}(\mathfrak{g} +\langle ij \rangle)=u_{k}(\mathfrak{g})$.
				\end{enumerate}
\end{definition}
\cite{Pramod-Games} show that the externalities (spillover effect) that agents experience is determined by their global resource availability. For the sake of completeness, we state the result in the context of resource sharing networks. 
\begin{proposition}\label{prop:externalities} 
Suppose $i, j$ are distinct agents in $\mathfrak{g}$, such that $\langle ij \rangle \notin \mathfrak{g}$. If $i$ and $j$ add the link $\langle ij\rangle$ then, agent $k \in \mathfrak{g} \setminus \{i, j\}$ experiences
				\begin{enumerate}
					\item positive spillover effect due to link addition  if \\ $\gamma_{k}(\mathfrak{g}+ \langle ij\rangle) > \gamma_{k}(\mathfrak{g})$,
					\item negative spillover effect due to link addition  if \\ $\gamma_{k}(\mathfrak{g}+ \langle ij\rangle) < \gamma_{k}(\mathfrak{g})$, and
					\item no spillover effect due to link addition  if \\ $\gamma_{k}(\mathfrak{g} +\langle ij\rangle)=\gamma_{k}(\mathfrak{g})$.
				\end{enumerate}
\end{proposition}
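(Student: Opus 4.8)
The plan is to reduce the three utility-based conditions of Definition~\ref{defn:externalities} to the corresponding conditions on global resource availability by substituting the explicit utility formula \eqref{eq:main-utility} for agent $k$ into both $\mathfrak{g}$ and $\mathfrak{g}+\langle ij\rangle$ and comparing. Writing out
\[
u_{k}(\mathfrak{g})=p(1-q)\xi_k+q\bigl[p+(1-p)\gamma_{k}(\mathfrak{g})\bigr]\theta_k-\varsigma\,\eta_{k}(\mathfrak{g}),
\]
and the analogous expression for $u_{k}(\mathfrak{g}+\langle ij\rangle)$, I would observe that the additive benefit term $p(1-q)\xi_k$, the parameters $p,q,\theta_k$, and the self-reliance contribution $qp\,\theta_k$ are all intrinsic to $k$ and hence identical in the two networks. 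The only quantities that could differ are the global resource availability $\gamma_k$ and the degree $\eta_k$.

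The crucial step is to note that because $k \in \mathfrak{g}\setminus\{i,j\}$, the newly added link $\langle ij\rangle$ is not incident to $k$, so $k$'s set of direct links is untouched; that is, $\eta_{k}(\mathfrak{g}+\langle ij\rangle)=\eta_{k}(\mathfrak{g})$, and the cost term cancels. Subtracting the two utilities therefore collapses to
\[
u_{k}(\mathfrak{g}+\langle ij\rangle)-u_{k}(\mathfrak{g})=q(1-p)\theta_k\bigl[\gamma_{k}(\mathfrak{g}+\langle ij\rangle)-\gamma_{k}(\mathfrak{g})\bigr].
\]
Since the coefficient $q(1-p)\theta_k$ is strictly positive (the probabilities $p,q$ lie strictly between $0$ and $1$, and the benefit $\theta_k$ is positive), the sign of the utility change equals the sign of the change in $\gamma_k$. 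Reading off the three cases then yields each of the three claims: $u_k$ strictly increases, strictly decreases, or stays equal exactly when $\gamma_k$ does, which is precisely the statement.

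The proof is essentially an algebraic cancellation, so there is no deep obstacle; the one point that must be made carefully is the invariance $\eta_{k}(\mathfrak{g}+\langle ij\rangle)=\eta_{k}(\mathfrak{g})$, since it is this fact (and not the structure of $\gamma_k$) that eliminates the cost term and isolates the availability term. I would also be explicit that the positivity of $q(1-p)\theta_k$ is what licenses the sign-preserving step, flagging the mild standing assumptions on the parameters under which the equivalence holds.
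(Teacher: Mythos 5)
Your proof is correct and follows essentially the same route as the paper's: both arguments rest on the observation that the added link $\langle ij\rangle$ is not incident to $k$, so $\eta_{k}$ (and hence the cost term) is unchanged, leaving the sign of $u_{k}(\mathfrak{g}+\langle ij\rangle)-u_{k}(\mathfrak{g})$ determined by that of $\gamma_{k}(\mathfrak{g}+\langle ij\rangle)-\gamma_{k}(\mathfrak{g})$ via Eq.~(\ref{eq:main-utility}). You simply spell out the algebraic cancellation and the positivity of the coefficient $q(1-p)\theta_k$, which the paper leaves implicit.
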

\begin{proof}
Adding $\langle ij\rangle$ does not change the neighbourhood size of $k$. Therefore, from Eq. \ref{eq:main-utility}, $\gamma_{k}(\mathfrak{g})$ increases on adding $\langle ij\rangle$ if and only if $u_{k}(\mathfrak{g})$ increases. 
\end{proof}

Here, we also define the spillover effect when a link is deleted and show a similar result. 
\begin{definition}\label{defn:del-externalities}
Suppose $i, j$ are distinct agents in $\mathfrak{g}$, such that $\langle ij \rangle \in \mathfrak{g}$. If agents $i$ and $j$ delete the link $\langle ij\rangle$ then, agent $k\in \mathfrak{g}\setminus\{i, j\}$ experiences
				\begin{enumerate}
					\item positive spillover effect due to link deletion  if \\$u_{k}(\mathfrak{g}- \langle ij\rangle) > u_{k}(\mathfrak{g})$,
					\item negative spillover effect due to link deletion if \\$u_{k}(\mathfrak{g}- \langle ij \rangle) < u_{k}(\mathfrak{g})$, and 
					\item no spillover effect due to link deletion if \\$u_{k}(\mathfrak{g} +\langle ij \rangle)=u_{k}(\mathfrak{g})$.
				\end{enumerate}
\end{definition}
\begin{proposition}\label{prop:del-externalities}
Suppose $i, j$ are distinct agents in $\mathfrak{g}$, such that $\langle ij \rangle \in \mathfrak{g}$. If agents $i$ and $j$ delete the link $\langle ij\rangle$ then, agent $k \in \mathfrak{g} \setminus \{i, j\}$ experiences
				\begin{enumerate}
					\item positive spillover effect due to link deletion if \\$\gamma_{k}(\mathfrak{g}- \langle ij\rangle) > \gamma_{k}(\mathfrak{g})$,
					\item negative spillover effect due to link deletion if \\$\gamma_{k}(\mathfrak{g}- \langle ij\rangle) < \gamma_{k}(\mathfrak{g})$, and
					\item no spillover effect due to link deletion if \\$\gamma_{k}(\mathfrak{g} -\langle ij\rangle)=\gamma_{k}(\mathfrak{g})$.
				\end{enumerate}
\end{proposition}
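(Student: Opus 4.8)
The plan is to mirror the proof of Proposition \ref{prop:externalities} almost verbatim, since the structure of the utility function makes this a direct consequence of how $\gamma_k$ enters $u_k$. The key observation, as in the link-addition case, is that deleting the link $\langle ij\rangle$ affects neither agent $k$'s neighbourhood nor the link cost it pays: since $k \neq i$ and $k \neq j$, the number of direct links $\eta_k(\mathfrak{g})$ is identical in $\mathfrak{g}$ and $\mathfrak{g}-\langle ij\rangle$. Consequently the term $-\varsigma\,\eta_k(\mathfrak{g})$ in Eq. (\ref{eq:main-utility}) is unchanged, as are the constant terms $p(1-q)\xi_k$ and $qp\theta_k$.

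First I would write out the utility $u_k$ from Eq. (\ref{eq:main-utility}) evaluated at both $\mathfrak{g}$ and $\mathfrak{g}-\langle ij\rangle$ and subtract. Everything cancels except the single term $q(1-p)\gamma_k(\mathfrak{g})\theta_k$, so that
\begin{equation*}
u_k(\mathfrak{g}-\langle ij\rangle) - u_k(\mathfrak{g}) = q(1-p)\theta_k\bigl(\gamma_k(\mathfrak{g}-\langle ij\rangle) - \gamma_k(\mathfrak{g})\bigr).
\end{equation*}
Since $q$, $(1-p)$, and $\theta_k$ are all strictly positive (probabilities and benefit), the coefficient $q(1-p)\theta_k$ is strictly positive, so the sign of $u_k(\mathfrak{g}-\langle ij\rangle) - u_k(\mathfrak{g})$ agrees exactly with the sign of $\gamma_k(\mathfrak{g}-\langle ij\rangle) - \gamma_k(\mathfrak{g})$. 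This immediately yields all three equivalences in the statement: the utility rises, falls, or stays fixed precisely when the global resource availability rises, falls, or stays fixed.

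Because the argument is a sign-preserving linear relationship rather than anything structural about the network, there is no real obstacle here; the only point requiring a line of justification is the invariance of $\eta_k$ under deletion of a link not incident to $k$, which is immediate from the definition of the neighbourhood. I would phrase the proof in one or two sentences, exactly paralleling Proposition \ref{prop:externalities}: deleting $\langle ij\rangle$ does not change the neighbourhood size of $k$, hence from Eq. (\ref{eq:main-utility}), $u_k$ changes if and only if $\gamma_k$ changes, and in the same direction. The one subtlety worth noting is that the ``no spillover'' clause in Definition \ref{defn:del-externalities} contains an apparent typo ($u_k(\mathfrak{g}+\langle ij\rangle)$ rather than $u_k(\mathfrak{g}-\langle ij\rangle)$), but this does not affect the proof of the proposition as stated.
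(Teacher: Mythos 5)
Your proposal is correct and is essentially the paper's own proof: the paper simply states that deleting $\langle ij\rangle$ leaves $\eta_k$ unchanged, so by Eq.~(\ref{eq:main-utility}) the sign of the change in $u_k$ matches that of the change in $\gamma_k$ (the paper literally defers to the proof of Proposition~\ref{prop:externalities}, which is this one-line argument). Your explicit computation of the utility difference and the remark about the typo in Definition~\ref{defn:del-externalities} are accurate elaborations, not a different route.
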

\begin{proof}
Similar to the proof of Proposition \ref{prop:externalities}. 
\end{proof}
Now, with the above background, we study how a newly added link affects the chance of an agent obtaining the resource from the other agents.
\begin{lemma}\label{lemma:all-probability}
Suppose $i, j, k, l$ are distinct agents in $\mathfrak{g}$, such that $\langle kl \rangle \not\in \mathfrak{g}$. Then, \begin{enumerate}
\item If $d_{ij}(\mathfrak{g})=d_{ij}(\mathfrak{g}+\langle  kl\rangle)$, the following hold: 
\begin{enumerate}
\item If $\Phi_j(\mathfrak{g})=\Phi_j(\mathfrak{g}+\langle  kl\rangle)$ then \\ $\alpha_{ij}(\mathfrak{g})=\alpha_{ij}(\mathfrak{g}+\langle kl\rangle)$.
\item If $\Phi_j(\mathfrak{g})<\Phi_j(\mathfrak{g}+\langle  kl\rangle)$ then \\ $\alpha_{ij}(\mathfrak{g})>\alpha_{ij}(\mathfrak{g}+\langle kl\rangle)$.
\end{enumerate}
\item If $d_{ij}(\mathfrak{g})>d_{ij}(\mathfrak{g}+\langle  kl\rangle)$, the following hold: 
\begin{enumerate}
\item If $\frac{d_{ij}(\mathfrak{g}+\langle  kl\rangle)}{d_{ij}(\mathfrak{g})}>\frac{\Phi_j(\mathfrak{g})}{\Phi_j(\mathfrak{g}+\langle  kl\rangle)}$ then \\ $\alpha_{ij}(\mathfrak{g})>\alpha_{ij}(\mathfrak{g}+\langle kl\rangle)$.
\item If $\frac{d_{ij}(\mathfrak{g}+\langle  kl\rangle)}{d_{ij}(\mathfrak{g})}<\frac{\Phi_j(\mathfrak{g})}{\Phi_j(\mathfrak{g}+\langle  kl\rangle)}$ then \\ $\alpha_{ij}(\mathfrak{g})<\alpha_{ij}(\mathfrak{g}+\langle kl\rangle)$.
\end{enumerate}
\end{enumerate}
\end{lemma}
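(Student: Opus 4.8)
The plan is to reduce every case to a comparison of the two denominators appearing in the closed form of $\alpha_{ij}$. Recall from Eq.~(\ref{eq:agent-probability}), together with the symmetry $d_{ij}=d_{ji}$ exploited in the proof of Lemma~\ref{prelemma:link-addition-local-resource-availability}, that $\alpha_{ij}(\mathfrak{g})=\frac{p(1-q)}{d_{ij}(\mathfrak{g})\,\Phi_{j}(\mathfrak{g})}$, and the same identity holds in $\mathfrak{g}+\langle kl\rangle$. Since $p(1-q)>0$ is a fixed positive constant and all the distances and closeness values involved are strictly positive and finite (the disconnected case $d_{ij}=\infty$ being handled as in the Remark following Eq.~(\ref{eq:gra}), where $\alpha_{ij}=0$), the quantity $\alpha_{ij}$ is a strictly decreasing function of the product $d_{ij}\,\Phi_{j}$. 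Hence for every item it suffices to compare the two products $d_{ij}(\mathfrak{g})\,\Phi_{j}(\mathfrak{g})$ and $d_{ij}(\mathfrak{g}+\langle kl\rangle)\,\Phi_{j}(\mathfrak{g}+\langle kl\rangle)$: one has $\alpha_{ij}(\mathfrak{g})>\alpha_{ij}(\mathfrak{g}+\langle kl\rangle)$ precisely when the first product is the smaller one, and analogously for equality and for the reverse inequality.

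First I would dispose of Case~1, where $d_{ij}(\mathfrak{g})=d_{ij}(\mathfrak{g}+\langle kl\rangle)=:d$. The two products then differ only through the closeness factor, so after cancelling the common $d>0$ the comparison of products becomes the comparison of $\Phi_{j}(\mathfrak{g})$ with $\Phi_{j}(\mathfrak{g}+\langle kl\rangle)$. Item~1(a) is immediate (equal closeness gives equal products, hence equal $\alpha$), and for item~1(b) the strict increase of $\Phi_{j}$ makes the $\mathfrak{g}$-product strictly smaller, so $\alpha_{ij}(\mathfrak{g})>\alpha_{ij}(\mathfrak{g}+\langle kl\rangle)$.

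Next I would treat Case~2, where $d_{ij}(\mathfrak{g})>d_{ij}(\mathfrak{g}+\langle kl\rangle)$, so both factors in the product now move. Here I would rearrange the hypothesised ratio inequalities into product inequalities by cross-multiplying, which is legitimate because all four of $d_{ij}(\mathfrak{g})$, $d_{ij}(\mathfrak{g}+\langle kl\rangle)$, $\Phi_{j}(\mathfrak{g})$, $\Phi_{j}(\mathfrak{g}+\langle kl\rangle)$ are strictly positive. The hypothesis of item~2(a), $\frac{d_{ij}(\mathfrak{g}+\langle kl\rangle)}{d_{ij}(\mathfrak{g})}>\frac{\Phi_{j}(\mathfrak{g})}{\Phi_{j}(\mathfrak{g}+\langle kl\rangle)}$, is equivalent to $d_{ij}(\mathfrak{g}+\langle kl\rangle)\,\Phi_{j}(\mathfrak{g}+\langle kl\rangle)>d_{ij}(\mathfrak{g})\,\Phi_{j}(\mathfrak{g})$, i.e. the $\mathfrak{g}$-product is the smaller one, giving $\alpha_{ij}(\mathfrak{g})>\alpha_{ij}(\mathfrak{g}+\langle kl\rangle)$; item~2(b) follows the same way with the inequality reversed. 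It is worth remarking (though not needed) that, by Lemma~\ref{lemma:g=g+kl}, a strict decrease in $d_{ij}$ forces $\Phi_{j}$ to strictly increase, so both ratios lie in $(0,1)$ and the two items precisely capture which of the two opposing effects dominates.

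I do not anticipate a genuine obstacle: once the closed form $\alpha_{ij}=\frac{p(1-q)}{d_{ij}\,\Phi_{j}}$ is in hand, every item is a one-line monotonicity argument. The only points requiring care are the bookkeeping of using $d_{ij}=d_{ji}$ to land on $\Phi_{j}$ (rather than $\Phi_{i}$) in the denominator, and verifying that the positivity and finiteness assumptions licensing the cross-multiplication hold in the boundary situations, in particular $d_{ij}(\mathfrak{g})=\infty$, where $\alpha_{ij}(\mathfrak{g})=0$ and the desired inequalities hold trivially.
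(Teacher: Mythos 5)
Your proof is correct and takes essentially the same route as the paper: the paper's entire proof is the one-line observation that $\alpha_{ij}(\mathfrak{g})=\frac{p(1-q)}{d_{ij}(\mathfrak{g})\,\Phi_j(\mathfrak{g})}$ and $\alpha_{ij}(\mathfrak{g}+\langle kl\rangle)=\frac{p(1-q)}{d_{ij}(\mathfrak{g}+\langle kl\rangle)\,\Phi_j(\mathfrak{g}+\langle kl\rangle)}$, with the comparison of the two denominators left implicit. Your case analysis and cross-multiplication merely spell out the elementary monotonicity argument that the paper omits.
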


\begin{proof}
The proof follows as\\
$\alpha_{ij}(\mathfrak{g})=\frac{1}{d_{ij}(\mathfrak{g})\Phi_j(\mathfrak{g})}$ and $\alpha_{ij}(\mathfrak{g}+\langle  kl\rangle)=\frac{1}{d_{ij}(\mathfrak{g}+\langle  kl\rangle)\Phi_j(\mathfrak{g}+\langle  kl\rangle)}$.
\end{proof}
{Further, in this paper, we study the role of closeness in determining spillover effects of agents, and show that agents always experience either positive or negative spillover effect, and the case of "no spillover effect" never happens.} \\
\begin{proposition}\label{prop:negative-externalities}
Suppose $i, j$ are distinct agents in $\mathfrak{g}$, such that $\langle ij \rangle \notin \mathfrak{g}$. An agent $k \in \mathfrak{g}\setminus\{i, j\}$ experiences negative spillover effect on addition of the link $\langle ij\rangle$, if $\Phi_{k}(\mathfrak{g})=\Phi_{k}(\mathfrak{g}+\langle ij\rangle)$.
\end{proposition}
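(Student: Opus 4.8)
The plan is to recast the statement in terms of global resource availability and then show that every factor of $\gamma_k$ moves in the direction that lowers it. By Proposition \ref{prop:externalities}, $k$ experiences negative spillover precisely when $\gamma_k(\mathfrak{g}+\langle ij\rangle)<\gamma_k(\mathfrak{g})$, so it suffices to establish this strict inequality. Recall from Eq. (\ref{eq:gra}) that $\gamma_k(\mathfrak{g})=1-\prod_{m\in\mathfrak{g}\setminus\{k\}}\bigl(1-\alpha_{km}(\mathfrak{g})\bigr)$, where $\alpha_{km}(\mathfrak{g})=\frac{p(1-q)}{d_{km}(\mathfrak{g})\,\Phi_m(\mathfrak{g})}$. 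Since $\gamma_k$ is strictly increasing in each $\alpha_{km}$, the goal reduces to showing that no $\alpha_{km}$ increases and that at least one strictly decreases.

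First I would unpack the hypothesis $\Phi_k(\mathfrak{g})=\Phi_k(\mathfrak{g}+\langle ij\rangle)$. Writing $\Phi_k=\sum_{m\neq k}1/d_{km}$ and using that adding a link never lengthens a shortest path (so $d_{km}(\mathfrak{g}+\langle ij\rangle)\le d_{km}(\mathfrak{g})$ and each reciprocal term weakly increases), equality of the two sums forces every term to be unchanged. That is, $d_{km}(\mathfrak{g}+\langle ij\rangle)=d_{km}(\mathfrak{g})$ for all $m\neq k$; this is exactly the first alternative of Remark \ref{rmk:conditions-remark} applied to $k$. In particular the numerators $d_{ki}$ and $d_{kj}$ in the terms of $\gamma_k$ are left unchanged.

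Next I would control the closeness factors $\Phi_m$ in the providers' terms. Because $\langle ij\rangle$ is newly added, $d_{ij}$ falls from a value in $\{2,3,\dots\}$ to $1$, so the argument of Lemma \ref{lemma:g=g+kl} gives $\Phi_i(\mathfrak{g})<\Phi_i(\mathfrak{g}+\langle ij\rangle)$ and $\Phi_j(\mathfrak{g})<\Phi_j(\mathfrak{g}+\langle ij\rangle)$, while every other $\Phi_m$ is non-decreasing under link addition. Feeding this into $\alpha_{km}=\frac{p(1-q)}{d_{km}\Phi_m}$, with the numerator $d_{km}$ now fixed for every $m$, each $\alpha_{km}$ is non-increasing, and the term for $m=i$ (likewise $m=j$) strictly decreases since $d_{ki}$ is finite and fixed while $\Phi_i$ strictly increases. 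Hence $\prod_{m\neq k}\bigl(1-\alpha_{km}\bigr)$ strictly increases and $\gamma_k$ strictly decreases, which is negative spillover.

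The delicate point, and the main obstacle, is the strictness in the final step: it needs at least one provider whose closeness strictly rises to be reachable from $k$, i.e.\ $d_{ki}<\infty$ or $d_{kj}<\infty$. This holds whenever $k$ lies in the same component as $i$ or $j$. If instead $k$ sits in a component containing neither $i$ nor $j$ -- which can occur only when $\mathfrak{g}$ is disconnected into more than two components -- then every $\alpha_{km}$ visible to $k$ is untouched and $k$ feels no spillover at all. I would therefore run the argument under the (implicit) assumption that $k$ is connected to $i$ or $j$, which is consistent with the paper's dichotomy that the ``no spillover'' outcome arises exactly in the many-component disconnected case.
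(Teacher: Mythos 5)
Your proof is correct and follows essentially the same route as the paper's: the unchanged closeness of $k$ forces every distance from $k$ to be unchanged, the closeness of $i$ and $j$ strictly rises, so $\alpha_{ki}$ and $\alpha_{kj}$ strictly fall while every other $\alpha_{km}$ weakly falls, and hence $\gamma_{k}$ strictly drops; Proposition \ref{prop:externalities} converts this into negative spillover. The connectivity caveat you flag is genuine and is something the paper's own proof glosses over: it silently assumes $d_{ki}$ and $d_{kj}$ are finite when asserting the strict inequalities $\alpha_{ki}(\mathfrak{g})>\alpha_{ki}(\mathfrak{g}+\langle ij\rangle)$ and $\alpha_{kj}(\mathfrak{g})>\alpha_{kj}(\mathfrak{g}+\langle ij\rangle)$, whereas if $k$ lies in a component containing neither $i$ nor $j$ (possible only when $\mathfrak{g}$ has three or more components) the hypothesis $\Phi_{k}(\mathfrak{g})=\Phi_{k}(\mathfrak{g}+\langle ij\rangle)$ still holds yet $k$ experiences no spillover --- precisely the situation the paper itself records in its corollary on disconnected networks with at least three components, so your explicit restriction to $k$ reachable from $i$ or $j$ is the right repair.
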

\begin{proof}
Let agent $i$ and $j$ add a direct link in $\mathfrak{g}$. This new link $\langle ij\rangle$ reduces their distance by at least $1$, and by at most $d_{ij}(\mathfrak{g})-1$ in $\mathfrak{g}+\langle ij\rangle$. Thus, their closeness increases by at least $\frac{d_{ij}(\mathfrak{g})-1}{d_{ij}(\mathfrak{g})}$ in $\mathfrak{g}+\langle ij\rangle$.\\

\noindent Therefore, $\Phi_{i}(\mathfrak{g})<\Phi_{i}(\mathfrak{g}+\langle ij\rangle)$ and $\Phi_{j}(\mathfrak{g})<\Phi_{j}(\mathfrak{g}+\langle ij\rangle)$.\\

\noindent As $\Phi_{k}(\mathfrak{g})=\Phi_{k}(\mathfrak{g}+\langle ij\rangle)$, we have $d_{kl}(\mathfrak{g})=d_{kl}(\mathfrak{g}+\langle ij\rangle)$, for all $l\in \mathfrak{g}$. In particular, $d_{ik}(\mathfrak{g})=d_{ik}(\mathfrak{g}+\langle ij\rangle)$. \\

\noindent Therefore, $\alpha_{ki}(\mathfrak{g})>\alpha_{ki}(\mathfrak{g}+\langle ij\rangle)$. Similarly, \\$\alpha_{kj}(\mathfrak{g})>\alpha_{kj}(\mathfrak{g}+\langle ij\rangle)$. Hence, \\
\noindent $(1-\alpha_{ki}(\mathfrak{g}))(1-\alpha_{kj}(\mathfrak{g}))<(1-\alpha_{ki}(\mathfrak{g}+\langle ij\rangle))(1-\alpha_{kj}(\mathfrak{g}+\langle ij\rangle))$.\\

\noindent This implies $\gamma_{k}(\mathfrak{g})>\gamma_{k}(\mathfrak{g}+\langle ij\rangle)$, as 
$\alpha_{kl}(\mathfrak{g})\geq \alpha_{kl}(\mathfrak{g}+\langle ij\rangle)$ for all $l\in \mathfrak{g}\setminus\{i, j\}$ too
(since $\Phi_{l}(\mathfrak{g})\leq \Phi_{l}(\mathfrak{g}+\langle ij\rangle)$).\qedhere

\end{proof}
\begin{corollary}\label{exmp:2-diameter} 
Suppose $i, j$ are distinct agents in $\mathfrak{g}$, such that $\langle ij \rangle \notin \mathfrak{g}$.  $\Phi_k(\mathfrak{g})=\Phi_k(\mathfrak{g}+\langle ij\rangle)$ for all $ k\in \mathfrak{g}\setminus\{i, j\}$ and hence, all agents experience only negative spillover effect if $\langle ij\rangle$ is added.
\end{corollary}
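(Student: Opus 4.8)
The plan is to reduce the statement to Proposition~\ref{prop:negative-externalities}, whose hypothesis is exactly that the closeness of every bystander $k$ is left unchanged by the link addition. So the real content here is the first assertion of the corollary, namely that in a two-diameter network adding the missing link $\langle ij\rangle$ gives $\Phi_k(\mathfrak{g})=\Phi_k(\mathfrak{g}+\langle ij\rangle)$ for every $k\in\mathfrak{g}\setminus\{i,j\}$; the "negative spillover" conclusion will then follow instantly by quoting Proposition~\ref{prop:negative-externalities} for each such $k$.

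First I would record the two structural facts supplied by the two-diameter hypothesis. Since $1\le d_{ab}(\mathfrak{g})\le 2$ for every pair $a,b$ and $\langle ij\rangle\notin\mathfrak{g}$, we must have $d_{ij}(\mathfrak{g})=2$, and adding the edge drops only this one distance, to $1$. By the harmonic-closeness formula~(\ref{eq:harmonic-closeness}), $\Phi_k$ is unchanged if and only if $d_{kl}(\mathfrak{g}+\langle ij\rangle)=d_{kl}(\mathfrak{g})$ for all $l$, so it suffices to show that adding $\langle ij\rangle$ changes no distance $d_{kl}$ with $k\neq i,j$. Moreover, adding an edge can never increase a distance, so I only need to rule out a strict decrease.

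The key step, and the only place a genuine argument is needed, is to show that the new edge creates no strictly shorter route for such a pair. Any shortest path that actually uses $\langle ij\rangle$ must traverse that edge, so it decomposes as a subpath from $k$ to one endpoint of $\langle ij\rangle$, the edge itself, and a subpath from the other endpoint to the target $l$, where both flanking subpaths use only old edges. When $l\notin\{i,j\}$, the subpath from $k$ (recall $k\neq i,j$) to the near endpoint has length at least $1$ and the subpath to $l$ has length at least $1$, so the whole route has length at least $3$. When $l\in\{i,j\}$ one flanking subpath may have length $0$, but the other still has length at least $1$, giving a route of length at least $2$. In every case the route through the new edge is no shorter than the pre-existing distance, which was already at most $2$ by the two-diameter hypothesis. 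Hence no $d_{kl}$ decreases, $\Phi_k(\mathfrak{g})=\Phi_k(\mathfrak{g}+\langle ij\rangle)$ for all $k\neq i,j$, and Proposition~\ref{prop:negative-externalities} then yields negative spillover for every such $k$.

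I expect the main obstacle to be phrasing the "length at least $3$" bound cleanly rather than by ad hoc case inspection: one must be careful that a distance-realising walk through $\langle ij\rangle$ really does split into two old-edge pieces flanking the new edge, and that the boundary targets $l\in\{i,j\}$ are handled separately so that only the weaker bound (length at least $2$) is invoked there. Once that bookkeeping is stated precisely, everything else is immediate.
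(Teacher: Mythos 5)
Your proposal is correct and follows exactly the route the paper intends: the paper states this result as an unproved corollary of Proposition~\ref{prop:negative-externalities}, and your argument supplies the missing verification that in a two-diameter network (the hypothesis omitted from the statement itself but implicit in its label and in the introduction) no distance $d_{kl}$ with $k\notin\{i,j\}$ changes when $\langle ij\rangle$ is added, after which the proposition applies to every bystander $k$. The only remark is that your path-decomposition bookkeeping can be compressed: since every distance in $\mathfrak{g}$ is at most $2$, a strict decrease would force some $d_{kl}$ to drop from $2$ to $1$, i.e., $k$ and $l$ would have to become adjacent, and the only new adjacency created is $\{i,j\}$ itself.
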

\begin{remark}\label{remark:sufficiency}
Proposition \ref{prop:negative-externalities} shows that an increase in agents' closeness is necessary for them to experience positive spillover effect. However, the increase in agents' closeness is not a sufficient condition for positive spillover effect, as demonstrated by
the following example.
\end{remark}
\begin{example}\label{exmp:externalities}
Consider the Pedgetts's Florentine\_Families network \citep{BREIGER1986215} (showing business and marital ties of 16 agents) generated by Social Network Visualizer\footnote{https://socnetv.org/} tool from its known data set, as shown in Figure \ref{fig:externalities-example-network-g}. Call this network  $\mathfrak{g}$. Suppose Medici and Strozzi add a link in $\mathfrak{g}$ resulting in the network $\mathfrak{g}' = \mathfrak{g}+\langle Medici, Strozzi\rangle$, as shown in Figure \ref{fig:externalities-example-network-g+ij}. From Eq. (\ref{eq:harmonic-closeness}), $\Phi_{Albizzi}(\mathfrak{g})=7.83$ and $\Phi_{Albizzi}(\mathfrak{g}')=8.00$. However, from Eq. (\ref{eq:gra}), $\gamma_{Albizzi}(\mathfrak{g})= 0.701$ and $\gamma_{Albizzi}(\mathfrak{g}')= 0.696$. This indicates that, although the closeness of Albizzi increases in $\mathfrak{g}'$, its global resource availability decreases. Similar is true in the case of Ginori and Pazzi. On the contrary, the newly added link between Medici and Strozzi increases not only Bischeri's closeness (from $\Phi_{Bischeri}(\mathfrak{g})=7.20$ to $\Phi_{Bischeri}(\mathfrak{g}')=7.58$), but also its global resource availability (from $ \gamma_{Bischeri}(\mathfrak{g})=0.657$ to $\gamma_{Bischeri}(\mathfrak{g}')= 0.662$). Similar holds for Acciaiuodi, Peruzzi and Salviati too. Note that $\gamma_{Pucci}(\mathfrak{g}) = \gamma_{Pucci}(\mathfrak{g}') = 0$.
\end{example}
\begin{figure*}[h!]
\centering
\subfigure[Network $\mathfrak{g} = \mathfrak{g}'-\langle Medici, Strozzi \rangle$]
{
\includegraphics[scale=0.27]{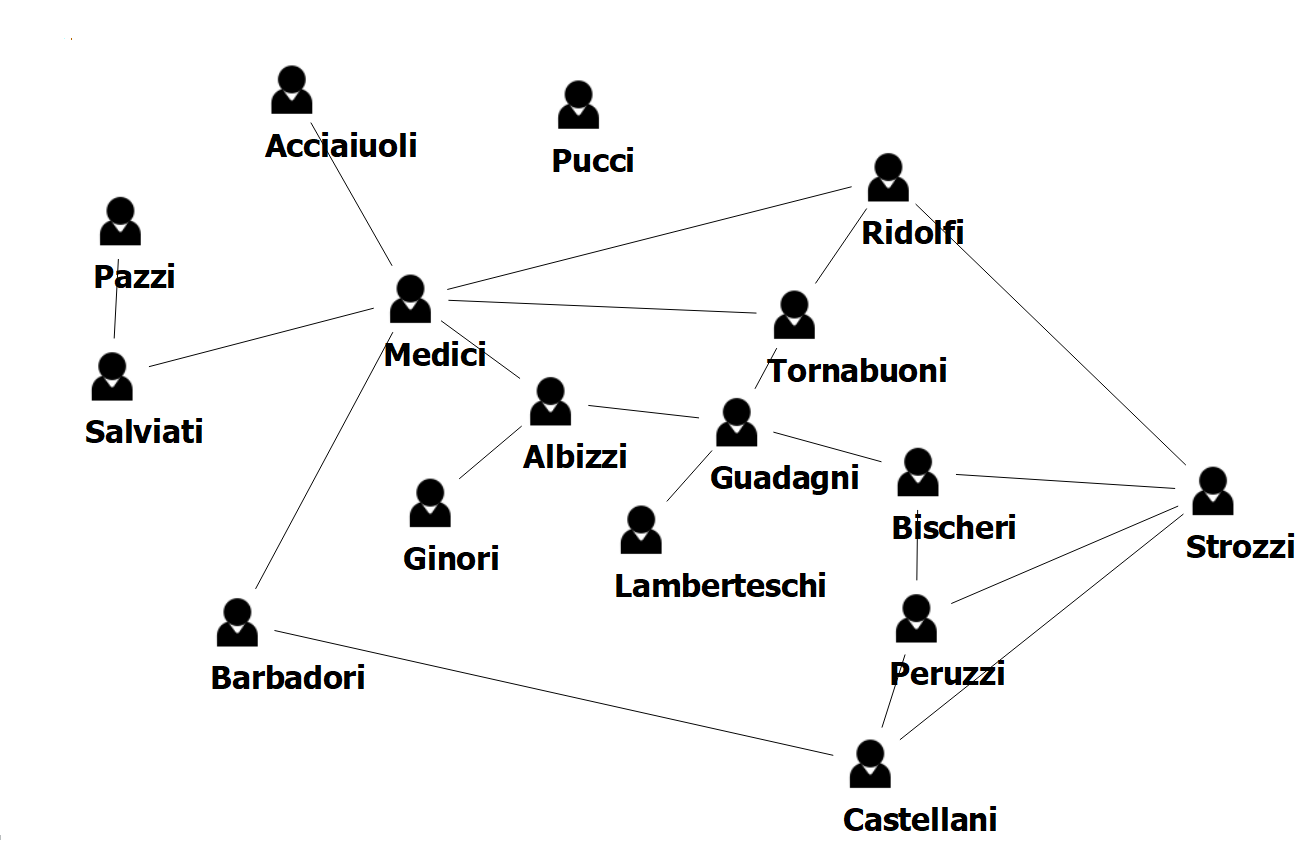} 
\label{fig:externalities-example-network-g}
}
\quad \quad \quad
\subfigure[Network $\mathfrak{g}' = \mathfrak{g}+\langle Medici, Strozzi\rangle$]
{
\includegraphics[scale=0.27]{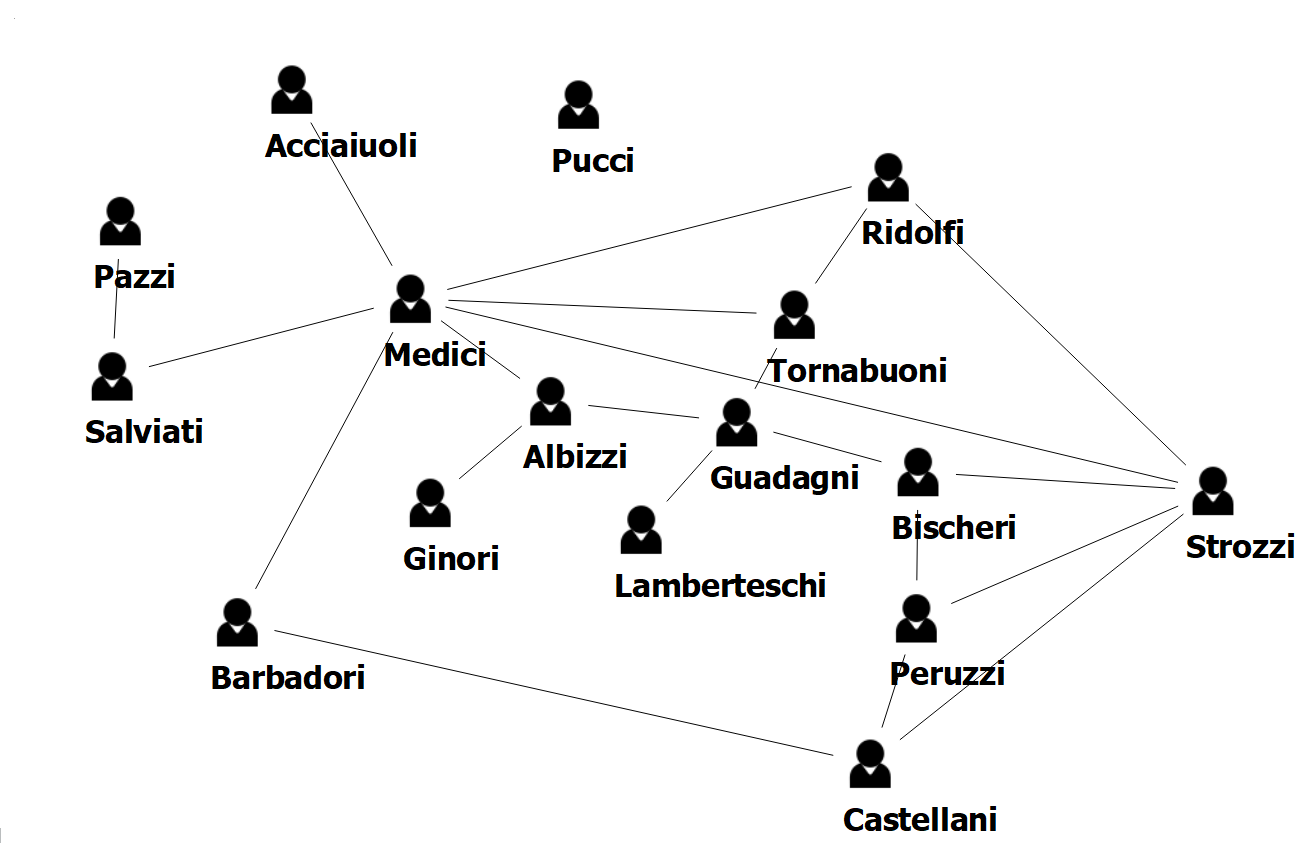} 
\label{fig:externalities-example-network-g+ij}
}
\caption{Spillover in {resource sharing network} $\mathfrak{g}$}
\label{fig:externalities-example}
\end{figure*}
We provide similar results for spillover effects due to link deletion too. 
\begin{proposition}\label{prop:positive-dexternalities}
Suppose $i, j, k$ are distinct agents in $\mathfrak{g}$, such that $\langle ij \rangle \in \mathfrak{g}$. $k$ experiences positive spillover effect due to link deletion, on deletion of $\langle ij\rangle$, if $\Phi_{k}(\mathfrak{g})=\Phi_{k}(\mathfrak{g}-\langle ij\rangle)$.
\end{proposition}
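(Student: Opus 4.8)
The plan is to prove this as the exact mirror image of Proposition \ref{prop:negative-externalities}, with every inequality reversed, and then to convert the resulting statement about global resource availability into the desired spillover statement by invoking Proposition \ref{prop:del-externalities}. Concretely, it suffices to show that $\gamma_k(\mathfrak{g}-\langle ij\rangle) > \gamma_k(\mathfrak{g})$ under the hypothesis $\Phi_k(\mathfrak{g}) = \Phi_k(\mathfrak{g}-\langle ij\rangle)$; the first part of Proposition \ref{prop:del-externalities} then gives positive spillover immediately. Throughout I would use the form of $\alpha$ employed in Lemma \ref{lemma:all-probability} and in the proof of Proposition \ref{prop:negative-externalities}, namely $\alpha_{kl}(\mathfrak{g}) = \frac{p(1-q)}{d_{kl}(\mathfrak{g})\,\Phi_l(\mathfrak{g})}$, so that $k$'s local availability from a provider $l$ is governed by the distance $d_{kl}$ and the \emph{provider's} closeness $\Phi_l$, together with $\gamma_k(\mathfrak{g}) = 1 - \prod_{l\in\mathfrak{g}\setminus\{k\}}(1-\alpha_{kl}(\mathfrak{g}))$ from Eq.~(\ref{eq:gra}).

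First I would record the deletion analogues of Remark \ref{rmk:conditions-remark} and Lemma \ref{lemma:g=g+kl}, obtained by writing $\mathfrak{g} = (\mathfrak{g}-\langle ij\rangle)+\langle ij\rangle$ and reading those results backwards: deleting $\langle ij\rangle$ can only weakly increase distances, so $\Phi_l(\mathfrak{g}-\langle ij\rangle) \le \Phi_l(\mathfrak{g})$ for every $l$. Since $\langle ij\rangle \in \mathfrak{g}$ we have $d_{ij}(\mathfrak{g}) = 1$ while $d_{ij}(\mathfrak{g}-\langle ij\rangle) \ge 2$, so the closeness of both endpoints drops strictly, $\Phi_i(\mathfrak{g}-\langle ij\rangle) < \Phi_i(\mathfrak{g})$ and $\Phi_j(\mathfrak{g}-\langle ij\rangle) < \Phi_j(\mathfrak{g})$.

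Next I would exploit the hypothesis on $k$. Each term $\tfrac{1}{d_{kl}}$ can only weakly decrease on deletion, while their sum $\Phi_k$ is assumed unchanged, so every term is unchanged; hence $d_{kl}(\mathfrak{g}) = d_{kl}(\mathfrak{g}-\langle ij\rangle)$ for all $l$, in particular for $l=i,j$. Substituting into $\alpha$: with $d_{ki}$ fixed and $\Phi_i$ strictly decreased we get $\alpha_{ki}(\mathfrak{g}-\langle ij\rangle) > \alpha_{ki}(\mathfrak{g})$, and likewise $\alpha_{kj}$ strictly increases; for every other provider $l$ the distance $d_{kl}$ is fixed and $\Phi_l$ weakly decreases, so $\alpha_{kl}$ weakly increases. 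Consequently each factor $(1-\alpha_{kl})$ weakly decreases, strictly for $l=i,j$, whence $\prod_{l}(1-\alpha_{kl})$ strictly decreases and $\gamma_k = 1 - \prod_{l}(1-\alpha_{kl})$ strictly increases, as required.

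The main obstacle is securing the strictness that comes from the $i$ and $j$ factors, which needs $d_{ki}$ and $d_{kj}$ finite: if $k$ sat in a component disjoint from $\{i,j\}$ we would have $\alpha_{ki}=\alpha_{kj}=0$ in both networks and $\gamma_k$ would not move, i.e.\ \emph{no} spillover rather than positive spillover. This is precisely the gap left implicit in the proof of Proposition \ref{prop:negative-externalities}, and it is harmless here because $\langle ij\rangle\in\mathfrak{g}$ forces $i$ and $j$ into a common component, so any $k$ in that component has $d_{ki},d_{kj}<\infty$. I would therefore present the argument for such $k$ (the connected / same-component case), mirroring the treatment of the link-addition counterpart, and note that the degenerate case of a separate component is exactly the ``no spillover'' regime lying outside the intended scope of the hypothesis.
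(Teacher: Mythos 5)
Your proof is correct and is essentially the paper's own argument: the paper proves this proposition simply by mirroring the proof of Proposition \ref{prop:negative-externalities} (unchanged $\Phi_{k}$ forces unchanged $d_{kl}$ for all $l$, the endpoints' closeness drops strictly, so $\alpha_{ki},\alpha_{kj}$ rise strictly while every other $\alpha_{kl}$ rises weakly, giving $\gamma_{k}(\mathfrak{g}-\langle ij\rangle)>\gamma_{k}(\mathfrak{g})$ and hence positive spillover via Proposition \ref{prop:del-externalities}), which is exactly the route you take. Your explicit handling of the degenerate case where $k$ lies in a component disjoint from $\{i,j\}$ --- where the closeness hypothesis holds but the outcome is no spillover rather than positive spillover --- is a caveat the paper leaves implicit (it surfaces only in its later connectivity-restricted theorem and corollary), so this is added care rather than a different approach.
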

\begin{proof}
The result can be proved in lines similar to the proof of Proposition \ref{prop:negative-externalities}. 
\end{proof}

\begin{example}\label{exmp:externalities}
Consider the resource sharing network, $\mathfrak{g}'$, shown in Figure \ref{fig:externalities-example-network-g+ij}. Let Medici and Strozzi delete the link between them in $\mathfrak{g}'$ resulting in the {resource sharing network} $\mathfrak{g}'-\langle Medici, Strozzi\rangle$ ($\mathfrak{g}$, as shown in Figure \ref{fig:externalities-example-network-g}. From Eq. (\ref{eq:harmonic-closeness}), $\Phi_{Albizzi}(\mathfrak{g}')=8.00$ and $\Phi_{Albizzi}(\mathfrak{g})=7.83$. However, from Eq. (\ref{eq:gra}), $\gamma_{Albizzi}(\mathfrak{g}')= 0.696$ and $\gamma_{Albizzi}(\mathfrak{g})= 0.701$. This indicates that, although the closeness of Albizzi decreases in $\mathfrak{g}'-\langle ij \rangle$, its global resource availability increases. Similar is true in the case of agents Ginori and Pazzi. On the contrary, the link deletion between Medici and Strozzi in $\mathfrak{g}'$ decreases not only agent Bischeri's closeness (from  $\Phi_{Bischeri}(\mathfrak{g}')=7.58$ to $\Phi_{Bischeri}(\mathfrak{g})=7.20$), but also its global resource availability (from $\gamma_{Bischeri}(\mathfrak{g}')= 0.662$ to $\gamma_{Bischeri}(\mathfrak{g}$). We have a similar observation for Acciaiuodi, Peruzzi and Salviati too. $\gamma_{Pucci}(\mathfrak{g}') = \gamma_{Pucci}(\mathfrak{g}) = 0.$
\end{example}
Now, we show that 
in any connected {resource sharing network} with three or more agents, all agents experience either positive or negative spillover effects, 
and there is no case where any agent experiences no spillover effect.

\begin{theorem}\label{remark:sufficiency}
Suppose $\mathfrak{g}$ is connected and has three or more agents. Suppose $k$ is any agent in $\mathfrak{g}$. Agent $k$ always experiences either positive or negative spillover effect, and it is not possible that $k$ experiences no spillover effect.
\end{theorem}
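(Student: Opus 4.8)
The plan is to recast the claim in terms of global resource availability and then rule out exact equality. By Proposition~\ref{prop:externalities}, an agent $k \in \mathfrak{g}\setminus\{i,j\}$ (where $\langle ij\rangle \notin \mathfrak{g}$ is the link being added) experiences no spillover effect precisely when $\gamma_k(\mathfrak{g}+\langle ij\rangle) = \gamma_k(\mathfrak{g})$; by Eq.~(\ref{eq:gra}) this is equivalent to $\prod_{l \in \mathfrak{g}\setminus\{k\}}\big(1-\alpha_{kl}(\mathfrak{g})\big) = \prod_{l \in \mathfrak{g}\setminus\{k\}}\big(1-\alpha_{kl}(\mathfrak{g}+\langle ij\rangle)\big)$. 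It therefore suffices to show that, when $\mathfrak{g}$ is connected with $n\ge 3$, this equality cannot hold. I would split the argument using Remark~\ref{rmk:conditions-remark}, according to whether $k$'s closeness is unchanged or strictly increases when $\langle ij\rangle$ is added.

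The case $\Phi_k(\mathfrak{g}) = \Phi_k(\mathfrak{g}+\langle ij\rangle)$ is already settled: Proposition~\ref{prop:negative-externalities} shows $k$ then experiences strict negative spillover, so $\gamma_k$ strictly decreases and the above equality fails.

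The remaining case, $\Phi_k(\mathfrak{g}) < \Phi_k(\mathfrak{g}+\langle ij\rangle)$, is the substantive one. The step I would isolate first is a short distance observation: assuming without loss of generality $d_{ki}(\mathfrak{g})\le d_{kj}(\mathfrak{g})$, the distance from $k$ to the nearer endpoint is unchanged, i.e. $d_{ki}(\mathfrak{g}+\langle ij\rangle)=d_{ki}(\mathfrak{g})$. Indeed, any shortest $k$-$i$ path that uses the new edge must cross it as its final step $j\to i$, and hence has length at least $d_{kj}(\mathfrak{g})+1 \ge d_{ki}(\mathfrak{g})+1$, which is strictly longer than the pre-existing $k$-$i$ path. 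Since adding $\langle ij\rangle$ strictly increases $\Phi_i$ (as established in the proof of Proposition~\ref{prop:negative-externalities}), Eq.~(\ref{eq:agent-probability}) yields $\alpha_{ki}(\mathfrak{g}) > \alpha_{ki}(\mathfrak{g}+\langle ij\rangle)$. Thus at least the factor $\big(1-\alpha_{ki}\big)$ strictly increases, which already excludes the degenerate possibility that every $\alpha_{kl}$ is left unchanged.

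The main obstacle is upgrading this to strict movement of the entire product. Because $\Phi_k$ strictly increases, some provider $l$ must have $d_{kl}$ strictly decreasing, and for such an $l$ the quantity $\alpha_{kl}$ may instead rise; the examples following Proposition~\ref{prop:negative-externalities} confirm that the net change in $\gamma_k$ can carry either sign, so no monotonicity argument closes this case and the genuine risk is an exact cancellation of the rising and falling factors. To dispose of this I would exploit the algebraic form $\alpha_{kl}=p(1-q)/\big(d_{kl}\Phi_l\big)$ from Eq.~(\ref{eq:agent-probability}): with $c=p(1-q)$, $x_l=1/\big(d_{kl}(\mathfrak{g})\Phi_l(\mathfrak{g})\big)$ and $y_l$ its analogue in $\mathfrak{g}+\langle ij\rangle$, the no-spillover equality reads $\prod_l(1-cx_l)=\prod_l(1-cy_l)$, an identity between two polynomials in $c$ whose roots are the reciprocals of the denominators $d_{kl}\Phi_l$ (all finite and positive since $\mathfrak{g}$ is connected). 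The delicate residual point, which I expect to be the crux of the whole proof, is to show that these polynomials are genuinely distinct --- that the multiset of denominators $\{d_{kl}\Phi_l\}$ is altered by the new link --- and that they do not fortuitously coincide at the one parameter value $c=p(1-q)$; connectedness, which guarantees every factor is present and nontrivial, is exactly what I would lean on to force this.
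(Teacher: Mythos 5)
Your proof is incomplete, and you say so yourself: the case $\Phi_k(\mathfrak{g})<\Phi_k(\mathfrak{g}+\langle ij\rangle)$ is never closed. Everything before that is sound and mirrors the paper's skeleton --- the reduction via Proposition~\ref{prop:externalities} to the equality $\gamma_k(\mathfrak{g})=\gamma_k(\mathfrak{g}+\langle ij\rangle)$, the dispatch of the $\Phi_k$-unchanged case by Proposition~\ref{prop:negative-externalities}, and the correct observation that the nearer endpoint (say $i$, with $d_{ki}(\mathfrak{g})\le d_{kj}(\mathfrak{g})$) keeps its distance to $k$, so that $\alpha_{ki}$ strictly falls. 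But the device you propose for the remaining case cannot work as stated: writing $1-\gamma_k$ in both networks as polynomials $\prod_l(1-c\,x_l)$ and $\prod_l(1-c\,y_l)$ in $c=p(1-q)$ and showing these polynomials are distinct only tells you they agree for at most $n-1$ values of $c$; it does not exclude that the actual parameter value $c=p(1-q)$, which is free in $(0,1)$ in this model, is one of those agreement points. Since the theorem quantifies over all admissible $p,q$, you would have to show the two products differ for \emph{every} $c\in(0,1)$, and neither distinctness of the multisets $\{d_{kl}\Phi_l\}$ nor connectedness gives you that. That is the genuine gap.

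It is worth knowing how the paper itself crosses this bridge: after deducing $\gamma_k(\mathfrak{g})=\gamma_k(\mathfrak{g}+\langle ij\rangle)$, it asserts parenthetically that $\alpha_{kl}(\mathfrak{g})\ge\alpha_{kl}(\mathfrak{g}+\langle ij\rangle)$ for every $l$, so that all factors $(1-\alpha_{kl})$ move weakly in one direction, equality of the products forces equality of every factor, hence $d_{kl}$ and $\Phi_l$ are unchanged for all $l$ --- contradicting the strict increase of $\Phi_i$ and $\Phi_j$. This blanket monotonicity is exactly the claim you declined to use, and your instinct was right: it is valid when all the distances $d_{kl}$ are frozen (the situation of Proposition~\ref{prop:negative-externalities}), but not in general, since a provider $l$ whose distance to $k$ shrinks enough has $\alpha_{kl}$ \emph{increase}; indeed, if every $\alpha_{kl}$ could only fall, positive spillover would be impossible, contradicting the paper's own Florentine-families example (Bischeri). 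So your attempt stalls precisely at the step where the paper's own proof rests on an unjustified claim: to finish either argument one needs a reason why the rising factors and the falling factors cannot exactly cancel at the given $c$, and neither your sketch nor the paper supplies one.
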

\begin{proof}
We prove the result for spillover effect due to link addition, and a similar proof works for spillover effect due to link deletion too. \\

Suppose $i$, $j$ and $k$ are distinct agents in $\mathfrak{g}$, such that $\langle ij\rangle \notin \mathfrak{g}$. If possible, let $k$ have no spillover effect when $\langle ij\rangle$ is added. This means 
$\gamma_{k}(\mathfrak{g})=\gamma_{k}(\mathfrak{g}+\langle ij\rangle)$, by Proposition \ref{prop:externalities}.\\

Therefore, by Eq. (\ref{eq:gra}),   
$\alpha_{kl}(\mathfrak{g})=\alpha_{kl}(\mathfrak{g}+\langle ij\rangle)$, for all $l\in \mathfrak{g}$. (Note that $\alpha_{kl}(\mathfrak{g})\geq\alpha_{kl}(\mathfrak{g}+\langle ij\rangle)$, for all $l\in \mathfrak{g}$).\\

This implies that $d_{kl}(\mathfrak{g})=d_{kl}(\mathfrak{g}+\langle ij\rangle)$ 
and $\Phi_l(\mathfrak{g})=\Phi_l(\mathfrak{g}+\langle ij\rangle)$ for all $l\in \mathfrak{g}$, from Equations (\ref{eq:harmonic-closeness}) and   (\ref{eq:agent-probability}).\\

The link addition between agents $i$ and $j$ in $\mathfrak{g}$ decreases their distance by $d_{ij}(\mathfrak{g})-1$ in $\mathfrak{g}+\langle ij\rangle$, and thus, the closeness of both $i$ and $j$ in $\mathfrak{g}+\langle ij\rangle$ increases by $\frac{1}{d_{ij}(\mathfrak{g})-1}$.\\

Hence, $\Phi_i(\mathfrak{g})<\Phi_i(\mathfrak{g}+\langle ij\rangle)$ and $\Phi_j(\mathfrak{g})<\Phi_j(\mathfrak{g}+\langle ij\rangle)$, a contradiction to our deduction that $\Phi_l(\mathfrak{g})=\Phi_l(\mathfrak{g}+\langle ij\rangle)$ for all $l\in \mathfrak{g}$.\\

Therefore, our assumption that $k$ has no spillover effect is incorrect. In other words, $k$ has either positive or negative spillover effect. 
\end{proof}
\begin{corollary}
Suppose $\mathfrak{g}$ is disconnected with at least three disjoint components. Suppose $i, j$, and $k$ are three distinct agents in $\mathfrak{g}$ such that $i\in \mathfrak{g}(\mathfrak{c}_x)$, $j\in \mathfrak{g}(\mathfrak{c}_y)$, and $k\in \mathfrak{g}(\mathfrak{c}_z)$, where $\mathfrak{c}_x, \mathfrak{c}_y$ and $\mathfrak{c}_z$ are disjoint. Suppose agents $i$ and $j$ add a direct link in $\mathfrak{g}$, then $\gamma_{k}(\mathfrak{g})=\gamma_{k}(\mathfrak{g}+\langle ij\rangle)$ for all $k \not \in \mathfrak{g}(\mathfrak{c}_x), \mathfrak{g}(\mathfrak{c}_y)$.
\end{corollary}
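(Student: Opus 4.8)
The plan is to show that adding the link $\langle ij\rangle$ leaves every factor in the product defining $\gamma_k$ in Eq. (\ref{eq:gra}) untouched, from which $\gamma_k(\mathfrak{g})=\gamma_k(\mathfrak{g}+\langle ij\rangle)$ follows immediately. The structural observation driving everything is that, since $i\in\mathfrak{g}(\mathfrak{c}_x)$ and $j\in\mathfrak{g}(\mathfrak{c}_y)$ lie in distinct components, the link $\langle ij\rangle$ merges $\mathfrak{c}_x$ and $\mathfrak{c}_y$ into a single component but leaves every other component --- in particular the component $\mathfrak{g}(\mathfrak{c}_z)$ containing $k$ --- intact and still disjoint from the rest of the network.

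The central claim I would establish is that $\alpha_{kl}(\mathfrak{g})=\alpha_{kl}(\mathfrak{g}+\langle ij\rangle)$ for every $l\in\mathfrak{g}\setminus\{k\}$. I would prove this by splitting on the location of $l$ relative to $k$'s component. If $l$ lies in a component different from that of $k$ (this covers all agents in $\mathfrak{c}_x$, $\mathfrak{c}_y$, and any further component), then $k$ and $l$ are disconnected both before and after the link addition, so $d_{kl}(\mathfrak{g})=d_{kl}(\mathfrak{g}+\langle ij\rangle)=\infty$, and by the Remark following Eq. (\ref{eq:agent-probability}) we obtain $\alpha_{kl}(\mathfrak{g})=0=\alpha_{kl}(\mathfrak{g}+\langle ij\rangle)$. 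If instead $l$ lies in the same component as $k$, then every shortest $k$-$l$ path lies wholly inside that component, which the new link does not touch, so $d_{kl}$ is unchanged.

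To finish the same-component case I must also check the denominator factor $\Phi_l$ in Eq. (\ref{eq:agent-probability}). Here I would argue that $\Phi_l(\mathfrak{g})=\Phi_l(\mathfrak{g}+\langle ij\rangle)$: by Eq. (\ref{eq:harmonic-closeness}), $\Phi_l$ sums $1/d_{lm}$ over all $m$; for $m$ inside $l$'s component the distance is unaffected by a link added between two foreign components, while for $m$ outside the distance stays $\infty$ and contributes $0$ both before and after. Hence $\alpha_{kl}$ is unchanged in this case as well. Combining the two cases gives the central claim, and substituting the equalities $\alpha_{kl}(\mathfrak{g})=\alpha_{kl}(\mathfrak{g}+\langle ij\rangle)$ into Eq. (\ref{eq:gra}) yields $\gamma_k(\mathfrak{g})=\gamma_k(\mathfrak{g}+\langle ij\rangle)$.

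There is no genuinely hard step here; the result is essentially a locality statement for harmonic centrality and the induced probabilities. The only point demanding a little care --- and the place I would be most explicit --- is verifying that $\Phi_l$ does not change for agents $l$ in $k$'s component: one must note that the added link joins two components neither of which contains $l$, so it creates no new finite path to $l$ and therefore alters none of the distances entering $l$'s closeness.
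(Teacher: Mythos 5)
Your proposal is correct: the component containing $k$ is untouched by the link $\langle ij\rangle$, so all distances $d_{kl}$ and closeness values $\Phi_l$ entering Eq.~(\ref{eq:agent-probability}) are unchanged (finite ones by locality, infinite ones because $\mathfrak{c}_z$ stays disconnected from the merged component), hence every $\alpha_{kl}$ and therefore $\gamma_k$ in Eq.~(\ref{eq:gra}) is unchanged. The paper states this corollary without giving any proof, and your locality argument is exactly the reasoning the statement implicitly relies on, so there is nothing to compare beyond noting that you have filled in the omitted details correctly.
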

\section{Choice Modelling}
To understand the relation between agents' distance from each other, and local as well as global resource availabilities, we, first, discuss the following example. 

\begin{example}

Suppose $\mathfrak{g}$ is a ring network\footnote{A ring network $\mathfrak{g}$ is a connected network where $\eta_i(\mathfrak{g})=2$ for all $i \in \mathfrak{g}$.}. 
Then, for all $i\in \mathfrak{g}$, 
\[
\Phi_{i}(\mathfrak{g})=
\begin{cases}
    2(1+\frac{1}{2}+\frac{1}{3}+ \ldots +\frac{1}{N}), ~\text{if N is odd} \\
    2(1+\frac{1}{2}+\frac{1}{3}+ \ldots +\frac{1}{N-1})+\frac{1}{N}, ~\text{if N is even.} 
\end{cases}
\]

Suppose $i \in \mathfrak{g}$. For $j \in \mathfrak{g}$, suppose $\langle ij\rangle \notin \mathfrak{g}$. We compute the local resource availability, $\alpha_{ij}(\mathfrak{g}+\langle ij\rangle)$ of $i$ from $j$, and the global resource availability $\gamma_{i}(\mathfrak{g}+\langle ij\rangle)$ for different $j$ in increasing order of the distance between $i$ and $j$ in $\mathfrak{g}$. Figures \ref{fig:local-resource-availability} and \ref{fig:global-resource-availability} show that as the distance between $i$ and $j$ increases, $\alpha_{ij}(\mathfrak{g}+\langle ij\rangle)$ and $\gamma_{i}(\mathfrak{g}+\langle ij\rangle)$ also increase.
\end{example}

\begin{figure*}[h!]
\centering
\subfigure[$\alpha_{ij}(\mathfrak{g}+\langle ij\rangle)$]
{
\includegraphics[scale=0.23]{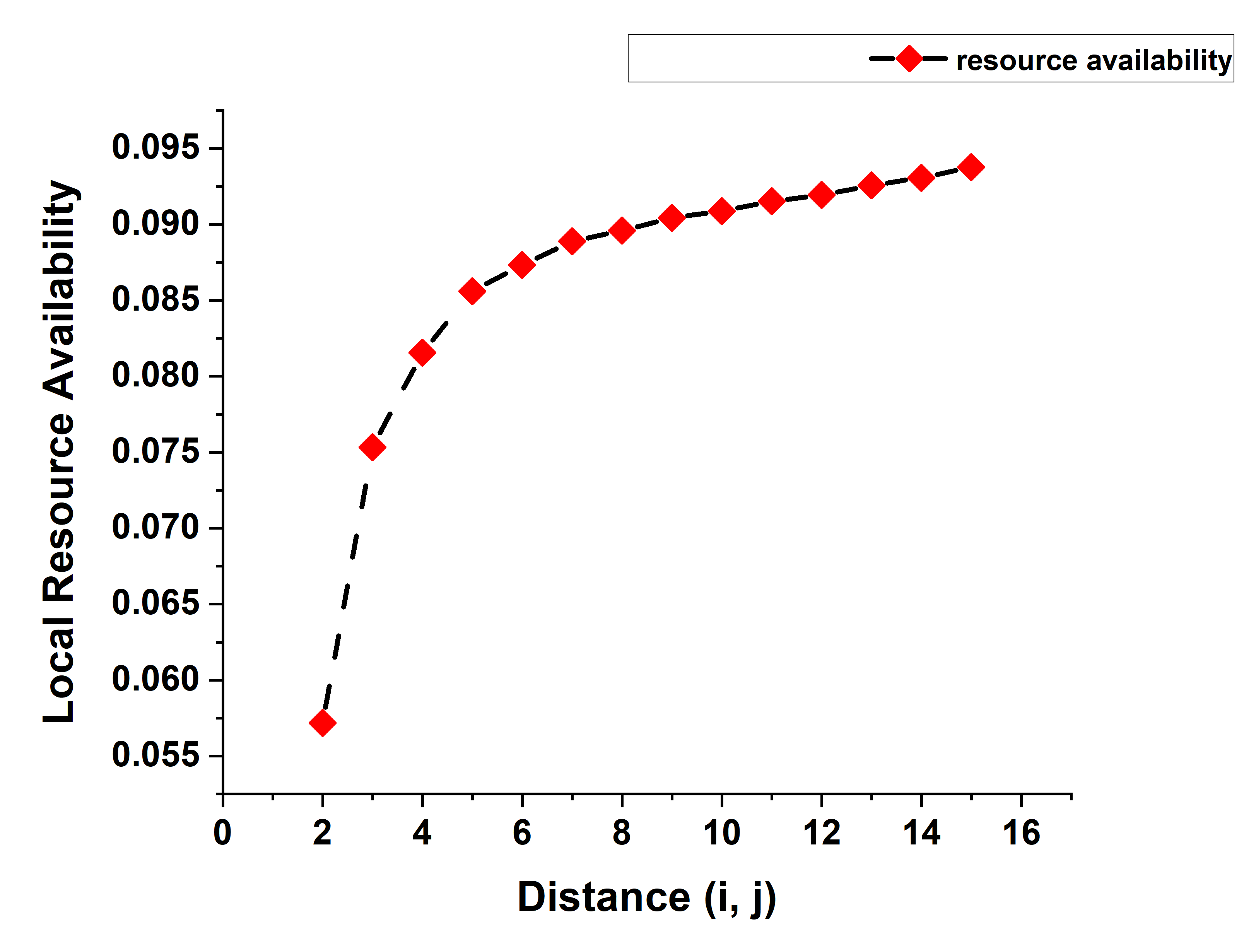} 
\label{fig:local-resource-availability}
}
\subfigure[$\gamma_{i}(\mathfrak{g}+\langle ij\rangle)$]
{
\includegraphics[scale=0.23]{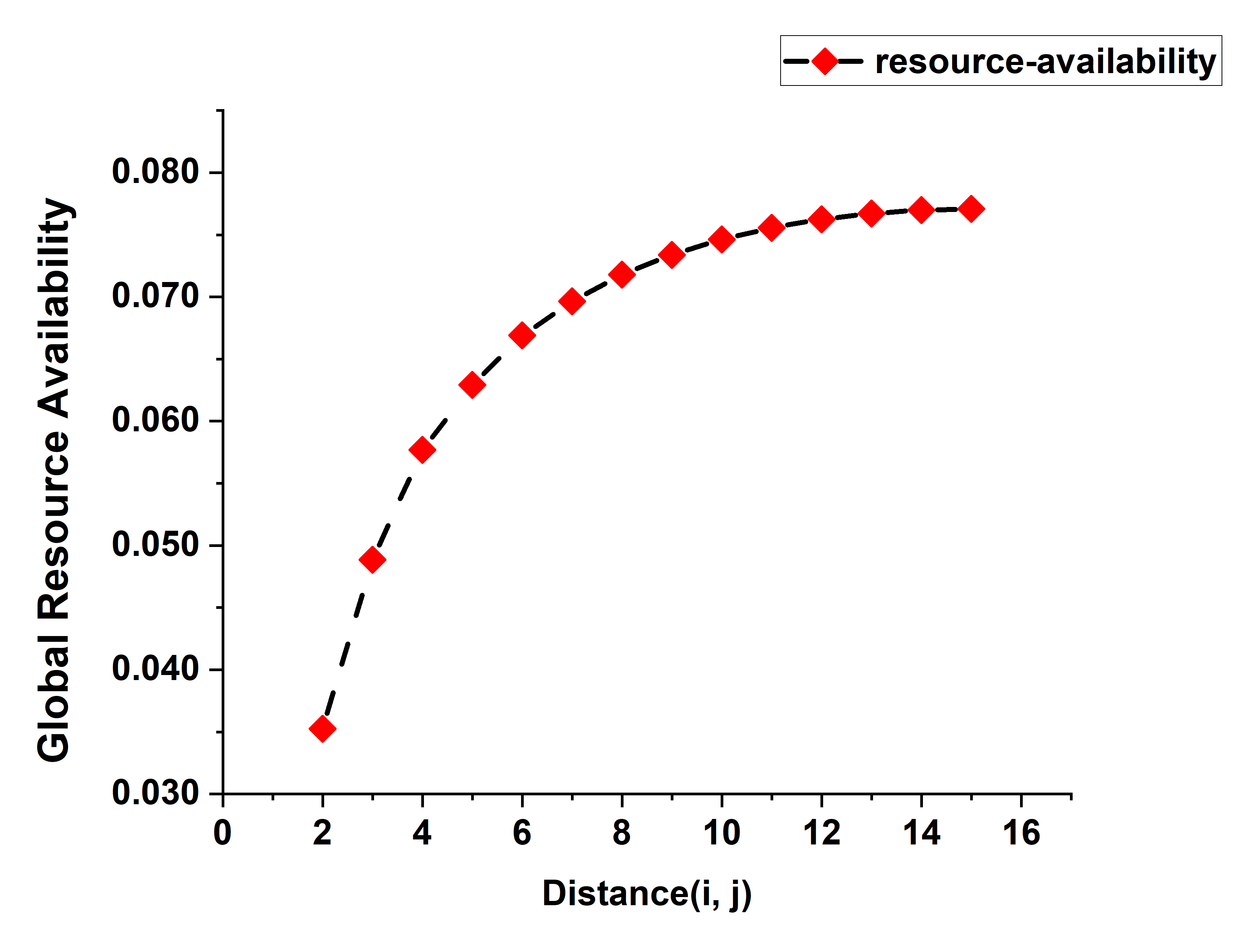} 
\label{fig:global-resource-availability}
}
\caption{Local and global resource availabilities of agent $i$ in the ring network}
\label{fig:local-global-resource-availability}
\end{figure*}
We, now, discuss the relation between local resource availability and distance as well as closeness. 

\begin{lemma}\label{starter-lemma-local-resource-availability}
Suppose $i, j$ are distinct agents in $\mathfrak{g}$, such that $\langle ij \rangle \not\in \mathfrak{g}$. The local resource availability of agent $i$ from agent $j$ increases with decrease in the distance, $d_{ij}(\mathfrak{g})$, between them. 

\end{lemma}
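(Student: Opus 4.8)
The plan is to work directly from the closed form of the local resource availability and isolate how $d_{ij}(\mathfrak{g})$ enters it. By Eq.~(\ref{eq:agent-probability}) we have $\alpha_{ij}(\mathfrak{g}) = \frac{p(1-q)}{d_{ij}(\mathfrak{g})\,\Phi_j(\mathfrak{g})}$, and, exactly as in the expansion used in Lemma~\ref{prelemma:link-addition-local-resource-availability}, I would split off the term that $i$ contributes to the closeness of $j$ by writing $\Phi_j(\mathfrak{g}) = \frac{1}{d_{ij}(\mathfrak{g})} + \sum_{k\in\mathfrak{g}\setminus\{i,j\}}\frac{1}{d_{jk}(\mathfrak{g})}$. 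Substituting this and simplifying the denominator collapses the expression to $\alpha_{ij}(\mathfrak{g}) = \frac{p(1-q)}{1 + d_{ij}(\mathfrak{g})\sum_{k\in\mathfrak{g}\setminus\{i,j\}}\frac{1}{d_{jk}(\mathfrak{g})}}$, in which the only surviving occurrence of $d_{ij}(\mathfrak{g})$ is its product with the sum of $j$'s reciprocal distances to the remaining agents.

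From this form the monotonicity is immediate, reading the statement in the natural ``all else equal'' sense in which the distances from $j$ to the other agents are held fixed. Setting $S_j = \sum_{k\in\mathfrak{g}\setminus\{i,j\}}\frac{1}{d_{jk}(\mathfrak{g})} \geq 0$, the denominator $1 + d_{ij}(\mathfrak{g})\,S_j$ is increasing in $d_{ij}(\mathfrak{g})$, and strictly so whenever $S_j>0$ (that is, whenever $j$ is at finite distance from some agent other than $i$). Since $p(1-q)>0$, it follows that $\alpha_{ij}(\mathfrak{g})$ is strictly decreasing in $d_{ij}(\mathfrak{g})$; equivalently, it increases as $d_{ij}(\mathfrak{g})$ decreases. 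Alternatively, I could cast this as a two-network comparison and invoke Lemma~\ref{prelemma:link-addition-local-resource-availability} verbatim: if $\mathfrak{g}_1,\mathfrak{g}_2$ share all of $j$'s distances except that $d_{ij}(\mathfrak{g}_1) < d_{ij}(\mathfrak{g}_2)$, then $d_{ij}(\mathfrak{g}_2)S_j > d_{ij}(\mathfrak{g}_1)S_j$, which is precisely the inequality that lemma certifies as equivalent to $\alpha_{ij}(\mathfrak{g}_1) > \alpha_{ij}(\mathfrak{g}_2)$.

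The step that genuinely needs care — and the only real obstacle — is that a decrease in $d_{ij}(\mathfrak{g})$ does \emph{not} leave $\Phi_j(\mathfrak{g})$ fixed: the term $\frac{1}{d_{ij}(\mathfrak{g})}$ inside $\Phi_j(\mathfrak{g})$ rises at the same time, so the two factors $d_{ij}(\mathfrak{g})$ and $\Phi_j(\mathfrak{g})$ in the denominator move in opposite directions. Hence the naive reading ``$d_{ij}$ is in the denominator, so a smaller distance gives a larger $\alpha$'' is incomplete, since it ignores the competing motion of $\Phi_j(\mathfrak{g})$. The rewrite above is exactly what dissolves this tension: absorbing the $\frac{1}{d_{ij}(\mathfrak{g})}$ contribution into the constant $1$ cancels the competing effect and leaves a single, manifestly monotone dependence through the product $d_{ij}(\mathfrak{g})\,S_j$. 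After that I expect no further difficulty, as the sign of $S_j$ is automatic and $p(1-q)$ is a fixed positive constant.
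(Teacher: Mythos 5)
Your proof is correct, but it takes a genuinely different --- and more careful --- route than the paper's. The paper's own proof is exactly the one-line ``naive reading'' you flag at the end: it compares $\frac{p(1-q)}{d_{ij}(\mathfrak{g})\Phi_j(\mathfrak{g})}$ with $\frac{p(1-q)}{(d_{ij}(\mathfrak{g})+1)\Phi_j(\mathfrak{g})}$, holding the closeness $\Phi_j(\mathfrak{g})$ fixed in both denominators. As you observe, that is not a comparison between two genuine network configurations, since $\Phi_j(\mathfrak{g})$ contains the term $\frac{1}{d_{ij}(\mathfrak{g})}$ and therefore cannot stay fixed when $d_{ij}(\mathfrak{g})$ changes; the paper's argument is a statement about the algebraic formula in Eq.~(\ref{eq:agent-probability}) under a ceteris paribus assumption that no pair of networks can simultaneously realize. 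Your decomposition $\Phi_j(\mathfrak{g}) = \frac{1}{d_{ij}(\mathfrak{g})} + S_j$, which collapses the availability to $\frac{p(1-q)}{1 + d_{ij}(\mathfrak{g})\,S_j}$, is exactly what is needed to turn the claim into a self-consistent comparison (hold $j$'s distances to the remaining agents fixed, vary $d_{ij}(\mathfrak{g})$ alone), and your alternative phrasing via Lemma~\ref{prelemma:link-addition-local-resource-availability} shows it is fully compatible with the paper's existing machinery. What the paper's proof buys is brevity; what yours buys is an inequality whose two sides both correspond to actual networks, together with an explicit identification of when the monotonicity is strict ($S_j>0$, which is automatic under the lemma's hypotheses: $\langle ij\rangle\notin\mathfrak{g}$ with $d_{ij}(\mathfrak{g})$ finite forces an intermediate agent adjacent to $j$).
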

\begin{proof}
For any $i, j \in \mathfrak{g},~i \not= j$, $0 < d_{ij}(\mathfrak{g}) < d_{ij}(\mathfrak{g})+1$. Then from Eq. (\ref{eq:agent-probability}), $\frac{p(1-q)}{d_{ij}(\mathfrak{g}) \Phi_j(\mathfrak{g})}>\frac{p(1-q)}{(d_{ij}(\mathfrak{g})+1) \Phi_j(\mathfrak{g})}$.

\end{proof}

\begin{lemma}\label{starter-lemma2-local-resource-availability}
Suppose $i, j, k$ are distinct agents in $\mathfrak{g}$, such that $\langle ij \rangle, \langle ik \rangle \not\in \mathfrak{g}$. 
If $d_{ij}(\mathfrak{g}) > d_{ik}(\mathfrak{g})$ then, $\alpha_{ij}(\mathfrak{g})<\alpha_{ik}(\mathfrak{g})$. 

\end{lemma}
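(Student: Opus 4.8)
The plan is to read both quantities straight off Eq.~(\ref{eq:agent-probability}) and to exploit the fact that the two local availabilities $\alpha_{ij}(\mathfrak{g})$ and $\alpha_{ik}(\mathfrak{g})$ share a common positive factor, namely the closeness of the fixed agent $i$. Concretely, I would first write $\alpha_{ij}(\mathfrak{g})=\frac{p(1-q)}{d_{ij}(\mathfrak{g})\,\Phi_i(\mathfrak{g})}$ and $\alpha_{ik}(\mathfrak{g})=\frac{p(1-q)}{d_{ik}(\mathfrak{g})\,\Phi_i(\mathfrak{g})}$, exactly as in the monotonicity argument used for Lemma~\ref{starter-lemma-local-resource-availability}. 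The term $p(1-q)\,\Phi_i(\mathfrak{g})$ is identical in both expressions, so the comparison of $\alpha_{ij}(\mathfrak{g})$ and $\alpha_{ik}(\mathfrak{g})$ collapses to a comparison of the distance factors alone. Note that $\Phi_i(\mathfrak{g})>0$: since the hypothesis $d_{ij}(\mathfrak{g})>d_{ik}(\mathfrak{g})$ forces $d_{ik}(\mathfrak{g})$ to be finite, agent $i$ is not isolated, so its closeness is strictly positive and the common factor may be divided out.

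Next I would reduce the claim to the reciprocals of the distances. With the common positive factor removed, $\alpha_{ij}(\mathfrak{g})<\alpha_{ik}(\mathfrak{g})$ holds if and only if $\frac{1}{d_{ij}(\mathfrak{g})}<\frac{1}{d_{ik}(\mathfrak{g})}$, which is precisely what the hypothesis $d_{ij}(\mathfrak{g})>d_{ik}(\mathfrak{g})>0$ supplies. This is the same mechanism as in Lemma~\ref{starter-lemma-local-resource-availability}, now applied across two distinct targets $j$ and $k$ rather than across two distances to a single target. The point that makes the argument go through \emph{unchanged} is that the closeness in the denominator is that of the common endpoint $i$ and hence is the same in both terms; the conditions $\langle ij\rangle,\langle ik\rangle\notin\mathfrak{g}$ play no role in the inequality beyond fixing the choice-modelling context (they merely ensure $j,k$ are non-neighbours that $i$ might connect to).

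The only situation requiring separate mention is $d_{ij}(\mathfrak{g})=\infty$. Here the distance-reciprocal form is not directly usable, but the Remark following Eq.~(\ref{eq:agent-probability}) gives $\alpha_{ij}(\mathfrak{g})=0$, while $d_{ik}(\mathfrak{g})$ being finite yields $\alpha_{ik}(\mathfrak{g})>0$, so $\alpha_{ij}(\mathfrak{g})<\alpha_{ik}(\mathfrak{g})$ trivially. I do not expect a substantive obstacle: the result is an immediate corollary of the availability formula, and the only subtlety to get right is recognising that the shared closeness factor belongs to the fixed agent $i$ and therefore cancels, leaving the distance comparison to settle the inequality.
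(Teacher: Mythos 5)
Your argument hinges on a single claim: that $\alpha_{ij}(\mathfrak{g})$ and $\alpha_{ik}(\mathfrak{g})$ share the common factor $\Phi_i(\mathfrak{g})$, the closeness of the fixed agent $i$, so that the closenesses cancel and only the distances matter. That claim comes from reading the right-hand side of Eq.~(\ref{eq:agent-probability}) literally, but that equation suffers from an index clash (the summation variable $j$ shadows the free $j$), and everywhere the paper actually computes with $\alpha_{ij}$ it uses the \emph{provider's} closeness, not the consumer's: the proof of Lemma~\ref{prelemma:link-addition-local-resource-availability} normalizes by $\tfrac{1}{d_{ij}}+\sum_{k}\tfrac{1}{d_{jk}}=\Phi_j$, Proposition~\ref{lemma:local-connection-add-effect} writes $\alpha_{ik}=p(1-q)/\Phi_k$, the proof of Lemma~\ref{starter-lemma-local-resource-availability} --- the very mechanism you say you are replicating --- holds $\Phi_j(\mathfrak{g})$ (not $\Phi_i(\mathfrak{g})$) fixed, and Lemma~\ref{starter-lemma-max-local-resource-availability} and Proposition~\ref{choice-based-on-local} in this same section rank potential partners by \emph{their} closeness, which would be vacuous under your reading (all agents equidistant from $i$ would yield identical availability). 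Under the operative formula $\alpha_{ij}(\mathfrak{g})=\frac{p(1-q)}{d_{ij}(\mathfrak{g})\Phi_j(\mathfrak{g})}$, $\alpha_{ik}(\mathfrak{g})=\frac{p(1-q)}{d_{ik}(\mathfrak{g})\Phi_k(\mathfrak{g})}$, the two denominators carry \emph{different} closenesses, nothing cancels, and the claimed inequality is equivalent to $d_{ij}(\mathfrak{g})\Phi_j(\mathfrak{g})>d_{ik}(\mathfrak{g})\Phi_k(\mathfrak{g})$, which does not follow from $d_{ij}(\mathfrak{g})>d_{ik}(\mathfrak{g})$ alone.

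Concretely: let $k$ be the hub of a star with nine leaves, let $i$ be a pendant attached to one leaf, and let $j$ be the far end of a pendant path of length two attached to another leaf. Then $\langle ij\rangle,\langle ik\rangle\notin\mathfrak{g}$ and $d_{ik}=2<5=d_{ij}$, yet $\Phi_k=\tfrac{31}{3}$ and $\Phi_j=\tfrac{121}{30}$, so $d_{ik}\Phi_k=\tfrac{62}{3}>\tfrac{121}{6}=d_{ij}\Phi_j$, giving $\alpha_{ik}(\mathfrak{g})<\alpha_{ij}(\mathfrak{g})$ --- the opposite of the lemma's conclusion. So under the paper's own convention the statement holds only ceteris paribus (e.g.\ when $\Phi_j=\Phi_k$, or more generally when $d_{ij}\Phi_j>d_{ik}\Phi_k$); this is, in effect, what the paper's one-line proof does, since it merely invokes Lemma~\ref{starter-lemma-local-resource-availability}, a monotonicity statement in which the provider's closeness is held fixed. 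Your treatment of the $d_{ij}=\infty$ case is fine but moot. To make your proof stand, you must either declare explicitly that you adopt the literal $\Phi_i$ reading of Eq.~(\ref{eq:agent-probability}) --- in which case the lemma is trivially true, but Lemma~\ref{starter-lemma-max-local-resource-availability} and Proposition~\ref{choice-based-on-local} become untenable --- or add the missing hypothesis relating $\Phi_j$ and $\Phi_k$.
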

\begin{proof}
Follows from Lemma \ref{starter-lemma-local-resource-availability}. 
\end{proof}

\begin{lemma}\label{starter-lemma-max-local-resource-availability}
Agent $i$ in $\mathfrak{g}$ 
obtains maximum local resource availability from that $k\in\mathfrak{g}$ who is least close to others (that is, with the least harmonic centrality). 
\end{lemma}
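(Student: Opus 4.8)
The plan is to reduce the claim to an elementary comparison using the closed form of the local resource availability. By Equation~(\ref{eq:agent-probability}), the availability that the consumer $i$ draws from a prospective provider $k$ is $\alpha_{ik}(\mathfrak{g})=\frac{p(1-q)}{d_{ik}(\mathfrak{g})\,\Phi_k(\mathfrak{g})}$, in which the numerator $p(1-q)$ is the same for every choice of $k$. First I would record the resulting equivalence: among all providers $k\in\mathfrak{g}\setminus\{i\}$, maximising $\alpha_{ik}(\mathfrak{g})$ is the same as minimising the denominator $d_{ik}(\mathfrak{g})\,\Phi_k(\mathfrak{g})$. This turns a statement about the largest of a family of ratios into one about the smallest of a family of products, which is the form I can attack directly.

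Next I would separate the two factors of this product and treat them in turn. The distance factor $d_{ik}(\mathfrak{g})$ has already been analysed: Lemma~\ref{starter-lemma-local-resource-availability} and Lemma~\ref{starter-lemma2-local-resource-availability} show that, for a fixed consumer $i$, availability strictly increases as the provider moves nearer. The remaining factor is the provider's harmonic centrality $\Phi_k(\mathfrak{g})$ of Equation~(\ref{eq:harmonic-closeness}), which appears linearly in the denominator; hence, once the distance factor is held level across candidates, $\alpha_{ik}(\mathfrak{g})$ is a strictly decreasing function of $\Phi_k(\mathfrak{g})$. The intended mechanism is transparent from Equation~(\ref{eq:agent-probability}): $\Phi_k(\mathfrak{g})$ is exactly the normalising constant that a provider $k$ uses to split its resource among all agents, so a $k$ that is poorly connected to the rest of the network (small $\Phi_k(\mathfrak{g})$) divides its resource over a lighter total weight, and every recipient --- in particular $i$ --- receives a proportionally larger share. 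I would therefore conclude that, among providers on an equal distance footing, the one with the least closeness yields the maximum local resource availability for $i$.

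The step I expect to be the main obstacle is the coupling between the two factors, because an agent with the least harmonic centrality is, almost by definition, far from the rest of the network and so may also be far from $i$, which would inflate the competing factor $d_{ik}(\mathfrak{g})$. I would handle this by placing the comparison in the choice-modelling setting already in play: when $i$ decides whom to attach to, each candidate $k$ becomes a direct neighbour, so $d_{ik}=1$ for every candidate and the distance factor drops out uniformly, leaving $\Phi_k(\mathfrak{g})$ as the sole discriminator and making the least-close provider the unambiguous maximiser. Alternatively, restricting the statement to providers equidistant from $i$ yields the same conclusion without reference to link addition; in either formulation the decisive point, which I would state carefully, is that closeness enters $\alpha_{ik}(\mathfrak{g})$ only through the provider's normalising constant, and that this constant acts against the consumer.
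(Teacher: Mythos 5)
Your proposal is correct and follows essentially the same route as the paper: the paper's proof also holds the distances equal, $d_{ij}(\mathfrak{g})=d_{ik}(\mathfrak{g})$ with $0<\Phi_k(\mathfrak{g})<\Phi_j(\mathfrak{g})$, and compares the denominators $d_{ik}(\mathfrak{g})\,\Phi_k(\mathfrak{g})$ directly via Eq.~(\ref{eq:agent-probability}), concluding that the lesser closeness yields the greater availability. The coupling issue you flag (that the least-close agent may also be the farthest, so the two factors of the denominator compete) is real, and your resolution --- restricting to equidistant providers, or to the link-addition setting where $d_{ik}=1$ uniformly --- is exactly the restriction the paper's proof imposes silently.
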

\begin{proof}
Let $j, k \in\mathfrak{g}$ and $0< \Phi_k(\mathfrak{g})<\Phi_j(\mathfrak{g})$, and $d_{ij}(\mathfrak{g})=d_{ik}(\mathfrak{g})$. Then from Eq. (\ref{eq:agent-probability}),\\
$\frac{p(1-q)}{d_{ik}(\mathfrak{g}) \Phi_k(\mathfrak{g})}>\frac{p(1-q)}{d_{ij}(\mathfrak{g}) \Phi_j(\mathfrak{g})}$. \qedhere

\end{proof}
We showed, in Lemmas \ref{starter-lemma2-local-resource-availability} and \ref{starter-lemma-max-local-resource-availability}, that the local resource availability of an agent from another agent increases with decrease in the distance between them and that maximum local resource availability is obtained from the agent with the least closeness (that is, least harmonic centrality). 
We, now, look at the relation between the global resource availability of an agent and its closeness.  

\begin{lemma}
Agent $i$ in $\mathfrak{g}$ maximizes its global resource availability by maximizing its own closeness or equivalently, by minimizing its distance with others.
\end{lemma}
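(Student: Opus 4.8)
The plan is to reduce the statement to two monotonicity facts plus the definition of closeness as an aggregate of reciprocal distances. First I would write the global resource availability explicitly, $\gamma_i(\mathfrak{g}) = 1 - \prod_{j \in \mathfrak{g} \setminus \{i\}} (1 - \alpha_{ij}(\mathfrak{g}))$ from Eq. (\ref{eq:gra}), and recall from Eq. (\ref{eq:agent-probability}) that $\alpha_{ij}(\mathfrak{g}) = \frac{p(1-q)}{d_{ij}(\mathfrak{g})\,\Phi_j(\mathfrak{g})}$, so that $\gamma_i$ is a function of the vector of distances $(d_{ij}(\mathfrak{g}))_{j \neq i}$ of $i$ from the other agents (and of the providers' closeness values $\Phi_j$).

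The first key step is monotonicity of $\gamma_i$ in each local availability. Since $0 \le \alpha_{il}(\mathfrak{g}) \le 1$ for every $l$, each factor $1 - \alpha_{il}(\mathfrak{g})$ lies in $[0,1]$, so $\prod_{l \neq i}(1 - \alpha_{il}(\mathfrak{g}))$ is non-increasing, and hence $\gamma_i(\mathfrak{g})$ is non-decreasing, in each individual $\alpha_{ij}(\mathfrak{g})$; it increases strictly whenever some $\alpha_{ij}(\mathfrak{g})$ increases strictly while the remaining factors stay positive. The second key step is Lemma \ref{starter-lemma-local-resource-availability}, which already gives that $\alpha_{ij}(\mathfrak{g})$ increases as $d_{ij}(\mathfrak{g})$ decreases. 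Composing the two, decreasing any distance $d_{ij}(\mathfrak{g})$ raises $\alpha_{ij}(\mathfrak{g})$ and therefore raises $\gamma_i(\mathfrak{g})$.

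Finally I would invoke the definition of closeness, $\Phi_i(\mathfrak{g}) = \sum_{j \neq i} 1/d_{ij}(\mathfrak{g})$, to tie the two formulations together: $\Phi_i$ is a strictly decreasing function of each $d_{ij}(\mathfrak{g})$, so minimizing the distances from $i$ to all others is the very same optimization as maximizing $\Phi_i(\mathfrak{g})$. The jointly optimal distance vector is the all-ones vector $d_{ij}(\mathfrak{g}) = 1$ (agent $i$ adjacent to everybody), which simultaneously maximizes $\Phi_i(\mathfrak{g})$ at $n-1$ and, by the composed monotonicity of Steps~1--2, maximizes $\gamma_i(\mathfrak{g})$; this establishes both halves of the ``equivalently'' in the statement.

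The main obstacle is that the factorization $\alpha_{ij} = p(1-q)/(d_{ij}\Phi_j)$ couples $i$'s distances to the providers' closeness values $\Phi_j$: rewiring the network to shrink $d_{ij}(\mathfrak{g})$ can also alter $\Phi_j(\mathfrak{g})$ for a provider $j$ and $\Phi_k(\mathfrak{g})$ for third parties, so the monotonicity in Step~1 is cleanest read as a ceteris-paribus statement, holding the providers' closeness fixed, exactly as in Lemma \ref{starter-lemma-local-resource-availability}. I would therefore phrase the conclusion in that same ceteris-paribus spirit, observing that the extreme distance vector of all ones removes the ambiguity, since it is simultaneously optimal for $\Phi_i(\mathfrak{g})$ and, factor by factor, for $\gamma_i(\mathfrak{g})$.
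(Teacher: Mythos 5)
Your first two steps are exactly the paper's own argument: the paper disposes of this lemma with a one-line proof (``in lines similar to that of Lemma \ref{starter-lemma-max-local-resource-availability}''), which amounts to the same ceteris-paribus monotonicity reading of Eq.~(\ref{eq:agent-probability}) and Eq.~(\ref{eq:gra}) that you give --- $\gamma_i$ is increasing in each $\alpha_{ij}$, $\alpha_{ij}$ is increasing as $d_{ij}$ decreases with the provider-side quantities held fixed, and minimizing all the $d_{ij}$ is by definition the same thing as maximizing $\Phi_i$. Up to that point your proposal is correct and matches the paper, and you deserve credit for explicitly naming the coupling between $i$'s distances and the providers' closenesses, which the paper glosses over.

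The genuine problem is your final step, where you claim the all-ones distance vector (the star with $i$ adjacent to everybody) ``removes the ambiguity'' because it is simultaneously optimal for $\Phi_i$ and, ``factor by factor,'' for $\gamma_i$. This is precisely where the coupling you identified bites, and the claim is false as a statement about optimization over networks. Once $d_{ij}=1$ for all $j$, every pair of agents is within distance $2$, so every provider's closeness is forced up to $\Phi_j \geq 1+\frac{n-2}{2}$, capping every factor at $\alpha_{ij}\leq \frac{2p(1-q)}{n}$; other configurations evade this cap. Concretely, write $c=p(1-q)$ and take $n=2m+1$ agents with $i$ adjacent to $m$ hubs, each hub carrying one pendant leaf. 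Then each hub $j$ has $\Phi_j=\frac{5m+7}{6}$ and each leaf $l$ has $\Phi_l=\frac{7m+11}{12}$, so $\sum_{j\neq i}\alpha_{ij}=\frac{6mc}{5m+7}+\frac{6mc}{7m+11}\rightarrow \frac{72c}{35}\approx 2.06\,c$, whereas the star gives $\frac{4mc}{2m+1}\rightarrow 2c$. For large $m$ and small $c$ this makes $\gamma_i$ strictly larger in the hub-and-pendant network, even though $\Phi_i$ there is only about $\frac{3}{4}(n-1)$. So the network maximizing $\Phi_i$ need not maximize $\gamma_i$; only the ceteris-paribus monotonicity statement survives, which is all the paper actually proves, and your write-up should stop there rather than asserting joint optimality of the star.
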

\begin{proof}
The proof is in lines similar to that of Lemma \ref{starter-lemma-max-local-resource-availability}. 
\end{proof}

We now discuss results that show which agent to add a link to, so as to maximize the local resource availability. Such results are difficult to establish for global resource availability because, as seen in Example \ref{exmp:gra-resource-availability}, if agents $i$ and $j$ add or delete the link between them, the global resource availability of one of them may increase while that of the other may decrease. 

\begin{proposition}\label{choice-based-on-local}
Suppose $i$ is an agent in $\mathfrak{g}$. Across all agents $j$ in $\mathfrak{g}$ such that $\langle ij \rangle \notin \mathfrak{g}$, suppose $i$ chooses $j=j_0$ to which to add a link, then $j_0$ maximizes the local resource availability of $i$ from $j$ in $\mathfrak{g}+\langle ij \rangle$ if and only if $j_0$ is the agent (or one of the agents) whose {closeness is the least}. 
\end{proposition}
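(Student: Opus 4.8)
The plan is to reduce the statement to a comparison of harmonic centralities and then to track how a single link addition reshapes them. Because adding $\langle ij\rangle$ makes $i$ and $j$ neighbours, $d_{ij}(\mathfrak{g}+\langle ij\rangle)=1$, so Eq.~(\ref{eq:agent-probability}) collapses to $\alpha_{ij}(\mathfrak{g}+\langle ij\rangle)=\frac{p(1-q)}{\Phi_j(\mathfrak{g}+\langle ij\rangle)}$. Hence, across all admissible $j$, the agent $j_0$ that maximises $i$'s local resource availability is exactly the one that \emph{minimises} $\Phi_j(\mathfrak{g}+\langle ij\rangle)$. The whole proposition therefore becomes the claim that the minimiser of $j\mapsto\Phi_j(\mathfrak{g}+\langle ij\rangle)$ coincides with the minimiser of $j\mapsto\Phi_j(\mathfrak{g})$; this is the object I would prove, splitting the biconditional into its two implications afterwards.

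First I would make the post-addition closeness explicit in terms of quantities living in $\mathfrak{g}$. Using the standard shortest-path identity $d_{jl}(\mathfrak{g}+\langle ij\rangle)=\min\{d_{jl}(\mathfrak{g}),\,1+d_{il}(\mathfrak{g})\}$ (any new shortest $j$--$l$ path either avoids the new edge or crosses it once) and peeling off the $l=i$ term, I would write
$$\Phi_j(\mathfrak{g}+\langle ij\rangle)=\Phi_j(\mathfrak{g})+\left(1-\frac{1}{d_{ij}(\mathfrak{g})}\right)+\sum_{l\in\mathfrak{g}\setminus\{i,j\}}\left[\frac{1}{1+d_{il}(\mathfrak{g})}-\frac{1}{d_{jl}(\mathfrak{g})}\right]^{+},$$
where $[x]^{+}=\max\{x,0\}$ and the last (``indirect'') sum is nonnegative, in agreement with Lemma~\ref{lemma:g=g+kl} and Remark~\ref{rmk:conditions-remark}. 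With this identity, the ``if'' direction would proceed by assuming $\Phi_{j_0}(\mathfrak{g})$ is least and showing $\Phi_{j_0}(\mathfrak{g}+\langle ij_0\rangle)\le\Phi_{j}(\mathfrak{g}+\langle ij\rangle)$ for every competitor $j$; the ``only if'' direction is the contrapositive. In the clean regime this is immediate: when the indirect sum vanishes for all candidates (e.g.\ two-diameter networks, where by Remark~\ref{rmk:conditions-remark} no third party's distances change and $d_{ij}(\mathfrak{g})=2$ throughout), the correction term $1-\frac{1}{d_{ij}(\mathfrak{g})}$ is constant, so the ordering of $\Phi_j(\mathfrak{g}+\langle ij\rangle)$ is identical to that of $\Phi_j(\mathfrak{g})$ and the equivalence follows, echoing the equal-distance argument of Lemma~\ref{starter-lemma-max-local-resource-availability}.

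The main obstacle is precisely the indirect sum. Its value is governed by how many of $j$'s far-away contacts get shortcut through $i$, and this is \emph{not} a function of $\Phi_j(\mathfrak{g})$ alone: an agent $j$ that is globally peripheral (small $\Phi_j(\mathfrak{g})$, hence a desirable target) may simultaneously sit far from a highly central $i$, so that the new edge dramatically shrinks its distances and makes the indirect term large, potentially overturning the $\mathfrak{g}$-ordering once we pass to $\mathfrak{g}+\langle ij\rangle$. I would therefore spend the bulk of the effort bounding this term---comparing $\sum_l[\tfrac{1}{1+d_{il}(\mathfrak{g})}-\tfrac{1}{d_{jl}(\mathfrak{g})}]^{+}$ for two candidates in terms of their distances to $i$ and their closeness gap---and I expect that either a monotonicity hypothesis (e.g.\ equal distance of the candidates from $i$, as in Lemma~\ref{starter-lemma-max-local-resource-availability}) or the two-diameter restriction is what actually makes the order-preservation go through. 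Establishing that the indirect correction cannot reverse the closeness ordering is the crux on which the biconditional stands or falls.
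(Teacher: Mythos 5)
Your first paragraph is, by itself, the paper's entire proof. The paper argues exactly your reduction: for any candidate, $d_{ij}(\mathfrak{g}+\langle ij\rangle)=1$, so $\alpha_{ij}(\mathfrak{g}+\langle ij\rangle)=\frac{p(1-q)}{\Phi_j(\mathfrak{g}+\langle ij\rangle)}$, hence $i$ prefers $l$ over $k$ if and only if $\Phi_l(\mathfrak{g}+\langle il\rangle)<\Phi_k(\mathfrak{g}+\langle ik\rangle)$ --- and it stops there. In other words, the result the paper actually establishes reads ``the agent whose closeness is the least'' as closeness measured in the \emph{post-addition} network (one network per candidate); this reading is confirmed by the theorem that follows the proposition, whose hypotheses all compare $\Phi_j(\mathfrak{g}+\langle ij \rangle)$ against $\Phi_k(\mathfrak{g}+\langle ik \rangle)$. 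Under that reading your proposal is complete after the first paragraph, and splitting the biconditional is immediate.

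Everything after that chases a strictly stronger claim --- that the minimizer of $\Phi_j(\mathfrak{g}+\langle ij\rangle)$ coincides with the minimizer of $\Phi_j(\mathfrak{g})$ --- which the paper neither asserts (on its own reading) nor proves, and which is in fact false in general. Your diagnosis of the obstruction is exactly right: the indirect term $\sum_l \bigl[\tfrac{1}{1+d_{il}(\mathfrak{g})}-\tfrac{1}{d_{jl}(\mathfrak{g})}\bigr]^{+}$ is anti-correlated with $\Phi_j(\mathfrak{g})$ and can overturn the ordering. A concrete counterexample: let $i$ be adjacent to every node of a large clique $P$ (say $|P|=12$); let candidate $A$ sit inside a small clique $Q$ (say $|Q|=3$) at the end of a length-$10$ path hanging off $i$; let candidate $B$ be a pendant at distance $2$ from $i$ (and $3$ from $P$). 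Then $A$ has the least closeness in $\mathfrak{g}$ among admissible candidates, yet adding $\langle iA\rangle$ teleports $A$ next to the mass $P$, so $\Phi_A(\mathfrak{g}+\langle iA\rangle)$ ends up far above $\Phi_B(\mathfrak{g}+\langle iB\rangle)$, and $i$ strictly prefers $B$; the pre-addition minimizer does not maximize $\alpha$. So the bulk of your effort should not be viewed as a missing step in the proof but as evidence that the proposition is only tenable under the post-addition reading, or else under extra hypotheses (your equal-distance and two-diameter regimes, cf.\ Lemma \ref{starter-lemma-max-local-resource-availability}) under which the two orderings provably agree. Truncated to its first paragraph, your proposal matches the paper; the remainder is best recast as a remark that the statement's phrase ``whose closeness is the least'' is ambiguous and must refer to the networks after the respective additions.
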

\begin{proof}
Suppose agents $k$ and $l$ $\in \mathfrak{g}$ are such that $\langle ik \rangle \notin \mathfrak{g}$ and $\langle il \rangle \notin \mathfrak{g}$. 
Agent $i$ prefers $l$ over $k$ to add a link, if and only if \\

$\alpha_{il}(\mathfrak{g}+\langle il\rangle)>\alpha_{ik}(\mathfrak{g}+\langle ik\rangle)$, if and only if\\

$\frac{1}{d_{il}(\mathfrak{g}+\langle il\rangle)\Phi_l(\mathfrak{g}+\langle il\rangle)}>\frac{1}{d_{ik}(\mathfrak{g}+\langle ik\rangle)\Phi_k(\mathfrak{g}+\langle ik\rangle)}$. \\

We have $d_{il}(\mathfrak{g}+\langle il\rangle)=d_{ik}(\mathfrak{g}+\langle ik\rangle)=1$. Hence,\\

$\alpha_{il}(\mathfrak{g}+\langle il\rangle)>\alpha_{ik}(\mathfrak{g}+\langle ik\rangle)$, if and only if  \\

$\frac{1}{\Phi_l(\mathfrak{g}+\langle il\rangle)}>\frac{1}{\Phi_k(\mathfrak{g}+\langle ik\rangle)}$, if and only if\\

$\Phi_l(\mathfrak{g}+\langle il\rangle) < \Phi_k(\mathfrak{g}+\langle ik\rangle)$.

\end{proof}

\begin{lemma}\label{lemma-choice-max-diff-local}
Suppose $i, j, k$ are distinct agents in $\mathfrak{g}$. With respect to the local resource availability, $i$ prefers $j$ over $k$ to which to add a link, if and only if\\

$ \frac{\Phi_k(\mathfrak{g}+\langle ik\rangle)-\Phi_j(\mathfrak{g}+\langle ij\rangle)}{\Phi_k(\mathfrak{g}+\langle ik\rangle)\Phi_j(\mathfrak{g}+\langle ij\rangle)}>\frac{d_{ik}(\mathfrak{g})\Phi_k(\mathfrak{g})-d_{ij}(\mathfrak{g})\Phi_j(\mathfrak{g})}{[d_{ik}(\mathfrak{g})\Phi_k(\mathfrak{g})][d_{ij}(\mathfrak{g})\Phi_j(\mathfrak{g})]}$.
\end{lemma}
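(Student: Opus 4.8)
The plan is to first pin down what ``prefers $j$ over $k$'' means here, since it differs from the level-based comparison of Proposition~\ref{choice-based-on-local}. In that earlier result the preference compares the post-addition \emph{levels} $\alpha_{ij}(\mathfrak{g}+\langle ij\rangle)$ and $\alpha_{ik}(\mathfrak{g}+\langle ik\rangle)$; here the preference is by \emph{maximal difference}, that is, by the marginal gain in local resource availability. Concretely, $i$ prefers $j$ over $k$ precisely when
\[
\alpha_{ij}(\mathfrak{g}+\langle ij\rangle)-\alpha_{ij}(\mathfrak{g}) \;>\; \alpha_{ik}(\mathfrak{g}+\langle ik\rangle)-\alpha_{ik}(\mathfrak{g}).
\]
So the whole task reduces to showing that this marginal-gain inequality is equivalent to the displayed inequality in the lemma, and since every manipulation will be reversible, both directions of the ``if and only if'' are handled at once.

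The computational ingredients are the closed forms from Eq.~(\ref{eq:agent-probability}). For a non-neighbour pair I would write $\alpha_{ij}(\mathfrak{g})=\frac{p(1-q)}{d_{ij}(\mathfrak{g})\Phi_j(\mathfrak{g})}$, and, using the fact that $d_{ij}(\mathfrak{g}+\langle ij\rangle)=1$, the post-addition value $\alpha_{ij}(\mathfrak{g}+\langle ij\rangle)=\frac{p(1-q)}{\Phi_j(\mathfrak{g}+\langle ij\rangle)}$; the same two formulas hold with $k$ in place of $j$. Substituting these into the marginal-gain inequality and cancelling the common positive constant $p(1-q)$ (positive under the model's implicit assumption that $p>0$ and $q<1$) reduces the preference condition to a purely arithmetic statement about four reciprocals, namely
\[
\frac{1}{\Phi_j(\mathfrak{g}+\langle ij\rangle)}-\frac{1}{d_{ij}(\mathfrak{g})\Phi_j(\mathfrak{g})} \;>\; \frac{1}{\Phi_k(\mathfrak{g}+\langle ik\rangle)}-\frac{1}{d_{ik}(\mathfrak{g})\Phi_k(\mathfrak{g})}.
\]

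Next I would move the two post-addition terms to the left and the two pre-addition terms to the right, giving
\[
\frac{1}{\Phi_j(\mathfrak{g}+\langle ij\rangle)}-\frac{1}{\Phi_k(\mathfrak{g}+\langle ik\rangle)} \;>\; \frac{1}{d_{ij}(\mathfrak{g})\Phi_j(\mathfrak{g})}-\frac{1}{d_{ik}(\mathfrak{g})\Phi_k(\mathfrak{g})},
\]
and then combine each side over a common denominator. The left side becomes $\frac{\Phi_k(\mathfrak{g}+\langle ik\rangle)-\Phi_j(\mathfrak{g}+\langle ij\rangle)}{\Phi_k(\mathfrak{g}+\langle ik\rangle)\Phi_j(\mathfrak{g}+\langle ij\rangle)}$ and the right side becomes $\frac{d_{ik}(\mathfrak{g})\Phi_k(\mathfrak{g})-d_{ij}(\mathfrak{g})\Phi_j(\mathfrak{g})}{[d_{ik}(\mathfrak{g})\Phi_k(\mathfrak{g})][d_{ij}(\mathfrak{g})\Phi_j(\mathfrak{g})]}$, which is exactly the inequality asserted in the lemma.

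The one genuine subtlety, and the step I would flag as the main obstacle, is getting the \emph{orientation of the difference} right. Because the preference is defined by the \emph{larger} marginal gain, the numerator on the left side of the lemma appears as $\Phi_k(\mathfrak{g}+\langle ik\rangle)-\Phi_j(\mathfrak{g}+\langle ij\rangle)$, the $k$-term minus the $j$-term, which is the reverse of the naive ordering and is precisely what the sign bookkeeping forces. Beyond this, no case analysis on distances or connectivity is required: unlike the earlier propositions, distances and closeness enter only through their closed-form values, so once the marginal-gain interpretation is fixed the result is a short chain of reversible equivalences.
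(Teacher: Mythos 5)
Your proposal is correct and follows essentially the same route as the paper's own proof: interpret the preference as the marginal-gain inequality $\alpha_{ij}(\mathfrak{g}+\langle ij\rangle)-\alpha_{ij}(\mathfrak{g})>\alpha_{ik}(\mathfrak{g}+\langle ik\rangle)-\alpha_{ik}(\mathfrak{g})$, regroup post-addition terms against pre-addition terms, substitute the closed forms from Eq.~(\ref{eq:agent-probability}) with $d_{ij}(\mathfrak{g}+\langle ij\rangle)=d_{ik}(\mathfrak{g}+\langle ik\rangle)=1$, and combine each side over a common denominator through a chain of reversible equivalences. Your explicit remarks on cancelling $p(1-q)$ and on the sign orientation of the numerator $\Phi_k(\mathfrak{g}+\langle ik\rangle)-\Phi_j(\mathfrak{g}+\langle ij\rangle)$ are details the paper leaves implicit, but they do not change the argument.
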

\begin{proof}
$\alpha_{ij}(\mathfrak{g}+\langle ij \rangle)-\alpha_{ij}(\mathfrak{g}) > \alpha_{ik}(\mathfrak{g}+\langle ik \rangle)-\alpha_{ik}(\mathfrak{g})$, \\

\noindent if and only if
$\alpha_{ij}(\mathfrak{g}+\langle ij \rangle)-\alpha_{ik}(\mathfrak{g}+\langle ik \rangle) > \alpha_{ij}(\mathfrak{g})-\alpha_{ik}(\mathfrak{g})$, \\

\noindent if and only if
$\frac{1}{\Phi_j(\mathfrak{g}+\langle ij\rangle)}-\frac{1}{\Phi_k(\mathfrak{g}+\langle ik\rangle)} > \frac{1}{d_{ij}(\mathfrak{g})\Phi_j(\mathfrak{g})}-\frac{1}{d_{ik}(\mathfrak{g})\Phi_k(\mathfrak{g})}$, \\

\noindent if and only if
$\frac{\Phi_k(\mathfrak{g}+\langle ik\rangle)-\Phi_j(\mathfrak{g}+\langle ij\rangle)}{\Phi_k(\mathfrak{g}+\langle ik\rangle)\Phi_j(\mathfrak{g}+\langle ij\rangle)}>\frac{d_{ik}(\mathfrak{g})\Phi_k(\mathfrak{g})-d_{ij}(\mathfrak{g})\Phi_j(\mathfrak{g})}{[d_{ik}(\mathfrak{g})\Phi_k(\mathfrak{g})][d_{ij}(\mathfrak{g})\Phi_j(\mathfrak{g})]}$.
\end{proof}
\begin{theorem}
Suppose $i, j, k$ are distinct agents in $\mathfrak{g}$. With respect to the local resource availability,  
\begin{enumerate}
    \item $i$ prefers to add a link with $j$ over $k$ if $j$'s closeness in $\mathfrak{g}+\langle ij \rangle$ is less than that of $k$ in $\mathfrak{g}+\langle ik \rangle$, given that, in $\mathfrak{g}$, both $j$ and $k$ have the same closeness and are at the same distance from $i$. 
    
    \item $i$ prefers to add a link with $j$ over $k$ if, in $\mathfrak{g}$,  $j$'s distance from $i$ is more than that of $k$ from $i$, given that, $j$ and $k$ have the same closeness in $\mathfrak{g}+\langle ij \rangle$ and $\mathfrak{g}+\langle ik \rangle$, respectively, 
as well as in $\mathfrak{g}$.
    
    \item $i$ prefers to add a link with $j$ over $k$ if $j$'s closeness is less than that of $k$ in $\mathfrak{g}$, given that, $j$ and $k$ have the same closeness in $\mathfrak{g}+\langle ij \rangle$ and $\mathfrak{g}+\langle ik \rangle$, respectively, and they are at the same distance from $i$ in $\mathfrak{g}$.

\end{enumerate}
\end{theorem}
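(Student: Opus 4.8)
The plan is to derive all three assertions from the single ``if and only if'' criterion of Lemma~\ref{lemma-choice-max-diff-local}. Recall that, with respect to local resource availability, agent $i$ prefers to add $\langle ij\rangle$ over $\langle ik\rangle$ exactly when the increase $\alpha_{ij}(\mathfrak{g}+\langle ij\rangle)-\alpha_{ij}(\mathfrak{g})$ exceeds $\alpha_{ik}(\mathfrak{g}+\langle ik\rangle)-\alpha_{ik}(\mathfrak{g})$, and that Lemma~\ref{lemma-choice-max-diff-local} rewrites this preference as
\[
\frac{\Phi_k(\mathfrak{g}+\langle ik\rangle)-\Phi_j(\mathfrak{g}+\langle ij\rangle)}{\Phi_k(\mathfrak{g}+\langle ik\rangle)\,\Phi_j(\mathfrak{g}+\langle ij\rangle)}>\frac{d_{ik}(\mathfrak{g})\Phi_k(\mathfrak{g})-d_{ij}(\mathfrak{g})\Phi_j(\mathfrak{g})}{[d_{ik}(\mathfrak{g})\Phi_k(\mathfrak{g})][d_{ij}(\mathfrak{g})\Phi_j(\mathfrak{g})]}.
\]
The key structural observation I would exploit is that each of the three sets of ``given'' equalities forces exactly one side of this inequality to have a vanishing numerator, after which the single remaining stated hypothesis fixes the sign of the other side. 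Since both denominators are strictly positive (using Eq.~(\ref{eq:agent-probability}) and that $d_{i\cdot}(\mathfrak{g}+\langle i\cdot\rangle)=1$), only the two numerators need to be tracked.

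First I would dispose of part~1. The two ``given'' equalities $d_{ij}(\mathfrak{g})=d_{ik}(\mathfrak{g})$ and $\Phi_j(\mathfrak{g})=\Phi_k(\mathfrak{g})$ make the right-hand numerator $d_{ik}(\mathfrak{g})\Phi_k(\mathfrak{g})-d_{ij}(\mathfrak{g})\Phi_j(\mathfrak{g})$ zero, so the right-hand side is $0$; the stated hypothesis $\Phi_j(\mathfrak{g}+\langle ij\rangle)<\Phi_k(\mathfrak{g}+\langle ik\rangle)$ makes the left-hand numerator strictly positive, so the criterion holds and $i$ prefers $j$. Here $\alpha_{ij}(\mathfrak{g})=\alpha_{ik}(\mathfrak{g})$, so this is really just Proposition~\ref{choice-based-on-local} applied to the post-addition closenesses.

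Parts~2 and~3 are the mirror situation: in both, the ``given'' equality $\Phi_j(\mathfrak{g}+\langle ij\rangle)=\Phi_k(\mathfrak{g}+\langle ik\rangle)$ kills the \emph{left}-hand numerator, so the left-hand side is $0$ and the preference is decided entirely by the sign of the right-hand numerator $d_{ik}(\mathfrak{g})\Phi_k(\mathfrak{g})-d_{ij}(\mathfrak{g})\Phi_j(\mathfrak{g})$ (equivalently, by whether $\alpha_{ij}(\mathfrak{g})\lessgtr\alpha_{ik}(\mathfrak{g})$ through Eq.~(\ref{eq:agent-probability})). For part~2 the further ``given'' $\Phi_j(\mathfrak{g})=\Phi_k(\mathfrak{g})$ lets the common closeness factor out, so the sign is that of $d_{ik}(\mathfrak{g})-d_{ij}(\mathfrak{g})$, which is negative under the hypothesis $d_{ij}(\mathfrak{g})>d_{ik}(\mathfrak{g})$; hence the right-hand side is negative, $0$ beats it, and $i$ prefers $j$.

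The step I expect to be the main obstacle is part~3. There the ``given'' $d_{ij}(\mathfrak{g})=d_{ik}(\mathfrak{g})$ lets the common distance factor out, so the right-hand numerator reduces to the sign of $\Phi_k(\mathfrak{g})-\Phi_j(\mathfrak{g})$, i.e.\ the whole criterion reduces to comparing $\alpha_{ij}(\mathfrak{g})$ with $\alpha_{ik}(\mathfrak{g})$. This is exactly where careful sign bookkeeping is needed: to make the right-hand side negative (so that $0>\text{RHS}$ and the strict preference for $j$ follows) one needs $\Phi_k(\mathfrak{g})<\Phi_j(\mathfrak{g})$, that is, $j$'s in-$\mathfrak{g}$ closeness to be the \emph{larger} of the two. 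I would therefore verify the direction of the closeness hypothesis in part~3 with particular attention before concluding, since this reduction to the sign of $\Phi_k(\mathfrak{g})-\Phi_j(\mathfrak{g})$ is the crux of the argument.
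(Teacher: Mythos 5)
Your proposal follows exactly the paper's route: the paper's own proof of this theorem is a one-line appeal to Lemma~\ref{lemma-choice-max-diff-local}, and your sign analysis of the two sides of that lemma's criterion is precisely the intended argument. Your treatment of parts 1 and 2 is correct: in part 1 the two ``given'' equalities annihilate the right-hand numerator while the hypothesis makes the left-hand numerator strictly positive; in part 2 the left-hand side vanishes and the common factor $\Phi_j(\mathfrak{g})=\Phi_k(\mathfrak{g})$ reduces the right-hand numerator to the sign of $d_{ik}(\mathfrak{g})-d_{ij}(\mathfrak{g})<0$, so $0>\mathrm{RHS}$ and $i$ prefers $j$.

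Your caution about part 3 is warranted, and in fact you have uncovered an error in the printed statement rather than a gap in your own argument. With $d_{ij}(\mathfrak{g})=d_{ik}(\mathfrak{g})=d$ and $\Phi_j(\mathfrak{g}+\langle ij\rangle)=\Phi_k(\mathfrak{g}+\langle ik\rangle)$, the left-hand side of the lemma's criterion is $0$ and the right-hand numerator is $d\,[\Phi_k(\mathfrak{g})-\Phi_j(\mathfrak{g})]$, which under the stated hypothesis $\Phi_j(\mathfrak{g})<\Phi_k(\mathfrak{g})$ is strictly \emph{positive}; hence $0>\mathrm{RHS}$ fails, and the lemma yields that $i$ prefers $k$, not $j$. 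Concretely (taking $p(1-q)=1$): $d=2$, $\Phi_j(\mathfrak{g})=5$, $\Phi_k(\mathfrak{g})=6$, common post-addition closeness $8$ gives marginal gains $\tfrac{1}{8}-\tfrac{1}{10}=0.025$ from $j$ versus $\tfrac{1}{8}-\tfrac{1}{12}\approx 0.042$ from $k$. The statement becomes true only with the reversed hypothesis $\Phi_j(\mathfrak{g})>\Phi_k(\mathfrak{g})$: a larger pre-addition closeness of $j$ means a smaller baseline $\alpha_{ij}(\mathfrak{g})$, hence a larger marginal gain, because the pre-addition closeness enters the gain with a negative sign. Note also that the alternative reading of ``prefers'' (comparing post-addition levels $\alpha_{ij}(\mathfrak{g}+\langle ij\rangle)$ versus $\alpha_{ik}(\mathfrak{g}+\langle ik\rangle)$, as in Proposition~\ref{choice-based-on-local}) gives exact indifference under part 3's hypotheses, so no reading rescues the statement as printed. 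Your instinct to ``verify the direction'' should therefore be upgraded to a definite conclusion: part 3 does not follow from Lemma~\ref{lemma-choice-max-diff-local} and is false under the lemma's marginal-gain notion of preference.
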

\begin{proof}
Follows from Lemma \ref{lemma-choice-max-diff-local}. \\
\end{proof}

\section{Conclusion}
The focus of this study is on endogenous social cloud formation. In this social cloud setting, agents build their resource sharing network to maximize their utility and perform closeness-based conditional resource sharing. The aim of this study is to fill the research gap in the social cloud literature by analyzing the impact of link addition between a pair of agents on their resource availability and that of others. This study provides a theoretical investigation of the impact of link addition on agents' resource availability. It further studies choice modelling, which captures the preferences over agents in link addition. In other words, this study focuses on the agent(s) in the network that is (are) preferred by other agents for link addition, so as to maximize their utility (in terms of their resource availability). 

Our approach of understanding spillover effect (that is, the impact of link addition between a pair of agents on others' resource availability) is different from the existing approaches discussed in the social cloud literature. 
We analyze the relation between closeness and spillover. We show that, for an agent to observe positive spillover effect, it is necessary, but not sufficient, to increase its closeness. Our study does not find a reason for this. This is one of the limitations of the study. Despite this limitation, our study has several implications and applications. 

Our study provides considerable insight into resource availability in endogenous social cloud setting. This analysis will help in defining resource sharing and allocation policy framework in endogenous social cloud systems like BuddyBackup. For example, in these settings, the policy making may include the recommendation of friends (as backup partners) in link addition to avoid negative spillover effect. If an agent wants to select another agent for data backup, then the policy makers can recommend a backup partner who is far from the agent so that negative spillover can be reduced. \\


\section*{\small Declaration of Competing Interest}
The authors declare that they have no known competing financial interests or personal relationships that could have appeared to influence the work reported in this paper.
\section*{\small Conflicts of Interest}
The authors declare that they have no conflict of interest.
%
%


\end{document}